\newtheorem{theorem}{Theorem}[section]
\newtheorem{assumption}[theorem]{Assumption}
\newtheorem{definition}[theorem]{Definition}
\newtheorem{example}[theorem]{Example}
\newtheorem{lemma}[theorem]{Lemma}
\newtheorem{proposition}[theorem]{Proposition}
\newtheorem{remark}[theorem]{Remark}
\numberwithin{equation}{section}
\def\proof{\noindent\textbf{Proof. }}
\def\Pbb{\mathbb{P}}
\def\Rbb{\mathbb{R}}
\def\Bc{\mathcal{B}}
\def\Fc{\mathcal{F}}
\def\Jc{\mathcal{J}}
\def\Rc{\mathcal{R}}
\def\Xc{\mathcal{X}}
\def\Msc{\mathscr{M}}
\def\Qsc{\mathscr{Q}}
\def\Xsc{\mathscr{X}}
\def\var{\mathrm{Var}}
\def\ep{\mathrm{E}}
\DeclareMathOperator*{\essinf}{ess\,inf}
\DeclareMathOperator*{\maxi}{maximize\,}
\DeclareMathOperator*{\mini}{minimize\,}
\def\trieq{\triangleq}
\def\id{\mathbf{1}}
\def\qed{\hbox{ }\hfill {$\Box$}}
\newcommand{\icx}{\succeq_{\mathrm{icx}}}
\newcommand{\icv}{\succeq_{\mathrm{icv}}}
\newcommand{\Qicx}{\mathscr{Q}_\mathrm{icx}}
\newcommand{\Wicx}{\mathscr{W}^\mathrm{icx}}
\newcommand{\Xicx}{\mathscr{X}_\mathrm{icx}}
\DeclareSymbolFont{tipa}{T3}{cmr}{m}{n}
\DeclareMathAccent{\invbreve}{\mathalpha}{tipa}{16} 
\begin{document}

\title{Benchmark Beating with the Increasing Convex Order\thanks{Supported by the National Key R\&D Program of China (2020YFA0712700) and NSFC (12071146). 
}}

\author{Jianming Xia\thanks{RCSDS, NCMIS, Academy of Mathematics and Systems Science,
Chinese Academy of Sciences, Beijing 100190, China; Email:
xia@amss.ac.cn.}
}

\maketitle

\begin{abstract}
In this paper we model benchmark beating with the increasing convex order (ICX order).  The mean constraint in the mean-variance theory of portfolio selection can be regarded as beating a constant. We then investigate the problem of minimizing the variance of a portfolio with ICX order constraints, based on which we also study the problem of beating-performance-variance efficient portfolios. The optimal and efficient portfolios are all worked out in closed form for complete markets. 

\noindent\textbf{Keywords:} 
portfolio selection, increasing convex order, benchmark beating, beating-performance-variance efficiency,
mean-variance efficiency
\end{abstract}

\section{Introduction}

Portfolio selection involves a trade-off between return and risk. In contrast to the expected utility theory which models return seeking and risk aversion of an investor with the monotonicity and the concavity of the utility function, the mean-variance theory of \citet{m52} models the return and the risk of a portfolio with its mean and variance. Thereby the trade-off between the return and the risk is explicitly formulated in Markowitz's mean-variance theory. 

In spite of the popularity of the mean-variance theory in both academia and industry, variance has been frequently criticized as a risk measure for its main shortcoming: it treats the volatile positive returns as a part of risk. While the return is measured by mean, some alternative risk measures have been proposed to replace variance in portfolio selection, e.g., 
value-at-risk (VaR) (\citet{Campbell2001}, \citet{Alex2002}), conditional VaR (\citet{RU2000}), weighted VaR (\citet{HeJinZhou2015}), entropic VaR (\citet{Ahmadi2019}), and expectile (\citet{WagUry2019}, \citet{LSW2021}); see \citet{HeJinZhou2015} and \citet{LSW2021} for reviews on the development along this line. 

Instead of the risk aversion, we focus on the other side---the return seeking---of portfolio selection in this paper. In the mean-variance theory, the return of a portfolio is measured by its mean,  which, unlike variance as a risk measure,  has been seldom criticized. Moreover, a mean-variance efficient portfolio can be found out by minimizing the variance under the constraint that the return is no less than a benchmark level, which is modeled by a constant. 
We consider an extension of the mean constraint such that the benchmark is modeled by a random variable, which is much more flexible in modeling return seeking than a constant. To this end, we use a stochastic order.  

Given a random variable $X_0$, which is called a benchmark, we say a random variable $X$ beats the benchmark $X_0$ and write $X\icx X_0$ if $X$ dominates $X_0$ in the increasing convex order (ICX order), i.e.,
\begin{equation}\label{eq:efx}
\ep[f(X)]\ge\ep[f(X_0)]
\end{equation}
for all increasing convex functions $f$ provided that the expectations are well defined. It is well known that $X\icx X_0$ if and only if
\begin{equation}\label{eq:icx:qxq0}
\int_t^1 Q_X(s)ds\ge\int_t^1 Q_{X_0}(s)ds,\quad t\in(0,1),
\end{equation}
where $Q_X$ and $Q_{X_0}$ are quantile functions of $X$ and $X_0$. 

The financial meaning of \eqref{eq:icx:qxq0} is clearer when $X_0$ is discretely distributed:
\begin{equation}\label{eq:X0:ain}
\Pbb(X_0=a_i)=p_i,\quad 1\le i\le n,
\end{equation}
where $n\ge1$, $a_1<a_2<\dots<a_n$, $p_i\in(0,1]$ for all $i=1,\dots,n$, and $\sum_{i=1}^np_i=1$.
Let 
\[q_0=0,\quad q_i=p_1+\dots+p_i, \quad i=1,\dots,n-1.\] 
It is easy to see that both sizes of \eqref{eq:icx:qxq0} are concave w.r.t. $t$.  Moreover,  the right hand size of \eqref{eq:icx:qxq0} is piecewise  linear in the special case \eqref{eq:X0:ain}.  
Therefore,  \eqref{eq:icx:qxq0} is equivalent to
\begin{equation}\label{ineq:epeu}
\int_{q_i}^1 Q_X(s)ds\ge\int_{q_i}^1Q_{X_0}ds,\quad i=0,\dots,n-1.
\end{equation}
In \eqref{ineq:epeu}, the constraint with $i=0$ amounts to $\ep[X]\ge\ep[X_0]$.
For each $i=1,\dots,n-1$, $\int_{q_i}^1 Q_X(s)ds$ is the average gain of $X$ in the best $(100\times q_i)\%$ of cases and ignores the values of $X$ in the worst $(100\times (1-q_i))\%$ of cases. It is called \emph{expected upside,  expected gain or tail gain}; see \citet{Bacon2013} for example.\footnote{In contrast, the expected shortfall or conditional value at risk with confidence level $\alpha$ is the average loss of $X$ in the worst $100\times\alpha\%$ of cases. The ratio of expected upside and expected shortfall gives the Rachev ratio in finance, see 
\citet{Biglova2004}.} Therefore \eqref{ineq:epeu} consists of a constraint on the mean of $X$ and some other constraints on the expected upsides of $X$. The ICX order imposes not only the usual mean constraint as in the mean-variance theory but also constraints on the right tail. The constraints on the right tail 
makes the ICX order appropriate for modeling benchmark beating (right-tail reward control), since by beating a benchmark we expect that $X$ is more profitable than the benchmark $X_0$. 

For a constant $z$, $X\icx z$ if and only if $\ep[X]\ge z$.  Therefore, the mean constraint $\ep[X]\ge z$ can be regarded as beating constant $z$.  
The aim of the present paper is to extend the mean constraint in the mean-variance theory to the ICX order constraint. More precisely, we consider the following problem: 
\begin{equation*}
\mini_{X\in\Xsc}\var[X]\quad\text{subject to }X\icx X_0,
\end{equation*}
where $\Xsc$ is a feasible set. 

As initiating work, we study the problem of portfolio selection with ICX order constraint within a complete
market. The original problem is generally non-convex since the set $\{X\mid X\icx X_0\}$ is generally non-convex. But the set $\{Q_X\mid X\icx X_0\}$ is always convex. Then by the recently developed quantile formulation (e.g., \citet{Schied2004}, \citet{CarlierDana2006}, \citet{JinZhou2008}, and \citet{HeZhou2011}), 
we transform the original non-convex problem into a convex one and finally provide the optimal solution in closed form.
Moreover, for a payoff $X$, let its performance of beating $X_0$ be defined as 
\[
\psi(X)\trieq\sup\{m\in\Rbb\mid X-m\icx X_0\}\quad\text{with }\sup\emptyset=-\infty.
\]
A payoff $X\in \Xsc$ is called beating-performance-variance (BPV) efficient in $\Xsc$ if there is no $Y\in\Xsc$
such that 
\[
\psi(Y)\geq\psi(X)\quad\text{and}\quad\var[Y]\leq\var[X]
\]
with at least one inequality holding strictly. 
We then also investigate the problem of BPV efficient portfolio and the BPV efficient portfolios are worked out in closed form.  
In the special case $X_0=0$, $\Psi(X)=\ep[X]$ and hence BPV efficiency reduces to mean-variance efficiency. 

The ICX order is closely related to another stochastic order, the increasing concave order (ICV order), which is also called the second-order stochastic dominance (SSD).  
A random random variable $X$ dominates $X_0$ in the ICV order and write $X\icv X_0$ if 
\eqref{eq:efx} holds for all increasing concave functions $f$ provided that the expectations are well defined. It is well known that $X\icv X_0$ if and only if
$$\int_0^t Q_X(s)ds\ge\int_0^t Q_{X_0}(s)ds,\quad t\in(0,1).$$
The financial meaning of ICX order is significantly different from that of ICV order. Actually, for a constant $z$, $X\icx z$ amounts to $\ep[X]\ge z$, whereas $X\icv z$ amounts to $X\ge z$ a.s. Moreover, by \citet[Proposition 3.2]{DenRus2003}, for a discretely distributed benchmark $X_0$ satisfying \eqref{eq:X0:ain},   $X\icv X_0$ if and only if 
\begin{equation}\label{ineq:ssd:m}
X\ge a_1 \text{ a.s. and } \ep[(a_i-X)^+]\le \ep[(a_i-X_0)^+],\ i=2,\dots,n.
\end{equation}
The constraints in \eqref{ineq:ssd:m} pay attention to the left-tail risk of $X$ measured by $\essinf X$ and $\ep[(a_i-X)^+$ ($i=2,\dots,n$). 
Therefore, the ICV order is appropriate for modeling risk control (left-tail risk control). In a series of papers, 
\citet{DenRus2003, DenRus2004a, DenRus2006a, DenRus2006b} introduced a stochastic optimization
model with ICV order constraints. As an application, they
also investigated the problem of portfolio selection with risk
control by ICV order instead of variance. For further developments along this line, see
\citet{RudRus2008}, \citet{Fabian2011} and \citet{DenMW2016}. In particular,  \citet{WangXia2021} investigate the problem of expected utility maximization with the ICV order constraint. 

The rest of this paper is organized as follows. Sections \ref{sec:probform} and \ref{sec:QF} introduce the basic model and the quantile formulation, respectively. Section \ref{sec:duality} establishes the duality between the primal and the dual problems. 
Section \ref{sec:dual:o} investigates the dual problem and gives the dual optimizer in closed form. 
Section \ref{sec:po} studies the primal optimizer. 
Section \ref{sec:special} presents the optimal solution for the special case of two-point distributed benchmark. 
Section \ref{sec:disc} discusses some problems including BPV efficient portfolios, multi-benchmark beating, and mean-variance portfolio selection with ICX order constraint,  
The Appendix collects some proofs.

\section{Problem Formulation}\label{sec:probform}

Consider a complete nonatomic probability space $\left(\Omega,\Fc,\Pbb\right)$.
Let 
$L^{2}$ (resp., $L^\infty$) denote all squarely integrable (resp., essentially bounded) $\mathcal{F}$-measurable
random variables. 
The (upper) quantile function $Q_X$ of a random variable $X$ is defined by 
\[Q_X(t)\trieq \inf\{x\in\Rbb\mid\Pbb(X\leq x)>t\},\quad t\in [0,1).\]
By convention, we extend $Q_X$ by letting $Q_X(1)\trieq Q_X(1-)=\lim_{t\uparrow 1}Q_X(t)$ and $Q_X(0-)\trieq0$ if needed. For more details about quantile functions, see, e.g.,  \citet[Appendix A.3]{FS2016}.

Let $X$ and $Y$ be two random variables. 
We say that $X$ dominates $Y$ in the \textit{increasing convex order (ICX order)} and write $X\icx Y$ if  
$\ep[f(X)]\ge\ep[f(Y)]$ for all increasing convex functions $f$ provided that the expectations exist. It is well known that\footnote{Throughout the paper, for $t_1<t_2$, we use $\int_{t_1}^{t_2}$ to denote the integration on the open interval $(t_1,t_2)$, that is,
$\int_{t_1}^{t_2}=\int_{(t_1,t_2)}$.}
\begin{align}\label{eq:icx:eqv}
X\icx Y 
\Longleftrightarrow&\int_t^1Q_X(s)ds\ge\int_t^1Q_Y(s)ds \text{ for all }t\in[0,1]\\
\Longleftrightarrow&\int_{[0,1]}Q_X(s)dw(s)\ge\int_{[0,1]}Q_Y(s)dw(s)\text{ for all }w\in\Wicx,\nonumber
\end{align}
where 
$$\Wicx\trieq\{w:[0,1]\to[0,\infty)\mid w \text{ is increasing and convex with } w(0)=0\}.$$
By convention, we set $w(0-)\trieq0$ for $w\in\Wicx$.
We can identify each $w\in\Wicx$ as a finite measure on the measurable space $\left([0,1], \Bc_{[0,1]}\right)$, where $\Bc_{[0,1]}$ denotes all Borel subsets of $[0,1]$. For each $w\in\Wicx$, the measure of $\{0\}$ is $0$ and the measure of $\{1\}$ is $w(1)-w(1-)$. Furthermore, for each $w\in\Wicx$ and $X\in L^2$, by the convexity of $w$ and $Q_X\in L^2([0,1))$, we know that $\int_{[0,1]}Q_X(s)dw(s)=\int_{(0,1]}Q_X(s)dw(s)$ is well defined and
$\int_{[0,1]}Q_X(s)dw(s)>-\infty$. 
For more details about the increasing convex order, see, e.g., \citet[Chapter 4]{SS2007}. 

Consider an arbitrage-free market. 
Assume that the market is complete and has a unique 
stochastic discount factor (SDF) $\rho\in L^2$ with $\Pbb(\rho>0)=1$ and variance $\var[\rho]>0$. 

Let us consider an investor with initial capital $\mathit{x}$, who
wants to minimize the risk measured by the payoff's variance $\var[X]$ while beating a benchmark payoff $X_0\in L^\infty$ in the sense of the ICX order.
The problem of the investor is thus  
\begin{equation}\label{opt:vm:x}
\begin{split}
&\mini_{X\in L^2} \var[X]\\
&\text{subject to } \ep[\rho X]\le x,\; X\icx X_0.
 \end{split}
\end{equation}

\begin{remark}\label{rmk:z} 
Obviously, for every $z\in\Rbb$,
$X\icx z$ if and only if $\ep[X]\ge z$. Then for $X_0=z$ a.s., problem \eqref{opt:vm:x} reduces to 
\begin{equation*}\begin{split}
&\mini_{X\in L^2} \var[X]\\
&\text{\rm subject to } \ep[\rho X]\le x,\; \ep[X]\ge z,
\end{split}
\end{equation*}
which arises from the classical problem of mean-variance portfolio selection. Therefore, the classical mean-variance portfolio selection is to minimize variance while beating a constant $z$. 
\end{remark}

Let 
\[
\Xicx(x,X_0)\triangleq\{X\in L^{2}\:\vert\:\ep[\rho X]\leq x\text{ and }X\icx X_{0}\}.
\]
A payoff $X$ is called \textit{variance-minimal} in $\Xicx(x,X_0)$ if it solves problem \eqref{opt:vm:x}.

From \eqref{eq:icx:eqv}, the advantage of ICX order is obvious: it pays more attention to the right tail (the gain part) than to the left tail (the loss part) of a random variable. This advantage makes the ICX order appropriate for modeling benchmark beating, since by beating a benchmark we expect that the payoff $X$ has more profit than the benchmark $X_0$. The mean, however, pays equal attention to the right and the left tails.

\section{Quantile Formulation}\label{sec:QF}

In general, the set $\Xicx(x,X_0)$ is not convex, which leads to a difficulty of the problem. The appropriate
technique for overcoming this difficulty is the well-developed ``quantile
formulation'':  changing the decision variable of the problem from the random
variable $X$ to its quantile function $Q_X$; see \citet{Schied2004}, \citet{CarlierDana2006}, \citet{JinZhou2008}, and \citet{HeZhou2011}. This formulation recovers the
implicit convexity (in terms of quantile functions) of $\Xicx(x,X_0)$.

Let
\[
\mathscr{Q}\triangleq\left\{ Q:[0,1)\rightarrow\mathbb{R}\:\bigg\vert\:Q\:\text{is increasing, right-continuous and}\:\int_{0}^{1}Q^{2}(s)ds<\infty\right\} .
\]
Obviously, $\mathscr{Q}$ is the set of quantile functions of random
variables $\mathit{X}\in L^{2}$, that is,
\[
\mathscr{Q}=\{Q_{X}\:\vert\:X\in L^{2}\}.
\]
For any $Q_1, Q_2\in\Qsc$, we write $Q_1\icx Q_2$ if 
$$\int_t^1Q_1(s)ds\ge\int_t^1Q_2(s)ds\quad \text{ for all }t\in[0,1].$$
Obviously, 
$$Q_1\icx Q_2
\Longleftrightarrow \int_{[0,1]}Q_1(s)dw(s)\ge\int_{[0,1]} Q_2(s)dw(s)\text{ for all }w\in\Wicx.
$$
For any $Q\in\Qsc$, we have
\begin{align}\label{eq:QQ0w}
\begin{aligned}
&Q\icx Q_0\Longleftrightarrow \inf_{w\in\Wicx}\left(\int_{[0,1]}(Q(s)-Q_0(s))dw(s)\right)=0,\\
&Q\not\icx Q_0\Longleftrightarrow \inf_{w\in\Wicx}\left(\int_{[0,1]}(Q(s)-Q_0(s))dw(s)\right)=-\infty.
\end{aligned}
\end{align}

For notational simplicity, we write $\mathit{Q_{\mathrm{0}}}$ instead of $\mathit{Q_{X_{\mathrm{0}}}}$. 
Let
$$
\Qicx(x,Q_0)  \triangleq\left\{ Q\in\mathscr{Q}\:\Big|\:\int_{0}^{1}Q(s)Q_{\rho}(1-s)ds\leq x\:\mathrm{and}\:Q\icx Q_0\right\} .
$$
Obviously, the set $\Qicx(x,Q_0)$ is convex and closed in $L^2([0,1))$.  

With abuse of notation, we let 
$$\ep[Q]\trieq\int_0^1Q(s)ds\ \mbox{ and }\var[Q]\triangleq\int_{0}^{1}Q^{2}(s)ds-\left(\int_{0}^{1}Q(s)ds\right)^{2},\quad Q\in\mathscr{Q}.$$

A quantile function $Q\in\Qicx(x,Q_0)$
is called \textit{variance-minimal} in $\Qicx(x,Q_0)$ if it solves the following problem:
\begin{equation}\label{opt:vm:Q}
\begin{split}
&\mini_{Q\in\Qsc} \var[Q]\\
&\text{subject to } \int_{0}^{1}Q(s)Q_{\rho}(1-s)ds\leq x,\; Q\icx Q_0.
\end{split}
\end{equation}
Let $v^\circ(x)$ denote the minimal value of problem \eqref{opt:vm:Q}. Obviously, $v^\circ$ is convex.

The next proposition shows that finding variance-minimal payoffs is equivalent to finding  
variance-minimal quantile functions.

\begin{proposition}\label{prop:X:Q}
If $X$ is variance-minimal in $\Xicx(x,X_0)$, 
then so is $\mathit{Q}_{X}$ in $\Qicx(x,Q_0)$. Conversely,
if $\mathit{Q}$ is variance-minimal in $\Qicx(x,Q_0)$, then so is $\mathit{X}= Q(1-\xi)$ in $\Xicx(x,X_0)$, 
where\footnote{For the existence of such a $\xi$, see, e.g.,  \citet[Lemma  A.28]{FS2016}. Moreover, let $\xi$ be a random variable uniformly distributed on $(0,1)$. Then by \citet[Theorem 5]{Xu2014}, $\xi\in\Xi$ if and only if $(\xi,\rho)$ is comonotonic.}
\[
\xi\in\Xi\triangleq\{\xi\:\vert\:\xi\text{ is uniformly distributed on }(0,1)\text{ and }\rho=Q_{\rho}(\xi)\text{ a.s.}\}.
\]
\end{proposition}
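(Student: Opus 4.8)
The plan is to reduce both implications to two structural facts. The first is that passing between a random variable and its quantile function is distribution-preserving on $(0,1)$ equipped with Lebesgue measure, so that $\var[Q_X]=\var[X]$ and, by the equivalence \eqref{eq:icx:eqv}, the constraint $X\icx X_0$ is literally identical to $Q_X\icx Q_0$. The second, and the real engine of the proof, is a Hardy--Littlewood type rearrangement inequality: for every $X\in L^2$,
\[
\ep[\rho X]\ge\int_0^1 Q_X(s)Q_\rho(1-s)\,ds,
\]
with equality precisely when $X$ and $\rho$ are counter-monotonic, that is, when $X=Q_X(1-\xi)$ and $\rho=Q_\rho(\xi)$ for some $\xi\in\Xi$ (here $Q(1-\xi)$ is decreasing in $\xi$, hence in $\rho$, which is the counter-monotonic arrangement). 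This inequality is exactly what links the budget constraint $\ep[\rho X]\le x$ defining $\Xicx(x,X_0)$ with the budget constraint $\int_0^1 Q(s)Q_\rho(1-s)\,ds\le x$ defining $\Qicx(x,Q_0)$.

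Granting these two facts, feasibility transfers cleanly in both directions. For the forward implication, if $X$ is feasible then $Q_X\icx Q_0$ and $\int_0^1 Q_X(s)Q_\rho(1-s)\,ds\le\ep[\rho X]\le x$, so $Q_X\in\Qicx(x,Q_0)$. For the converse, given $Q\in\Qicx(x,Q_0)$ I would build $X=Q(1-\xi)$ with $\xi\in\Xi$, such a $\xi$ existing because the space is nonatomic (the footnote records this via \citet[Lemma A.28]{FS2016} and \citet[Theorem 5]{Xu2014}); then $Q_X=Q$, so $X\icx X_0$, while the equality case of the rearrangement inequality gives $\ep[\rho X]=\int_0^1 Q(s)Q_\rho(1-s)\,ds\le x$, whence $X\in\Xicx(x,X_0)$.

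Optimality then follows by a symmetric contradiction argument in each direction. Suppose $X$ is variance-minimal in $\Xicx(x,X_0)$ but some $Q\in\Qicx(x,Q_0)$ satisfies $\var[Q]<\var[Q_X]=\var[X]$. I would lift $Q$ to $Y=Q(1-\xi)$ with $\xi\in\Xi$; then $Y\in\Xicx(x,X_0)$ by the feasibility transfer just described, while $\var[Y]=\var[Q]<\var[X]$, contradicting the minimality of $X$. Hence $Q_X$ is variance-minimal. The converse is the mirror image: if $Q$ is variance-minimal and some $Y\in\Xicx(x,X_0)$ has $\var[Y]<\var[X]=\var[Q]$, then $Q_Y\in\Qicx(x,Q_0)$ with $\var[Q_Y]=\var[Y]<\var[Q]$, again a contradiction.

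The step I expect to require the most care is the rearrangement inequality together with its equality case, since everything hinges on being able to realize the lower bound $\int_0^1 Q(s)Q_\rho(1-s)\,ds$ by an explicit admissible payoff. The nonatomicity of $(\Omega,\Fc,\Pbb)$ is exactly what lets me select $\xi\in\Xi$ comonotonic with $\rho$, and hence $X=Q(1-\xi)$ counter-monotonic with $\rho$; once this equality is in hand, the variance and the $\icx$ constraint ride along for free because they depend on $X$ only through its law.
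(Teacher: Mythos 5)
Your proposal is correct and is essentially the argument the paper has in mind: its proof of Proposition \ref{prop:X:Q} simply cites ``the standard arguments in the quantile formulation literature,'' which are exactly what you have written out in full --- law invariance of variance and of the ICX constraint, the Hardy--Littlewood lower bound $\ep[\rho X]\ge\int_0^1 Q_X(s)Q_\rho(1-s)\,ds$, and the counter-monotonic lifting $X=Q(1-\xi)$, $\xi\in\Xi$, which attains that bound. (Your ``equality precisely when counter-monotonic'' is slightly stronger than needed and can degenerate, e.g.\ for constant $\rho$, but only the attainment direction is used, so this is harmless.)
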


\begin{proof}  It can be proved by the standard arguments in the aforementioned quantile formulation literature.\qed
\end{proof}

Hereafter, we focus on studying 
variance-minimal quantile functions.

\section{Duality}\label{sec:duality}

The following theorem implies that $\Qicx(x,Q_0)\neq\emptyset$ for any $x\in\Rbb$.

\begin{theorem}\label{thm:emp}  
Assume that $X_0\in L^\infty$. Then
$\inf_{Q\icx Q_0}\int_0^1Q(s)Q_\rho(1-s)ds=-\infty$.
\end{theorem}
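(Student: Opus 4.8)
The plan is to exhibit an explicit half-line of feasible quantile functions along which the price functional $Q \mapsto \int_0^1 Q(s) Q_\rho(1-s)\,ds$ decreases without bound. The guiding intuition is structural: the constraint $Q \icx Q_0$ only bounds the upper integrals $\int_t^1 Q$, i.e. it controls the right tail of $Q$, while the price pairs $Q$ against the decreasing weight $s \mapsto Q_\rho(1-s)$, which loads most heavily on the left end $s \approx 0$. Hence one should be able to degrade the left tail of $Q$ --- where the SDF is large --- arbitrarily far downward while leaving the upper integrals intact, sending the price to $-\infty$.

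Concretely, first I would fix a feasible anchor. Since $X_0 \in L^\infty$, writing $b \trieq \|X_0\|_\infty$ we have $Q_0(s) \le b$ for all $s$, so the constant function $Q^* \equiv b$ satisfies $\int_t^1 Q^*(s)\,ds = b(1-t) \ge \int_t^1 Q_0(s)\,ds$ for every $t$; thus $Q^* \icx Q_0$ and $Q^* \in \Qsc$. Then I would perturb along $h(s) \trieq s - \tfrac12$, setting $Q_\lambda \trieq Q^* + \lambda h$ for $\lambda \ge 0$. Two checks keep this ray feasible: $h$ is increasing (and bounded), so each $Q_\lambda$ lies in $\Qsc$; and $\int_t^1 h(s)\,ds = \tfrac{t(1-t)}{2} \ge 0$, whence $\int_t^1 Q_\lambda = b(1-t) + \lambda\tfrac{t(1-t)}{2} \ge \int_t^1 Q_0$ for all $t$, i.e. $Q_\lambda \icx Q_0$.

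Because the price functional is linear, it is affine along the ray:
\[
\int_0^1 Q_\lambda(s) Q_\rho(1-s)\,ds = b\,\ep[\rho] + \lambda \int_0^1 \Bigl(s-\tfrac12\Bigr) Q_\rho(1-s)\,ds.
\]
The whole argument then reduces to showing the slope is strictly negative, i.e. $\int_0^1 (s-\tfrac12)\,Q_\rho(1-s)\,ds < 0$. Here $s \mapsto s - \tfrac12$ is increasing while $s \mapsto Q_\rho(1-s)$ is decreasing, so by Chebyshev's integral (correlation) inequality for oppositely monotone functions the integral is at most $\bigl(\int_0^1(s-\tfrac12)\,ds\bigr)\,\ep[\rho] = 0$, and the inequality is strict precisely because $Q_\rho$ is non-constant. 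This last non-degeneracy is exactly the assumption $\var[\rho] > 0$. Consequently $\int_0^1 Q_\lambda(s)Q_\rho(1-s)\,ds \to -\infty$ as $\lambda \to \infty$, which gives the claim.

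The step I expect to be the crux is the strict negativity of the slope, since it is the only place where the hypotheses genuinely bite: $X_0 \in L^\infty$ is what makes the constant anchor $Q^*$ dominate $Q_0$, and $\var[\rho]>0$ is what forces $Q_\rho$ to be non-constant so that the correlation inequality is strict rather than an equality. The remaining tension --- choosing a perturbation direction $h$ that is simultaneously increasing (to stay in $\Qsc$ for all $\lambda$), upper-integral nonnegative (to preserve $\icx Q_0$), and negatively priced --- is resolved by the single explicit choice $h(s)=s-\tfrac12$, so no delicate limiting construction is needed.
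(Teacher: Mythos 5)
Your proof is correct, and it reaches the conclusion by a genuinely different construction than the paper's. The paper perturbs $Q_0$ itself: it adds a spike $\frac{n}{a_\varepsilon}\id_{s>1-\varepsilon}$ on the top $\varepsilon$-tail and subtracts the same amount on the bottom $\varepsilon$-tail, then chooses $\varepsilon_0$ small enough that the SDF-weight ratio satisfies $\int_0^{\varepsilon_0}Q_\rho(1-s)ds \big/ \int_0^{\varepsilon_0}Q_\rho(s)ds \ge 1+\alpha$ (possible since $Q_\rho(1-)>Q_\rho(0)$), so the price drops by at least $\alpha n \to \infty$ while ICX dominance is preserved. You instead anchor at the constant $b=\|X_0\|_\infty$ and move along the single affine ray $Q_\lambda = b + \lambda(s-\tfrac12)$, reducing everything to the strict negativity of the slope $\int_0^1(s-\tfrac12)Q_\rho(1-s)\,ds$, which is Chebyshev's correlation inequality for oppositely monotone functions; equivalently, after the substitution $u=1-s$ the slope equals $-\cov(U,Q_\rho(U))$ for $U$ uniform, which is strictly negative exactly when $\var[\rho]>0$. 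Both arguments exploit the same structural asymmetry (ICX constrains only upper integrals, while the price loads on the left end) and both use non-constancy of $\rho$, but your route trades the paper's tail-localized spikes and $\varepsilon$-selection for a one-line feasibility check plus a standard named inequality, which makes it shorter and arguably more transparent; the paper's construction, by contrast, shows that the price can be driven to $-\infty$ by payoffs that differ from $X_0$ only on events of arbitrarily small probability, a slightly stronger qualitative picture. One small point worth making explicit in your write-up: the strict (rather than weak) Chebyshev inequality needs the equality-case argument, which is immediate here via the covariance identity $\cov(U,Q_\rho(U)) = \tfrac12\,\ep\bigl[(U-U')(Q_\rho(U)-Q_\rho(U'))\bigr] > 0$ for independent uniforms $U,U'$ when $Q_\rho$ is non-constant.
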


\proof See Appendix \ref{app:proof:thm:emp}.\qed

The following proposition shows that the problem is trivial when $Q_0(1)\ep[\rho]\le x$.

\begin{proposition}\label{prop:trivial}
Assume that $X_0\in L^\infty$.  If $Q_0(1)\ep[\rho]\le x$, then $v^\circ(x)=0$ and the variance-minimal quantile functions in $\Qicx(x,Q_0)$ are constants in the interval $\left[Q_0(1), {x\over\ep[\rho]}\right]$.
\end{proposition}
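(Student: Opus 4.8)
The plan is to exploit the elementary fact that $\var[Q]\ge 0$ for every $Q\in\Qsc$, with equality precisely when $Q$ is constant a.e. (hence everywhere, by right-continuity and monotonicity). This reduces the proposition to two tasks: first, identifying exactly which constant quantile functions belong to $\Qicx(x,Q_0)$; and second, observing that under the stated hypothesis such constants exist, so that the minimal variance is zero and the minimizers are exactly those feasible constants.

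First I would characterize feasibility of a constant quantile function $Q\equiv c$. The budget constraint reads $\int_0^1 c\,Q_\rho(1-s)\,ds\le x$; after the substitution $u=1-s$ this becomes $c\,\ep[\rho]\le x$, and since $\ep[\rho]=\int_0^1 Q_\rho(u)\,du>0$ (because $\rho>0$ a.s.), it is equivalent to $c\le x/\ep[\rho]$. For the order constraint I would use the characterization $c\icx Q_0\iff c(1-t)\ge\int_t^1 Q_0(s)\,ds$ for all $t\in[0,1]$. Since $Q_0$ is increasing, $Q_0(s)\le Q_0(1)$ on $[0,1)$, whence $\int_t^1 Q_0(s)\,ds\le Q_0(1)(1-t)$; this shows $c\ge Q_0(1)$ is sufficient. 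For necessity I would divide by $1-t$ and let $t\uparrow 1$, using $\tfrac{1}{1-t}\int_t^1 Q_0(s)\,ds\to Q_0(1-)=Q_0(1)$ to force $c\ge Q_0(1)$. Hence $Q\equiv c$ lies in $\Qicx(x,Q_0)$ if and only if $Q_0(1)\le c\le x/\ep[\rho]$.

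To finish, the hypothesis $Q_0(1)\ep[\rho]\le x$ is precisely the statement that the interval $[Q_0(1),x/\ep[\rho]]$ is nonempty. Any constant $c$ in this interval is feasible and satisfies $\var[c]=0$, so $v^\circ(x)=0$ and every such constant is variance-minimal. Conversely, any variance-minimal $Q$ attains $\var[Q]=v^\circ(x)=0$, hence is a constant, and by the characterization above this constant must lie in $[Q_0(1),x/\ep[\rho]]$. This yields simultaneously the value $v^\circ(x)=0$ and the exact description of the minimizer set.

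I expect the only delicate point to be the ``only if'' direction of $c\icx Q_0\iff c\ge Q_0(1)$: one must justify the limit $\tfrac{1}{1-t}\int_t^1 Q_0(s)\,ds\to Q_0(1)$ as $t\uparrow1$, which follows by squeezing this average between $Q_0(t)$ and $Q_0(1-)$ (valid since $Q_0$ is increasing) and using $Q_0(t)\to Q_0(1-)=Q_0(1)$, together with the convention $Q_0(1)\trieq Q_0(1-)$. Everything else is a routine consequence of the nonnegativity of variance and the identity $\int_0^1 Q_\rho(1-s)\,ds=\ep[\rho]$.
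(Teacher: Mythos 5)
Your proof is correct and follows essentially the same route as the paper's: exhibit the feasible constants in $\left[Q_0(1),\,x/\ep[\rho]\right]$ to get $v^\circ(x)=0$, then use that any minimizer has zero variance, hence is a feasible constant, hence lies in that interval. The only difference is that you carefully verify the characterization ``$Q\equiv c$ lies in $\Qicx(x,Q_0)$ iff $Q_0(1)\le c\le x/\ep[\rho]$'' (including the limit $\tfrac{1}{1-t}\int_t^1 Q_0(s)\,ds\to Q_0(1)$), which the paper treats as obvious.
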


\proof  Assume $Q_0(1)\ep[\rho]\le x$. In this case, for every $c\in \left[Q_0(1), {x\over\ep[\rho]}\right]$, we have that $c\in\Qicx(x,Q_0)$, which implies that $v^\circ(x)=0$ and $c$ is variance-minimal in $\Qicx(x,Q_0)$. On the other hand, if $Q^\circ$ is variance-minimal in $\Qicx(x,Q_0)$, then $\var[Q^\circ]=0$ and hence $Q^\circ\equiv c_0$ for some $c_0\in\Rbb$. Then by $c_0\in\Qicx(x,Q_0)$, we have $c_0\in \left[Q_0(1), {x\over\ep[\rho]}\right]$.
 \qed
 
 Hereafter, we always make the following assumption unless otherwise stated. 
 
 \begin{assumption}\label{ass:X0x}
 $X_0\in L^\infty$ and $Q_0(1)\ep[\rho]>x$.
 \end{assumption}

The next theorem establishes the existence and uniqueness of the variance-minimal solution. 

 \begin{theorem}\label{thm:unique}
 Under Assumption \ref{ass:X0x},  $v^\circ(x)>0$ and there exists a unique variance-minimal quantile function in $\Qicx(x,Q_0)$.
 \end{theorem}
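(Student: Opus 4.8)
The plan is to solve \eqref{opt:vm:Q} by the direct method of the calculus of variations. The feasible set $\Qicx(x,Q_0)$ is convex and closed in $L^2([0,1))$, hence weakly closed, and the objective $Q\mapsto\var[Q]=\int_0^1Q^2-\bigl(\int_0^1Q\bigr)^2$ is convex and strongly continuous, hence weakly lower semicontinuous. The one genuine difficulty in the existence step is that $\var$ is invariant under adding a constant and so is not obviously coercive; I would therefore show directly that a minimizing sequence stays bounded in $L^2$, extract a weak limit, and conclude. Positivity of $v^\circ(x)$ and uniqueness will then be handled separately.

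For existence, fix a minimizing sequence $(Q_n)\subset\Qicx(x,Q_0)$ with $\var[Q_n]\to v^\circ(x)<\infty$, finiteness being guaranteed by $\Qicx(x,Q_0)\neq\emptyset$ (Theorem \ref{thm:emp}). Since $\|Q_n-\ep[Q_n]\|_{L^2}^2=\var[Q_n]$ is bounded, it remains to bound the means $\ep[Q_n]$. Taking $t=0$ in $Q_n\icx Q_0$ gives $\ep[Q_n]\ge\ep[Q_0]$, a lower bound. For an upper bound I would split the budget constraint as $\int_0^1Q_n(s)Q_\rho(1-s)\,ds=\int_0^1(Q_n(s)-\ep[Q_n])Q_\rho(1-s)\,ds+\ep[Q_n]\,\ep[\rho]$, using $\int_0^1Q_\rho(1-s)\,ds=\ep[\rho]$, and bound the first term by Cauchy--Schwarz to obtain $\ep[Q_n]\,\ep[\rho]\le x+\sqrt{\var[Q_n]}\,\sqrt{\ep[\rho^2]}$; as $\ep[\rho]>0$ this bounds $\ep[Q_n]$ from above. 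Thus $(Q_n)$ is bounded, a weakly convergent subsequence has a limit $Q^\circ$ lying in $\Qicx(x,Q_0)$ by weak closedness, and weak lower semicontinuity yields $\var[Q^\circ]\le v^\circ(x)$, so $Q^\circ$ is variance-minimal.

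Next I would show $v^\circ(x)>0$ by checking that no constant is feasible under Assumption \ref{ass:X0x}. For a constant $c$, the relation $c\icx Q_0$ reads $c(1-t)\ge\int_t^1Q_0(s)\,ds$ for all $t$, which on letting $t\uparrow1$ is equivalent to $c\ge Q_0(1)$; together with the budget constraint $c\,\ep[\rho]\le x$ this would force $Q_0(1)\,\ep[\rho]\le x$, contradicting Assumption \ref{ass:X0x}. Hence every feasible $Q$ is non-constant, and the minimizer $Q^\circ$ obtained above satisfies $v^\circ(x)=\var[Q^\circ]>0$.

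For uniqueness, the crucial observation is that $\var$ fails strict convexity only along constants: the parallelogram identity gives $\tfrac12\var[Q_1]+\tfrac12\var[Q_2]-\var\bigl[\tfrac{Q_1+Q_2}{2}\bigr]=\tfrac14\var[Q_1-Q_2]$, so if $Q_1,Q_2$ are both minimizers, convexity of $\Qicx(x,Q_0)$ forces $\var[Q_1-Q_2]=0$, i.e. $Q_1=Q_2+d$ a.e. for some $d\in\Rbb$. To eliminate $d\neq0$ I would first prove that every positive-variance optimizer binds the budget: if $\int_0^1Q^\circ(s)Q_\rho(1-s)\,ds<x$, then choosing $c=Q_0(1)$ (so $c\icx Q_0$) the mixtures $Q_\eta=(1-\eta)Q^\circ+\eta c$ belong to $\{Q\icx Q_0\}$ by its convexity, remain within budget for small $\eta>0$ by continuity, yet satisfy $\var[Q_\eta]=(1-\eta)^2\var[Q^\circ]<\var[Q^\circ]$, a contradiction. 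Thus the budget binds at any optimizer; but if $Q_1=Q_2+d$ with $d>0$ then $\int_0^1Q_1(s)Q_\rho(1-s)\,ds=x+d\,\ep[\rho]>x$ violates feasibility of $Q_1$, so $d=0$. I expect this uniqueness step to be the main obstacle, precisely because variance is only convex and not strictly convex, so the degeneracy along constants must be broken by the budget-binding argument, which leans on the convexity of $\{Q\icx Q_0\}$ and the availability of the ICX-dominant constant $Q_0(1)$.
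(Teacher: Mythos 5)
Your proof is correct. The positivity and uniqueness halves essentially coincide with the paper's own argument: the paper likewise reduces two minimizers to a constant difference (via midpoint convexity and strict convexity of $x\mapsto x^2$, which is your parallelogram identity in disguise) and then eliminates the constant by proving the budget constraint binds at any optimizer, using exactly your perturbation $(1-\varepsilon)Q^\circ+\varepsilon Q_0(1)$ with the feasible constant $Q_0(1)$. Where you genuinely diverge is existence. The paper writes $\var[Q]=\min_{\beta\in\Rbb}\int_0^1(Q(s)-\beta)^2ds$, solves the inner problem for each fixed $\beta$ as a Hilbert-space projection of the constant $\beta$ onto the closed convex set $\Qicx(x,Q_0)$ (Lemma \ref{lma:beta}(a)), and then minimizes the value function $v_1(\beta)$ over $\beta$, getting coercivity $\lim_{|\beta|\to\infty}v_1(\beta)=\infty$ by comparison with the unconstrained mean--variance value $v_0(\beta)=(\beta\ep[\rho]-x)^2/\ep[\rho^2]$. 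You instead run the direct method on the original problem: the Cauchy--Schwarz bound $\ep[Q_n]\,\ep[\rho]\le x+\sqrt{\var[Q_n]}\sqrt{\ep[\rho^2]}$ together with the ICX lower bound $\ep[Q_n]\ge\ep[Q_0]$ keeps a minimizing sequence bounded in $L^2$, and weak compactness, weak closedness of the convex strongly closed set $\Qicx(x,Q_0)$, and weak lower semicontinuity of the convex, strongly continuous functional $\var$ finish the job. The two coercivity estimates are really the same inequality viewed from different ends (the budget prevents the mean from drifting upward at bounded variance), but the approaches buy different things: the paper's $\beta$-decomposition is not just a proof device, it sets up the parametrized family that the rest of the paper reuses (problem \eqref{opt:lambda} is indexed by $\beta$, and existence of $Q^*_{\beta,\lambda}$ is later obtained ``similarly to Lemma \ref{lma:beta}(a)''), whereas your argument is shorter and self-contained if one only wants Theorem \ref{thm:unique} itself.
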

 
 \proof See Appendix \ref{app:thm:unique}. \qed

It is well known that $\var[Q]$ is convex in $Q$. 
 By Theorem \ref{thm:emp}, there exists some $Q\in\Qsc$ such that $Q\icx Q_0$ and $\int_0^1Q(s)Q_\rho(1-s)ds<x$. Therefore, the Slater condition holds for the budget constraint in \eqref{opt:vm:Q}. 
 Then by the standard results in convex optimization theory (see, e.g., \citet[Sections 8.3--8.5]{L1969}),  
 we have the following proposition.
 
 \begin{proposition}\label{prop:existence}  
 Assume that $X_0\in L^\infty$. Then for every $x\in\Rbb$, 
 $Q^\circ$ solves problem \eqref{opt:vm:Q} if and only if there exists some $\lambda^\circ\ge0$ such that $Q^\circ$ solves the following problem
\begin{equation*}
\begin{split}
&\mini_{Q\in\Qsc}\quad \var[Q]+\lambda^\circ\int_0^1Q(s)Q_\rho(1-s)ds\\
& \text{\rm subject to }\quad Q\icx Q_0
\end{split}
\end{equation*}
 and satisfies 
 $$\lambda^\circ\left(\int_0^1Q^\circ(s)Q_\rho(1-s)ds-x\right)=0.$$
If that is the case, then $-\lambda^\circ\in\partial v^\circ(x)$, where
$\partial v^\circ$ denotes the subdifferential of $v^\circ$.
 \end{proposition}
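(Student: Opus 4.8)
The plan is to read the assertion as the standard Karush--Kuhn--Tucker (saddle-point) characterization for the convex program \eqref{opt:vm:Q}, dualizing only the single scalar budget constraint while keeping the ICX constraint inside the ground set. Write $f(Q)\trieq\var[Q]$ for the (convex) objective, $b(Q)\trieq\int_0^1Q(s)Q_\rho(1-s)\,ds$ for the (continuous, linear) budget functional, $g(Q)\trieq b(Q)-x$ for the single inequality constraint, and $C\trieq\{Q\in\Qsc\mid Q\icx Q_0\}$ for the ground set, which is convex by \eqref{eq:QQ0w}. The optimal value $v^\circ(x)$ is finite (it is nonnegative and $\Qicx(x,Q_0)\neq\emptyset$), and by Theorem \ref{thm:emp} there is $Q\in C$ with $b(Q)<x$, i.e.\ $g(Q)<0$, so Slater's condition holds. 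These are exactly the hypotheses of the generalized Lagrange multiplier theorem of \citet[Sections 8.3--8.5]{L1969}; the first assertion is its conclusion, and I would simply spell out the two implications.

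For necessity, assuming $Q^\circ$ solves \eqref{opt:vm:Q}, the cited theorem furnishes $\lambda^\circ\ge0$ with $\inf_{Q\in C}\{f(Q)+\lambda^\circ g(Q)\}=f(Q^\circ)+\lambda^\circ g(Q^\circ)$ and complementary slackness $\lambda^\circ g(Q^\circ)=0$. Since the additive constant $-\lambda^\circ x$ does not affect the minimizer, $Q^\circ$ minimizes $\var[Q]+\lambda^\circ b(Q)$ over $C$, which is the displayed subproblem; complementary slackness is the stated condition. For sufficiency, given such a $\lambda^\circ$ and any $Q'$ feasible for \eqref{opt:vm:Q} (so $Q'\in C$ and $g(Q')\le0$), I would chain
\[
f(Q')\ge f(Q')+\lambda^\circ g(Q')\ge f(Q^\circ)+\lambda^\circ g(Q^\circ)=f(Q^\circ),
\]
where the first step uses $\lambda^\circ\ge0$ and $g(Q')\le0$, the second is Lagrangian minimality, and the equality is complementary slackness; together with feasibility of $Q^\circ$ this gives optimality.

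The one point needing care in sufficiency is confirming that $Q^\circ$ is itself feasible. When $\lambda^\circ>0$, complementary slackness forces $b(Q^\circ)=x$, so $Q^\circ$ meets the budget with equality and, lying in $C$, is feasible. When $\lambda^\circ=0$ the slackness condition is vacuous; here $Q^\circ$ minimizes $\var[Q]$ over $C$ alone, hence is a constant $\ge Q_0(1)$, and feasibility reduces to $c\,\ep[\rho]\le x$ for its constant value $c$. This is precisely the degenerate regime of Proposition \ref{prop:trivial}, which already pins down the feasible variance-minimal constants, so I would invoke it to close this case. For the sensitivity claim $-\lambda^\circ\in\partial v^\circ(x)$ (meaningful since $v^\circ$ is convex), I would prove the subgradient inequality $v^\circ(x')\ge v^\circ(x)-\lambda^\circ(x'-x)$ for every $x'\in\Rbb$: for any $Q'\in C$ with $b(Q')\le x'$, Lagrangian minimality of $Q^\circ$, complementary slackness and $f(Q^\circ)=v^\circ(x)$ give
\[
f(Q')+\lambda^\circ b(Q')\ge f(Q^\circ)+\lambda^\circ b(Q^\circ)=v^\circ(x)+\lambda^\circ x,
\]
whence $f(Q')\ge v^\circ(x)-\lambda^\circ(b(Q')-x)\ge v^\circ(x)-\lambda^\circ(x'-x)$; taking the infimum over such $Q'$ yields the inequality, i.e.\ $-\lambda^\circ\in\partial v^\circ(x)$.

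The main obstacle I anticipate is not the algebra but the careful verification that the abstract convex-programming duality of \citet{L1969} genuinely applies in the infinite-dimensional space $L^2([0,1))$: that $f$ is a finite convex functional on the convex set $C$, that $b$ is a continuous affine functional into $\Rbb$ whose order cone $[0,\infty)$ has nonempty interior, and that the Slater point supplied by Theorem \ref{thm:emp} is strictly feasible. Once these regularity hypotheses are checked, the remainder is the routine saddle-point bookkeeping above, the only real case distinction being the degenerate multiplier $\lambda^\circ=0$, which is absorbed by Proposition \ref{prop:trivial}.
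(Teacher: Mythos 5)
Your proposal is correct and takes essentially the same route as the paper: the paper also keeps the ICX constraint inside the ground set, dualizes only the scalar budget constraint, verifies convexity and the Slater condition via Theorem \ref{thm:emp}, and then simply invokes the standard results of \citet[Sections 8.3--8.5]{L1969}, which is exactly the machinery you spell out. The details you add (the sufficiency chain, the feasibility check when $\lambda^\circ=0$, and the explicit subgradient inequality) are precisely what the paper delegates to the cited reference.
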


 \begin{remark}\label{rmk: lambdao}
 If $Q_0(1)\ep[\rho]>x$, then $\lambda^\circ>0$.  Actually, suppose on the contrary that $\lambda^\circ=0$. Then $Q^\circ$ solves the problem 
\begin{equation*}
\begin{split}
&\mini_{Q\in\Qsc}\quad \var[Q]\\
& \text{\rm subject to }\quad Q\icx Q_0.
\end{split}
\end{equation*} 
Obviously, every constant $c\ge Q_0(1)$ also solves the above problem. Therefore,  
$v^\circ(x)=\var[Q^\circ]=0$, which is impossible by Theorem \ref{thm:unique}.
\end{remark}

 We now consider, for any fixed $\lambda>0$, the following problem:
 \begin{equation}\label{opt:var:lambda}
\begin{split}
&\mini_{Q\in\Qsc}\quad \var[Q]+\lambda\int_0^1Q(s)Q_\rho(1-s)ds\\
& \text{subject to }\quad Q\icx Q_0,
\end{split}
\end{equation}
which is equivalent to
 \begin{equation}\label{opt:var:lambda:beta}
\begin{split}
&\mini_{Q\in\Qsc}\quad \min_{\beta\in\Rbb}\int_0^1(Q(s)-\beta)^2ds+\lambda\int_0^1Q(s)Q_\rho(1-s)ds\\
& \text{subject to }\quad Q\icx Q_0.
\end{split}
\end{equation}

To solve problem \eqref{opt:var:lambda} or \eqref{opt:var:lambda:beta}, we first consider, for any fixed $\lambda>0$ and $\beta\in\Rbb$, the following problem:
 \begin{equation}\label{opt:lambda}
\begin{split}
&\mini_{Q\in\Qsc}\quad \int_0^1(Q(s)-\beta)^2ds+\lambda\int_0^1Q(s)Q_\rho(1-s)ds\\
& \text{subject to }\quad Q\icx Q_0.
\end{split}
\end{equation}

Let $v(\beta,\lambda)$ denote the optimal value of problem \eqref{opt:lambda}. 
Let
\begin{align*}
L(Q,w;\beta,\lambda)\trieq &\int_0^1(Q(s)-\beta)^2ds+\lambda\int_0^1Q(s)Q_\rho(1-s)ds\\
&-\left(\int_{[0,1]}Q(s)dw(s)-\int_{[0,1]}Q_0(s)dw(s)\right),\quad
Q\in\Qsc,\; w\in\Wicx.
\end{align*}
In view of \eqref{eq:QQ0w}, we know that 
\begin{align*}
v(\beta,\lambda)=\inf_{Q\in\Qsc}\sup_{w\in\Wicx}L(Q,w;\beta,\lambda).
\end{align*}
Moreover, we have the following proposition.

\begin{proposition}\label{prop:saddle:L}
Assume that $X_0\in L^\infty$. Let $Q^*\in\Qsc$. Then
$Q^*$ solves problem \eqref{opt:lambda} if and only if there exists some $w^*\in\Wicx$ such that $(Q^*,w^*)$ is a saddle point of $L(\cdot\;, \cdot\;; \beta,\lambda)$ (with respect to minimizing in $Q$ and maximizing in $w$). 
\end{proposition}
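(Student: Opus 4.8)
The plan is to read Proposition \ref{prop:saddle:L} as a strong-duality statement for the convex program \eqref{opt:lambda}, whose sole constraint is $Q\icx Q_0$. Write $f(Q)\trieq\int_0^1(Q(s)-\beta)^2ds+\lambda\int_0^1Q(s)Q_\rho(1-s)ds$ for the objective, so that $L(Q,w;\beta,\lambda)=f(Q)-\int_{[0,1]}(Q(s)-Q_0(s))dw(s)$. By \eqref{eq:QQ0w}, for fixed $Q$ the inner supremum $\sup_{w\in\Wicx}L(Q,w;\beta,\lambda)$ equals $f(Q)$ when $Q\icx Q_0$ and equals $+\infty$ otherwise; hence $v(\beta,\lambda)=\inf_{Q\in\Qsc}\sup_{w\in\Wicx}L(Q,w;\beta,\lambda)$ is exactly the optimal value of \eqref{opt:lambda}. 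The two implications then separate cleanly.

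For the ``if'' implication, I would start from the saddle inequalities $L(Q^*,w;\beta,\lambda)\le L(Q^*,w^*;\beta,\lambda)\le L(Q,w^*;\beta,\lambda)$, valid for all $Q\in\Qsc$ and $w\in\Wicx$. Taking the supremum over $w$ on the left shows $\sup_w L(Q^*,w;\beta,\lambda)=L(Q^*,w^*;\beta,\lambda)<\infty$, which by the dichotomy above forces $Q^*\icx Q_0$ and, using $w\equiv0\in\Wicx$, gives $\sup_w L(Q^*,w;\beta,\lambda)=f(Q^*)$. Matching this with $L(Q^*,w^*;\beta,\lambda)=f(Q^*)-\int_{[0,1]}(Q^*-Q_0)dw^*$ yields the complementary-slackness identity $\int_{[0,1]}(Q^*-Q_0)dw^*=0$, hence $L(Q^*,w^*;\beta,\lambda)=f(Q^*)$. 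The right-hand saddle inequality then reads $f(Q^*)\le f(Q)-\int_{[0,1]}(Q-Q_0)dw^*$ for every $Q\in\Qsc$; restricting to feasible $Q$ (for which $\int_{[0,1]}(Q-Q_0)dw^*\ge0$) gives $f(Q)\ge f(Q^*)$, so $Q^*$ solves \eqref{opt:lambda}. This direction uses only \eqref{eq:QQ0w} and is routine.

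The substantive ``only if'' implication is a Lagrange-multiplier existence statement. Here I would encode the constraint as a cone constraint $\Phi(Q)\in\mathcal P$ in $C([0,1])$, where $\Phi(Q)(t)\trieq\int_t^1(Q(s)-Q_0(s))ds$ is affine in $Q$ and $\mathcal P$ is the cone of nonnegative continuous functions; the constraint $Q\icx Q_0$ is precisely $\Phi(Q)\ge0$. Since $f$ is convex and $\Phi$ affine, the program is convex, and a generalized Lagrange-multiplier theorem (as in \citet[Sections 8.3--8.5]{L1969}) would supply a nonnegative Borel measure $\mu$ on $[0,1]$ such that $Q^*$ minimizes $f(Q)-\int_{[0,1]}\Phi(Q)(t)d\mu(t)$ over $\Qsc$ with $\int_{[0,1]}\Phi(Q^*)(t)d\mu(t)=0$. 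The decisive bookkeeping step is to identify this multiplier with an element of $\Wicx$: by Fubini, $\int_{[0,1]}\Phi(Q)(t)d\mu(t)=\int_0^1(Q(s)-Q_0(s))\mu([0,s))ds$, and since $s\mapsto\mu([0,s))$ is nonnegative and increasing, the function $w^*(s)\trieq\int_0^s\mu([0,r))dr$ is increasing and convex with $w^*(0)=0$, i.e.\ $w^*\in\Wicx$, and $\int_{[0,1]}\Phi(Q)d\mu=\int_{[0,1]}(Q-Q_0)dw^*$. With this $w^*$ the minimization property is the right-hand saddle inequality, complementary slackness is the left-hand one, and $(Q^*,w^*)$ is a saddle point.

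The main obstacle is the constraint qualification needed to produce $\mu$, because strict Slater fails at the endpoint: $\Phi(Q)(1)=\int_1^1(Q-Q_0)=0$ for every $Q$, so $\Phi(Q)$ can never lie in the interior of $\mathcal P$. I expect to circumvent this by observing that the constraint at $t=1$ is vacuous ($0\ge0$) and working with the effective constraints on $[0,1)$, for which $Q=Q_0+c$ with $c>0$ is strictly feasible ($\Phi(Q_0+c)(t)=c(1-t)>0$ on $[0,1)$). Securing the multiplier then requires either a relative-interior form of the duality theorem or an approximation argument that truncates the constraints to $[0,1-\varepsilon]$, derives multipliers $\mu_\varepsilon$, and passes to the limit using the strict convexity and coercivity of $f$ in $Q$ (which give a unique, bounded inner minimizer and upper semicontinuity of the dual functional) to control the endpoint behaviour and extract $w^*\in\Wicx$.
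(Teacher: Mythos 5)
Your ``if'' direction is fine (the paper dismisses it as obvious), and your Fubini step identifying a nonnegative Borel measure on $[0,1]$ with an element of $\Wicx$ is sound --- it is essentially the paper's bijection $\Jc$ up to reparametrization. The genuine gap is exactly where you flag it: the multiplier in the ``only if'' direction is never actually produced, and neither of your proposed repairs works as stated. The Luenberger theorem you invoke requires the constraint map to hit the \emph{interior} of the positive cone, and no relative-interior variant rescues this: since $\Phi(Q)(1)=0$ identically, the constraint map takes values in the hyperplane $\{g\in C([0,1]):g(1)=0\}$, which is disjoint from the interior of the positive cone of $C([0,1])$, while the cone of nonnegative functions inside that hyperplane has \emph{empty} interior (any nonnegative continuous $g$ with $g(1)=0$ is a uniform limit of functions dipping below zero near $t=1$). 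The truncation scheme faces a quantitative obstruction: for constraints restricted to $[0,1-\varepsilon]$ the Slater constant of the point $Q_0+c$ is $c\varepsilon$, so the standard bound on the multiplier mass is of order $1/\varepsilon$ and gives no uniform control for extracting a limit; moreover the truncated problem is a relaxation, so its optimizer is not $Q^*$, and the convergence $Q^*_\varepsilon\to Q^*$ together with a compactness/semicontinuity argument for the multipliers would all have to be established. These are the substantive points of the proposition, not routine details, and they are missing.

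For comparison, the paper overcomes the degeneracy by a normalization your formulation lacks: it replaces $\Phi(Q)(t)$ by $A_Q(s)-A_{Q_0}(s)$, where $A_Q(s)\trieq\frac1s\int_{1-s}^1Q(t)dt$, i.e., it divides your constraint by $1-t$. For the strictly feasible point $Q_0+\alpha$ this normalized slack is identically $\alpha>0$, so a Slater condition survives --- but only after the continuum of constraints is collapsed into the single scalar constraint $\inf_{m\in\Msc_1}\left(\int_{[0,1]}A_Q\,dm-\int_{[0,1]}A_{Q_0}\,dm\right)\ge0$ (Lemma \ref{lma:J}), to which the scalar multiplier theorem of \citet{L1969} applies. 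The measure-valued multiplier is then recovered, for that fixed scalar multiplier, by a minimax theorem on the weakly compact set $\Msc_1$ of probability measures (Lemma \ref{lma:K}), the two ingredients are merged into a saddle point by the abstract Lemma \ref{lma:fgh}, and only at the end is the measure translated into $w^*\in\Wicx$, as in your last step. So your outline identifies the right obstacle and the right bookkeeping, but the mechanism that actually defeats the failure of Slater --- normalization, scalar reduction, and compactness of $\Msc_1$ --- is absent, and without it the proof does not go through.
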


\proof See Appendix \ref{app:proof:saddle:L}.\qed

Similarly to Lemma \ref{lma:beta}(a), problem \eqref{opt:lambda} has a unique optimal solution $Q^*_{\beta,\lambda}$. Then by Proposition \ref{prop:saddle:L}, there exists some $w^*_{\beta,\lambda}\in\Wicx$ such that $(Q^*_{\beta,\lambda},w^*_{\beta,\lambda})$ is a saddle point of $L(\cdot\;, \cdot\;; \beta,\lambda)$. 
In this case, $w^*_{\beta,\lambda}$ is an optimal solution to the following dual optimization problem:
\begin{equation}\label{opt:dual:w}
\maxi_{w\in\Wicx}\inf_{Q\in \Qsc} L(Q,w;\beta,\lambda),
\end{equation}
which will be discussed in the next section.

\section{Dual Optimization}\label{sec:dual:o}

Note that each $w\in\Wicx$ is continuous on $[0,1)$ and can be discontinuous at the right end-point $1$.
The next lemma shows that the dual optimizer is continuous on $[0,1]$.

\begin{lemma}\label{lma:w(1)}
Assume that $X_0\in L^\infty$.
Let $w\in\Wicx$. If $w(1)>w(1-)$, then $\inf_{Q\in \Qsc} L(Q,w;\beta,\lambda)=-\infty$. 
\end{lemma}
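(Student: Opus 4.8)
The plan is to exploit the atom that the hypothesis forces on the Stieltjes measure at the right endpoint. Identifying $w$ with a finite measure on $[0,1]$ as in Section \ref{sec:probform}, the condition $w(1)>w(1-)$ means this measure has an atom of mass $\gamma\trieq w(1)-w(1-)>0$ at the point $1$. Since $Q(1)=Q(1-)$ by convention, the only place where the top value $Q(1-)$ enters $L(Q,w;\beta,\lambda)$ with a negative sign is through the term $-\int_{[0,1]}Q\,dw$, which contains $-\gamma Q(1-)$. This suggests driving $L$ to $-\infty$ by sending $Q(1-)\to+\infty$ along a family of quantile functions whose large values are concentrated on a vanishing interval near $1$, so that the convex (quadratic) penalty $\int_0^1(Q-\beta)^2\,ds$ and the linear budget term $\lambda\int_0^1 Q(s)Q_\rho(1-s)\,ds$ stay bounded while $-\gamma Q(1-)$ dominates.

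Concretely, I would take, for $n\in\Nbb$ with $n>\beta$, the spike $Q_n\trieq\beta\,\id_{[0,1-\epsilon_n)}+n\,\id_{[1-\epsilon_n,1)}$ with $\epsilon_n\trieq n^{-3}$; each $Q_n$ is increasing, right-continuous and bounded, hence lies in $\Qsc$, and $Q_n(1-)=n$. It then remains to bound the contributions to $L(Q_n,w;\beta,\lambda)$. The quadratic term equals $\epsilon_n(n-\beta)^2=n^{-3}(n-\beta)^2\to0$. The budget term splits as $\lambda\beta\int_{\epsilon_n}^1 Q_\rho(u)\,du+\lambda n\int_0^{\epsilon_n}Q_\rho(u)\,du$; the first piece converges to $\lambda\beta\,\ep[\rho]$, and the second is at most $\lambda n\sqrt{\epsilon_n}\big(\int_0^1 Q_\rho(u)^2\,du\big)^{1/2}=O(n^{-1/2})\to0$ by Cauchy--Schwarz together with $Q_\rho\in L^2([0,1))$. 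The coupling $-\int_{[0,1]}Q_n\,dw$ equals $-\beta\int_{[0,1-\epsilon_n)}dw-n\int_{[1-\epsilon_n,1)}dw-\gamma n$, where the first term is bounded in modulus by $|\beta|\,w(1-)$, the second is nonpositive and may be discarded in an upper bound, and the last is the decisive $-\gamma n$. Finally $\int_{[0,1]}Q_0\,dw$ is a finite constant, bounded by $\|X_0\|_{L^\infty}\,w(1)$, precisely because $X_0\in L^\infty$. Summing, $L(Q_n,w;\beta,\lambda)\le -\gamma n+O(1)\to-\infty$, which yields the claim.

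The one point requiring genuine care is the simultaneous calibration of the spike width $\epsilon_n$: the quadratic cost scales like $\epsilon_n n^2$ and the budget cost like $n\sqrt{\epsilon_n}$, so both are controlled exactly when $\epsilon_n=o(n^{-2})$, whereas the decisive atom term remains only linear in $n$; any $\epsilon_n$ decaying faster than $n^{-2}$ (such as $n^{-3}$) does the job, and there is no tension in meeting these requirements at once. The two structural inputs that must be invoked are $Q_\rho\in L^2$, guaranteeing $\int_0^{\epsilon_n}Q_\rho\to0$ fast enough, and $X_0\in L^\infty$, guaranteeing that the fixed term $\int_{[0,1]}Q_0\,dw$ is finite; the behaviour of $w$ on $[0,1)$ enters only through the crude bounds above, so no delicate analysis of $w$ away from the endpoint is needed.
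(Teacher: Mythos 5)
Your proposal is correct and follows essentially the same route as the paper: the paper also drives $L$ to $-\infty$ with a spike concentrated on $[1-\varepsilon,1)$ (height $\varepsilon^{-1/2}$, base value $0$), so that the atom $w(1)-w(1-)$ contributes $-\bigl(w(1)-w(1-)\bigr)\varepsilon^{-1/2}\to-\infty$ while the quadratic term stays bounded and the budget term vanishes. The only differences are cosmetic calibration choices (your height $n$ with width $n^{-3}$ and a Cauchy--Schwarz bound, versus the paper's monotonicity bound $\int_0^\varepsilon Q_\rho\le\varepsilon Q_\rho(\varepsilon)$), which do not change the substance of the argument.
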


\proof See Appendix \ref{app:proof:w(1)}. \qed.

The next lemma shows that $w$ solves the dual optimization problem only if $w^\prime\in L^2([0,1))$.

\begin{lemma}\label{lma:inner:infinite}
Assume that $X_0\in L^\infty$.
Let $w\in\Wicx$. If $w(1)=w(1-)$ and $w^\prime\notin L^2([0,1))$, then $\inf_{Q\in \Qsc} L(Q,w;\beta,\lambda)=-\infty$. 
\end{lemma}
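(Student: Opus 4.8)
The plan is to evaluate the inner infimum explicitly by exhibiting an admissible minimizing sequence. First I would rewrite the Lagrangian for fixed $w$. Since $w$ is convex on $[0,1)$ it is absolutely continuous on every compact subinterval, so its Lebesgue--Stieltjes measure restricted to $[0,1)$ is $w'(s)\,ds$ with $w'$ increasing; the only possible atom sits at the right end-point and carries mass $w(1)-w(1-)$, which vanishes under the hypothesis $w(1)=w(1-)$. Hence $\int_{[0,1]}Q\,dw=\int_0^1 Q(s)w'(s)\,ds$ for every $Q\in\Qsc$, and, writing $g(s)\trieq\lambda Q_\rho(1-s)-w'(s)$ and $C\trieq\int_0^1 Q_0(s)w'(s)\,ds$ (finite, because $Q_0$ is bounded and $\int_0^1 w'\,ds=w(1-)<\infty$),
\[
L(Q,w;\beta,\lambda)=\int_0^1\big[(Q(s)-\beta)^2+Q(s)g(s)\big]\,ds+C .
\]

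The second step identifies the pointwise minimizer. For each $s$ the map $q\mapsto(q-\beta)^2+qg(s)$ is a parabola minimized at $Q^*(s)\trieq\beta-\tfrac12 g(s)$ with minimal value $\beta^2-Q^*(s)^2$. The crucial observation is that $Q^*$ is automatically increasing: $s\mapsto Q_\rho(1-s)$ is decreasing (as $Q_\rho$ is increasing) and $w'$ is increasing (as $w$ is convex), so for $\lambda>0$ the function $g$ is decreasing and $Q^*=\beta-\tfrac12 g$ is increasing. Thus the monotonicity constraint, which is exactly what makes the problem non-trivial, does not obstruct the unconstrained optimizer. Writing $f(s)\trieq\beta-\tfrac{\lambda}{2}Q_\rho(1-s)$, we have $Q^*=f+\tfrac12 w'$ with $f\in L^2([0,1))$ (the reflection $s\mapsto Q_\rho(1-s)$ is an $L^2$-isometry and $Q_\rho\in L^2$) and $\tfrac12 w'\ge0$; since $w'\notin L^2$ by hypothesis, $Q^*\notin L^2$, i.e. $\int_0^1 Q^*(s)^2\,ds=\infty$.

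Finally I would truncate. For $n\ge1$ let $Q_n$ be the right-continuous modification of $\min(Q^*,n)$; it is increasing, right-continuous and bounded above by $n$, and on $A_n\trieq\{s:Q^*(s)\le n\}$ one has $f\le Q^*\le n$, whence $Q^{*2}\le f^2+n^2$ and $\int_{A_n}Q^{*2}\le\|f\|_{L^2}^2+n^2<\infty$, so $Q_n\in\Qsc$ (replacing $Q^*$ by a right-continuous version alters none of the Lebesgue integrals involved). On $A_n$ the integrand equals its pointwise minimum $\beta^2-Q^{*2}$, while on $\{Q^*>n\}$ it equals $n^2+\beta^2-2nQ^*<\beta^2-n^2$; hence, for $n>|\beta|$,
\[
L(Q_n,w;\beta,\lambda)\le\int_{A_n}\big(\beta^2-Q^*(s)^2\big)\,ds+C\le\beta^2-\int_{A_n}Q^*(s)^2\,ds+C .
\]
Since the sets $A_n$ increase to a set of full measure and $\int_0^1 Q^{*2}=\infty$, monotone convergence gives $\int_{A_n}Q^{*2}\uparrow\infty$, so $L(Q_n,w;\beta,\lambda)\to-\infty$ and therefore $\inf_{Q\in\Qsc}L(Q,w;\beta,\lambda)=-\infty$, as claimed.

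The main obstacle is the monotonicity constraint defining $\Qsc$: a priori there is no reason for the unconstrained minimizer to be a quantile function. What unlocks the argument is the sign structure above---$Q_\rho(1-\cdot)$ decreasing against $w'$ increasing---which forces $Q^*$ to be increasing, so that truncations of $Q^*$ are genuine competitors in $\Qsc$ and drive $L$ to $-\infty$ through the divergent term $-\int Q^{*2}$. The only remaining care is routine bookkeeping: the finiteness of $C$, the square-integrability of each truncation, and the passage to a right-continuous modification.
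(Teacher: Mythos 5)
Your proposal is correct and takes essentially the same approach as the paper: both arguments identify the pointwise unconstrained minimizer $\beta+\tfrac12\bigl(w^\prime(s)-\lambda Q_\rho((1-s)-)\bigr)$, observe that it is automatically increasing but fails to lie in $L^2([0,1))$ precisely because $w^\prime\notin L^2([0,1))$, and then truncate at level $n$ to obtain admissible competitors in $\Qsc$ along which $L\to-\infty$. The only cosmetic difference is that the paper caps $w^\prime$ (taking $f_n=\min\{w^\prime,n\}$ inside the optimizer formula) and bounds $L$ by a chain of inequalities ending in $-\tfrac14\int_0^1 f_n^2(s)\,ds$, whereas you cap the optimizer itself and use the identity that the pointwise minimal value is $\beta^2-Q^*(s)^2$, which makes the bookkeeping marginally cleaner.
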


\proof See Appendix \ref{app:prrof:inner:infinite}. \qed

Let
\begin{align*}
\Wicx_{c,2}\trieq&\{w\in\Wicx \mid w\text{ is continuous and } w^\prime\in L^2([0,1))\}\\
=&\left\{w:[0,1]\to[0,\infty)\;\left|\;
\begin{aligned}
& w \text{ is increasing, continuous, and convex,}\\
& w^\prime\in L^2([0,1)), w(0)=0
 \end{aligned}
 \right.\right\}.
\end{align*}
Then by Lemmas \ref{lma:w(1)} and \ref{lma:inner:infinite}, the dual optimization problem \eqref{opt:dual:w} is equivalent to 
\begin{equation}\label{opt:dual:wc2}
\maxi_{w\in\Wicx_{c,2}}\inf_{Q\in \Qsc} L(Q,w;\beta,\lambda).
\end{equation}

To solve problem \eqref{opt:dual:wc2}, we first consider, for any given $w\in\Wicx_{c,2}$, the inner optimization problem
\begin{equation}\label{opt:dual:Q:inner}
\mini_{Q\in \Qsc} L(Q,w;\beta,\lambda).
\end{equation}
The next lemma completely solves the inner optimization problem for $w\in\Wicx_{c,2}$.

\begin{lemma}\label{lma:inner}
Assume that $X_0\in L^\infty$.
Let $w\in\Wicx_{c,2}$. Then the optimal solution to problem \eqref{opt:dual:Q:inner} is given by
\begin{equation}\label{eq:Q:inner}
Q(s)=\beta+{w^\prime(s)-\lambda Q_\rho((1-s)-)\over2},\quad s\in[0,1).
\end{equation}
\end{lemma}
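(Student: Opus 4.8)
The plan is to reduce problem \eqref{opt:dual:Q:inner} to a pointwise (that is, $s$-by-$s$) minimization of the integrand, to solve that quadratic explicitly by completing the square, and then to check that the resulting candidate is admissible, i.e.\ lies in $\Qsc$. First I would record that, since $w\in\Wicx_{c,2}$ is continuous and convex on $[0,1]$ with $w'\in L^2([0,1))$, the finite measure it induces equals $w'(s)\,ds$ and carries no atom at $1$ (by continuity of $w$ at $1$); hence $\int_{[0,1]}Q(s)\,dw(s)=\int_0^1 Q(s)w'(s)\,ds$ for every $Q\in\Qsc$. The term $\int_{[0,1]}Q_0(s)\,dw(s)=\int_0^1 Q_0(s)w'(s)\,ds$ is finite (as $Q_0\in L^\infty$ and $w'\in L^1([0,1))$) and does not depend on $Q$, so it may be dropped from the minimization.

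After this reduction, minimizing $L(Q,w;\beta,\lambda)$ over $Q\in\Qsc$ is the same as minimizing
\begin{equation*}
J(Q)\trieq\int_0^1\Big[(Q(s)-\beta)^2+\big(\lambda Q_\rho(1-s)-w'(s)\big)Q(s)\Big]\,ds.
\end{equation*}
For each fixed $s$ the integrand is a strictly convex quadratic in the scalar $Q(s)$ (leading coefficient $1$), uniquely minimized at
\begin{equation*}
Q^*(s)=\beta+\frac{w'(s)-\lambda Q_\rho(1-s)}{2},
\end{equation*}
which is \eqref{eq:Q:inner} once $Q_\rho(1-s)$ is replaced by its left-limit version $Q_\rho((1-s)-)$; the two agree for all but countably many $s$, hence $ds$-a.e., so $J$ is unaffected. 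By strict convexity, $J(Q)\ge J(Q^*)$ for every $Q$ for which $J$ is defined, with equality only at $Q^*$; thus $Q^*$ is the unique minimizer \emph{provided} $Q^*\in\Qsc$.

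The remaining, and central, step is to verify $Q^*\in\Qsc$, i.e.\ that the \emph{unconstrained} pointwise minimizer happens to be an admissible quantile function; this is exactly where the structure of the problem enters. Monotonicity is where convexity of $w$ is used: $w'$ is increasing (as $w$ is convex) while $s\mapsto-\lambda Q_\rho((1-s)-)$ is increasing (as $Q_\rho$ is increasing and $\lambda>0$), so their sum, and hence $Q^*$, is increasing. For right-continuity one must use the left-limit form: putting $g(s)=Q_\rho((1-s)-)$, the map $u\mapsto Q_\rho(u-)$ is left-continuous, so as $s\downarrow s_0$, equivalently $1-s\uparrow 1-s_0$, one gets $g(s)\to Q_\rho((1-s_0)-)=g(s_0)$; together with right-continuity of the right-derivative $w'$, this makes $Q^*$ right-continuous. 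Finally $Q^*\in L^2([0,1))$ because $w'\in L^2([0,1))$ and $Q_\rho\in L^2([0,1))$ (as $\rho\in L^2$). Hence $Q^*\in\Qsc$, and by the pointwise comparison it solves \eqref{opt:dual:Q:inner} uniquely. I expect the monotonicity-and-right-continuity verification --- in particular, seeing why the left-limit $Q_\rho((1-s)-)$ is the correct representative --- to be the main obstacle; the completion of the square is routine.
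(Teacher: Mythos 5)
Your proposal is correct and follows essentially the same route as the paper's proof: rewrite $\int_{[0,1]}Q\,dw$ as $\int_0^1 Q(s)w'(s)\,ds$ (no atom at $1$ since $w$ is continuous), minimize the quadratic integrand pointwise via the first-order condition, and observe that the resulting candidate lies in $\Qsc$ and is therefore a true optimizer. The only difference is one of detail: the paper dismisses the admissibility check with ``obviously,'' whereas you spell out the monotonicity, right-continuity (including why $Q_\rho((1-s)-)$ is the correct representative), and square-integrability of the pointwise optimizer, which is a welcome elaboration rather than a different method.
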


\proof 
For every $w\in\Wicx_{c,2}$ and $Q\in\Qsc$, we have
\begin{align*}
 L(Q,w;\beta,\lambda)=&\int_0^1(Q(s)-\beta)^2ds+\lambda\int_0^1Q(s)Q_\rho(1-s)ds\\
&-\int_0^1Q(s)w^\prime(s)ds+\int_0^1Q_0(s)w^\prime(s)ds.
\end{align*}
The pointwise optimizer satisfies the following first-order condition:
$$2(Q(s)-\beta)+\lambda Q_\rho((1-s)-)-w^\prime(s)=0,\quad s\in[0,1),$$
which is equivalent to \eqref{eq:Q:inner}. 
Obviously,  the pointwise optimizer is a quantile function in $\Qsc$ and hence is a true optimal solution to problem \eqref{opt:dual:Q:inner}.\qed
 
For every $w\in\Wicx_{c,2}$,  by Lemma  \ref{lma:inner} and
by plugging \eqref{eq:Q:inner} into $L(Q,w;\beta,\lambda)$, we know that
\begin{align*}
&\min_{Q\in \Qsc} L(Q,w;\beta,\lambda)\\
=&-{1\over4}\int_0^1(w^\prime(s)-\lambda Q_\rho(1-s))^2ds+\int_0^1(Q_0(s)-\beta)w^\prime(s)ds+\lambda\beta\ep[\rho]\\
=&\int_0^1\left(-{1\over4}(w^\prime(s))^2+\left({1\over2}\lambda Q_\rho(1-s)+Q_0(s)-\beta\right)w^\prime(s)\right)ds-{1\over4}\lambda^2\ep[\rho^2]+\lambda\beta\ep[\rho].
\end{align*}
Therefore, the dual optimization problem \eqref{opt:dual:w} reduces to
\begin{equation*}
\maxi_{w\in\Wicx_{c,2}} \int_0^1 \left(-{1\over2}(w^\prime(s))^2-\left(2\beta-\lambda Q_\rho(1-s)-2Q_0(s)\right)w^\prime(s)\right)ds,
\end{equation*}
or, equivalently,
\begin{equation}\label{opt:dual:wc3}
\maxi_{w\in\Wicx_{c,2}} \int_0^1-{1\over2}(w^\prime(s))^2ds-\int_0^1w^\prime(s)dH_{\beta,\lambda}(s),
\end{equation}
where
\begin{equation}\label{eq:def:H}
H_{\beta,\lambda}(s)\trieq \int_0^s(2\beta-\lambda Q_\rho(1-t)-2Q_0(t))dt,\quad s\in[0,1].
\end{equation}
Problem \eqref{opt:dual:wc3} can be reformulated as 
\begin{equation}\label{opt:dual:G}
\begin{split}
&\maxi_{G\in\Qsc} \int_0^1-{1\over2}(G(s))^2ds-\int_0^1G(s)dH_{\beta,\lambda}(s)\\
&\text{subject to }\quad G(s)\ge0, s\in[0,1).
\end{split}
\end{equation}

Problem \eqref{opt:dual:G} is similar to a problem arising in rank-dependent utility maximization. If $H_{\beta,\lambda}$ is concave, then its right derivative
$H_{\beta,\lambda}^\prime$ is decreasing. In this case, the solution
to problem \eqref{opt:dual:G} is given by the pointwise
optimizer $G^*(s)=(-H_{\beta,\lambda}^\prime(s))^+$. In general, $H_{\beta,\lambda}$
is potentially non-concave, and therefore $(-H_{\beta,\lambda}^\prime(s))^+$ is
potentially not increasing, which violates the constraint $G\in\Qsc$. Thus, the problem cannot be solved straightforwardly by 
 pointwise optimization in general.
To overcome such an obstacle, in the context of rank-dependent utility maximization and by calculus of variations, \citet{XiaZhou2016} showed that the solution is given explicitly by the concave envelope. Then \citet{Xu2016} provided the concave envelope relaxation approach to the solution, which is more straightforward. For more applications of concave envelope relaxation, see
\citet{Rogers2009},  \citet{Wei2018}, and \citet{WangXia2021}.

Let $\invbreve H_{\beta,\lambda}$ denotes the concave envelope of $H_{\beta,\lambda}$, that is, $\invbreve
H_{\beta,\lambda}$ is the smallest concave function on $[0,1]$ that is no less
than $H_{\beta,\lambda}$:
$$\invbreve H_{\beta,\lambda}(s)\trieq\inf\{H(s)\,|\, H \mbox{ is concave
and }H\ge H_{\beta,\lambda}\mbox{ on }[0,1]\},\quad s\in[0,1].$$ 
By Lemma \ref{lam:cv:envelope}, 
$$\invbreve H_{\beta,\lambda}(0)=H_{\beta,\lambda}(0)=0,\ \invbreve H_{\beta,\lambda}(1)=H_{\beta,\lambda}(1)=2\beta-\lambda \ep[\rho]-2\ep[X_0],\ \invbreve H_{\beta,\lambda}\ge H,$$
$\invbreve H_{\beta,\lambda}$ is continuous on $[0,1]$, and the right derivative $\invbreve H^\prime_{\beta,\lambda}$ is flat on $[\invbreve H_{\beta,\lambda}>H_{\beta,\lambda}]$.

We replace $H_{\beta,\lambda}$ in \eqref{opt:dual:G} with its concave
envelope $\invbreve H_{\beta,\lambda}$ and consider the problem
\begin{equation}\label{opt:dual:G:hat}
\begin{split}
&\maxi_{G\in\Qsc} \int_0^1-{1\over2}(G(s))^2ds-\int_0^1G(s)d\invbreve H_{\beta,\lambda}(s)\\
&\text{subject to }\quad G(s)\ge0, s\in[0,1).
\end{split}
\end{equation}
We have the following lemma for problems \eqref{opt:dual:G}
and \eqref{opt:dual:G:hat}.

\begin{lemma}\label{lma:hp}
Assume that $X_0\in L^\infty$. Then
for every $\beta\in\Rbb$ and $\lambda>0$, we have that $\invbreve H^\prime_{\beta,\lambda}(1-)>-\infty$. Let
\begin{equation}\label{eq:G^}
G^*_{\beta,\lambda}(s)=\left(-\invbreve H_{\beta,\lambda}^\prime(s)\right)^+,\quad s\in[0,1).
\end{equation}
Then $G^*_{\beta,\lambda}$ is bounded and uniquely solves both problems \eqref{opt:dual:G} and \eqref{opt:dual:G:hat}. Moreover, problems \eqref{opt:dual:G} and \eqref{opt:dual:G:hat} have the same optimal value.
\end{lemma}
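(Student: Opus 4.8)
The plan is to treat \eqref{opt:dual:G:hat} as the genuinely concave problem that can be solved by pointwise optimization, and then to transfer its solution to the possibly non-concave problem \eqref{opt:dual:G} through the concave-envelope identity; throughout, write $J(G)$ and $\hat J(G)$ for the objective functionals of \eqref{opt:dual:G} and \eqref{opt:dual:G:hat}, respectively. First I would record the analytic facts about $H_{\beta,\lambda}$ behind the statement. Since $X_0\in L^\infty$ and $\rho\in L^2$ with $\rho>0$ a.s., the integrand $h(t)\trieq2\beta-\lambda Q_\rho(1-t)-2Q_0(t)$ in \eqref{eq:def:H} lies in $L^2([0,1))$, so $H_{\beta,\lambda}$ is continuous and finite on $[0,1]$ and both objective functionals are finite on the feasible set. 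Crucially, the one-sided derivative at the right endpoint is finite,
$$H_{\beta,\lambda}'(1-)=2\beta-\lambda\,\essinf\rho-2\,\esssup X_0>-\infty,$$
because $Q_\rho(1-s)\to Q_\rho(0+)=\essinf\rho<\infty$ and $Q_0(s)\to Q_0(1)=\esssup X_0<\infty$ as $s\uparrow1$; this is the only place the assumption $X_0\in L^\infty$ is really used. To pass this to the envelope I would split on the structure near $1$, recalling from Lemma \ref{lam:cv:envelope} that $\invbreve H_{\beta,\lambda}\ge H_{\beta,\lambda}$, that they agree at $0$ and $1$, and that $\invbreve H_{\beta,\lambda}'$ is flat on the open set $[\invbreve H_{\beta,\lambda}>H_{\beta,\lambda}]$. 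If $\invbreve H_{\beta,\lambda}>H_{\beta,\lambda}$ on a whole interval $(b,1)$, then $\invbreve H_{\beta,\lambda}$ is affine there and $\invbreve H_{\beta,\lambda}'(1-)$ is its finite slope. Otherwise there are contact points $s_n\uparrow1$ with $\invbreve H_{\beta,\lambda}(s_n)=H_{\beta,\lambda}(s_n)$; since $g\trieq\invbreve H_{\beta,\lambda}-H_{\beta,\lambda}\ge0$ has a minimum at each $s_n$, its right derivative there is nonnegative, giving $\invbreve H_{\beta,\lambda}'(s_n+)\ge h(s_n+)$, and the monotonicity of $\invbreve H_{\beta,\lambda}'$ then yields $\invbreve H_{\beta,\lambda}'(s)\ge H_{\beta,\lambda}'(1-)$ for every $s<1$, whence $\invbreve H_{\beta,\lambda}'(1-)>-\infty$.

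Because $\invbreve H_{\beta,\lambda}'$ is decreasing and right-continuous, $G^*_{\beta,\lambda}=(-\invbreve H_{\beta,\lambda}')^+$ is increasing, right-continuous, nonnegative and bounded above by $(-\invbreve H_{\beta,\lambda}'(1-))^+<\infty$; in particular $G^*_{\beta,\lambda}\in\Qsc$ and is feasible for both problems. I would then solve \eqref{opt:dual:G:hat}. As $\invbreve H_{\beta,\lambda}$ is absolutely continuous (its decreasing derivative is integrable, since $\int_0^s\invbreve H_{\beta,\lambda}'=\invbreve H_{\beta,\lambda}(s)-\invbreve H_{\beta,\lambda}(0)$ is finite), the objective is $\hat J(G)=\int_0^1\big(-\tfrac12G(s)^2-G(s)\invbreve H_{\beta,\lambda}'(s)\big)ds$, which is strictly concave in $G\in L^2([0,1))$. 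For each fixed $s$ the integrand is maximized over $G(s)\ge0$ at $(-\invbreve H_{\beta,\lambda}'(s))^+=G^*_{\beta,\lambda}(s)$; the whole point of passing to the concave envelope is that this pointwise optimizer is already increasing, hence lies in $\Qsc$, so it is the global maximizer, and it is unique by strict concavity.

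Finally I would link the two problems through the identity $\int_0^1G^*_{\beta,\lambda}\,dH_{\beta,\lambda}=\int_0^1G^*_{\beta,\lambda}\,d\invbreve H_{\beta,\lambda}$, which gives $J(G^*_{\beta,\lambda})=\hat J(G^*_{\beta,\lambda})$. Indeed, integrating by parts and using $\invbreve H_{\beta,\lambda}-H_{\beta,\lambda}=0$ at both endpoints gives $\int_0^1G^*_{\beta,\lambda}\,d(\invbreve H_{\beta,\lambda}-H_{\beta,\lambda})=-\int_0^1(\invbreve H_{\beta,\lambda}-H_{\beta,\lambda})\,dG^*_{\beta,\lambda}$, and the right-hand side vanishes because $dG^*_{\beta,\lambda}$ is carried by the contact set $[\invbreve H_{\beta,\lambda}=H_{\beta,\lambda}]$, as $G^*_{\beta,\lambda}$ is constant where $\invbreve H_{\beta,\lambda}>H_{\beta,\lambda}$. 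For an arbitrary feasible $G$ the same integration by parts together with $\invbreve H_{\beta,\lambda}\ge H_{\beta,\lambda}$ and $dG\ge0$ gives $\int_0^1G\,d(\invbreve H_{\beta,\lambda}-H_{\beta,\lambda})=-\int_0^1(\invbreve H_{\beta,\lambda}-H_{\beta,\lambda})\,dG\le0$, i.e. $J(G)\le\hat J(G)$. Chaining these,
$$J(G)\le\hat J(G)\le\hat J(G^*_{\beta,\lambda})=J(G^*_{\beta,\lambda}),$$
shows that $G^*_{\beta,\lambda}$ solves \eqref{opt:dual:G} and that the two problems share the same optimal value; uniqueness follows because equality throughout forces $G$ to be the (unique) maximizer of $\hat J$.

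The step I expect to be the main obstacle is the finiteness $\invbreve H_{\beta,\lambda}'(1-)>-\infty$: boundedness of $G^*_{\beta,\lambda}$, well-posedness of the quadratic objective, and the integration-by-parts manipulations all hinge on controlling the concave envelope at the right endpoint, and this is precisely where the boundedness assumption $X_0\in L^\infty$ is genuinely needed.
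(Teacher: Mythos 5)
Your proof is correct and is essentially the paper's own approach: the paper computes $H^\prime_{\beta,\lambda}(1-)=2\beta-\lambda Q_\rho(0)-2Q_0(1-)>-\infty$, deduces $\invbreve H^\prime_{\beta,\lambda}(1-)>-\infty$ from its appendix result (Lemma \ref{lma:cv:pfinite}, which you re-prove inline via your contact-point argument), and then refers to the standard concave-envelope relaxation literature for precisely the pointwise-optimization and integration-by-parts transfer that you spell out. The only detail worth making explicit is the boundary term at $s=1$ in the integration by parts $\int_0^1 G\,d(\invbreve H_{\beta,\lambda}-H_{\beta,\lambda})=-\int_0^1(\invbreve H_{\beta,\lambda}-H_{\beta,\lambda})\,dG$ when the feasible $G$ is unbounded: it vanishes because $G\in L^2$ increasing forces $G(s)\sqrt{1-s}\to 0$, while concavity of $\invbreve H_{\beta,\lambda}$ at $1$ and $H^\prime_{\beta,\lambda}\in L^2$ give $(\invbreve H_{\beta,\lambda}-H_{\beta,\lambda})(s)=O(\sqrt{1-s})$ as $s\uparrow 1$.
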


\proof Obviously, $H^\prime_{\beta,\lambda}(1-)=2\beta-\lambda Q_\rho(0)-2 Q_0(1-)>-\infty$. Then by Lemma \ref{lma:cv:pfinite}, we have  $\invbreve H^\prime_{\beta,\lambda}(1-)>-\infty$, which implies that $G^*_{\beta,\lambda}$ is bounded. The proof of the rest part is standard in the concave envelope relaxation literature; see, e.g., \citet{Xu2016}, \citet{Wei2018}, and \citet[Lemma 5.5]{WangXia2021}. 
\qed

From Lemma \ref{lma:hp}, we have the following theorem, which provides the solution to the dual optimization problem in closed form.

\begin{theorem}\label{thm:w*}
Assume that $X_0\in L^\infty$. Then for every $\beta\in\Rbb$ and $\lambda>0$,  the dual optimization problem \eqref{opt:dual:w} has a unique solution  $w^*_{\beta,\lambda}$, which satisfies 
$$(w^*_{\beta,\lambda})^\prime(s)=\left(-\invbreve H_{\beta,\lambda}^\prime(s)\right)^+,\quad s\in[0,1),$$
where $H_{\beta,\lambda}$ is defined in \eqref{eq:def:H} and $\invbreve H_{\beta,\lambda}$ is the concave envelope of $H_{\beta,\lambda}$. Moreover, $(w^*_{\beta,\lambda})^\prime$ is bounded and $w^*_{\beta,\lambda}\in\Wicx_{c,2}$.
\end{theorem}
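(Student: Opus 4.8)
The plan is to assemble the chain of reductions developed just before the theorem and then to read off the optimizer from Lemma \ref{lma:hp}. Concretely, I would show that the dual problem \eqref{opt:dual:w} is successively equivalent to \eqref{opt:dual:wc2}, to \eqref{opt:dual:wc3}, and finally to \eqref{opt:dual:G}, and that under the change of variable $G=w'$ the unique maximizer $G^*_{\beta,\lambda}$ of \eqref{opt:dual:G} supplied by Lemma \ref{lma:hp} corresponds to a unique maximizer $w^*_{\beta,\lambda}$ of \eqref{opt:dual:w}. The substantive analytic content---the possible non-concavity of $H_{\beta,\lambda}$ and its treatment by the concave envelope---is already packaged in Lemma \ref{lma:hp}, so the remaining work is to justify the equivalences and to verify that optimality and uniqueness transfer cleanly across them.

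First I would note that by Lemmas \ref{lma:w(1)} and \ref{lma:inner:infinite}, any $w\in\Wicx$ for which the inner infimum $\inf_{Q\in\Qsc}L(Q,w;\beta,\lambda)$ is finite must lie in $\Wicx_{c,2}$. Since Lemma \ref{lma:hp} guarantees that \eqref{opt:dual:G} has a finite optimal value, the supremum in \eqref{opt:dual:w} is finite, so every maximizer lies in $\Wicx_{c,2}$; this gives the equivalence of \eqref{opt:dual:w} and \eqref{opt:dual:wc2}. Next, for $w\in\Wicx_{c,2}$, Lemma \ref{lma:inner} identifies the inner optimizer \eqref{eq:Q:inner} explicitly, and substituting it into $L$ (the computation displayed before \eqref{opt:dual:wc3}) reduces \eqref{opt:dual:wc2} to the calculus-of-variations problem \eqref{opt:dual:wc3}, which is expressed solely through $w'$.

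The key step is the change of variable $G=w'$. I would verify that $w\mapsto w'$ is a bijection from $\Wicx_{c,2}$ onto $\{G\in\Qsc:G\ge0\}$: if $w\in\Wicx_{c,2}$, then its right derivative $G=w'$ is nonnegative (as $w$ is increasing), increasing and right-continuous (as $w$ is convex), and square integrable by definition, so $G\in\Qsc$ with $G\ge0$; conversely, given such a $G$, the function $w(s)=\int_0^s G(t)\,dt$ lies in $\Wicx_{c,2}$ and has $w'=G$. The inverse relation $w(s)=\int_0^s w'(t)\,dt$, which makes the map injective and guarantees $w$ is recovered from $w'$, follows from the absolute continuity on $[0,1]$ of a continuous convex function whose derivative lies in $L^2([0,1))\subset L^1([0,1))$. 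Under this identification the objective in \eqref{opt:dual:wc3} equals that in \eqref{opt:dual:G}, so the two problems are equivalent.

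Finally I would invoke Lemma \ref{lma:hp}: the unique maximizer of \eqref{opt:dual:G} is the bounded function $G^*_{\beta,\lambda}(s)=(-\invbreve H_{\beta,\lambda}^\prime(s))^+$ of \eqref{eq:G^}, with $\invbreve H_{\beta,\lambda}^\prime(1-)>-\infty$. Transporting back through the bijection, $w^*_{\beta,\lambda}(s)=\int_0^s G^*_{\beta,\lambda}(t)\,dt$ is the unique maximizer of \eqref{opt:dual:w}, it satisfies $(w^*_{\beta,\lambda})'=G^*_{\beta,\lambda}=(-\invbreve H_{\beta,\lambda}^\prime)^+$, and since $G^*_{\beta,\lambda}$ is bounded we obtain $(w^*_{\beta,\lambda})'$ bounded and $w^*_{\beta,\lambda}\in\Wicx_{c,2}$. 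The only genuinely delicate point in this plan is the bijection step---specifically confirming that $w$ is reconstructed from $w'$ by integration---since everything about the non-concavity of $H_{\beta,\lambda}$ has already been absorbed into Lemma \ref{lma:hp}.
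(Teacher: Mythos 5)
Your proposal is correct and takes essentially the same route as the paper: there the theorem is stated as an immediate consequence of the reduction chain \eqref{opt:dual:w} $\to$ \eqref{opt:dual:wc2} (via Lemmas \ref{lma:w(1)} and \ref{lma:inner:infinite}) $\to$ \eqref{opt:dual:wc3} (via Lemma \ref{lma:inner}) $\to$ \eqref{opt:dual:G}, combined with Lemma \ref{lma:hp}, which is exactly the chain you assemble. Your explicit verification that $w\mapsto w^\prime$ is a bijection between $\Wicx_{c,2}$ and $\{G\in\Qsc \mid G\ge0\}$, with $w$ recovered by integration, is a detail the paper leaves implicit, and it is argued correctly.
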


\section{Primal Optimizer}\label{sec:po}

The next theorem provides, for every $(\beta,\lambda)$, the optimal solution to problem \eqref{opt:lambda} in closed form, which is an immediate consequence of combining Theorem \ref{thm:w*}, Lemma \ref{lma:inner}, and Proposition \ref{prop:saddle:L}.

\begin{theorem}\label{thm:Q*}
Assume that $X_0\in L^\infty$. For every $\beta\in\Rbb$ and $\lambda>0$,  let 
$$Q^*_{\beta,\lambda}(s)\trieq \beta+{\left(-\invbreve H_{\beta,\lambda}^\prime(s)\right)^+-\lambda Q_\rho((1-s)-)\over 2},\quad s\in[0,1),$$
where $H_{\beta,\lambda}$ is defined in \eqref{eq:def:H} and $\invbreve H_{\beta,\lambda}$ is the concave envelope of $H_{\beta,\lambda}$. Then $Q^*_{\beta,\lambda}$  is the unique solution to problem \eqref{opt:lambda}. 
\end{theorem}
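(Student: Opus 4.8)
The plan is to obtain the optimizer by reading off the $Q$-component of a saddle point, using the explicit dual optimizer from Theorem \ref{thm:w*} together with the closed-form inner minimizer from Lemma \ref{lma:inner}. The groundwork is already laid in the discussion preceding Section \ref{sec:dual:o}: problem \eqref{opt:lambda} admits a unique optimal solution, which I denote $Q^*_{\beta,\lambda}$, and by Proposition \ref{prop:saddle:L} there is a companion $w^*_{\beta,\lambda}\in\Wicx$ such that $(Q^*_{\beta,\lambda},w^*_{\beta,\lambda})$ is a saddle point of $L(\cdot\,,\cdot\,;\beta,\lambda)$, this $w^*_{\beta,\lambda}$ being an optimal solution of the dual problem \eqref{opt:dual:w}. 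Since existence and uniqueness of the primal optimizer are thus already in hand, the only remaining task is to exhibit $Q^*_{\beta,\lambda}$ in closed form.

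First I would apply Theorem \ref{thm:w*} to identify the multiplier. That theorem asserts that the dual problem \eqref{opt:dual:w} has a \emph{unique} solution, lying in $\Wicx_{c,2}$, with derivative $\left(-\invbreve H_{\beta,\lambda}^\prime(s)\right)^+$. Because the $w^*_{\beta,\lambda}$ above is a dual optimizer, uniqueness forces it to be exactly this function; in particular $w^*_{\beta,\lambda}\in\Wicx_{c,2}$ and $(w^*_{\beta,\lambda})^\prime(s)=\left(-\invbreve H_{\beta,\lambda}^\prime(s)\right)^+$.

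Next I would exploit the saddle-point property in the $Q$-direction: by definition $Q^*_{\beta,\lambda}$ minimizes $Q\mapsto L(Q,w^*_{\beta,\lambda};\beta,\lambda)$ over $\Qsc$, i.e.\ it solves the inner problem \eqref{opt:dual:Q:inner} with $w=w^*_{\beta,\lambda}$. Since $w^*_{\beta,\lambda}\in\Wicx_{c,2}$, Lemma \ref{lma:inner} gives this minimizer in closed form as $\beta+\tfrac12\bigl((w^*_{\beta,\lambda})^\prime(s)-\lambda Q_\rho((1-s)-)\bigr)$. Substituting the derivative formula from the previous step reproduces verbatim the expression for $Q^*_{\beta,\lambda}$ in the statement, and the already-noted uniqueness of the solution to \eqref{opt:lambda} finishes the argument.

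As the author indicates, this is essentially an assembly of prior results, so I do not anticipate a genuine obstacle. The single point deserving care is the identification in the second paragraph: the multiplier furnished abstractly by Proposition \ref{prop:saddle:L} lives a priori only in $\Wicx$, whereas Lemma \ref{lma:inner} requires membership in $\Wicx_{c,2}$; it is precisely the uniqueness of the dual solution in Theorem \ref{thm:w*} that bridges this gap and licenses the use of Lemma \ref{lma:inner}. Everything after that is a one-line substitution.
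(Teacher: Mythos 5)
Your proposal is correct and takes essentially the same route as the paper, which presents Theorem \ref{thm:Q*} precisely as ``an immediate consequence of combining Theorem \ref{thm:w*}, Lemma \ref{lma:inner}, and Proposition \ref{prop:saddle:L}'': the unique primal optimizer's saddle-point companion is a dual optimizer, the uniqueness in Theorem \ref{thm:w*} identifies that companion as the element of $\Wicx_{c,2}$ with derivative $\left(-\invbreve H_{\beta,\lambda}^\prime\right)^+$, and Lemma \ref{lma:inner} then yields the closed form. Your closing remark correctly isolates the one point of care, namely that the multiplier from Proposition \ref{prop:saddle:L} lives a priori only in $\Wicx$ and it is the dual uniqueness that places it in $\Wicx_{c,2}$ so that Lemma \ref{lma:inner} applies.
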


Recall that, for any $\beta$ and $\lambda$, 
$$H_{\beta,\lambda}(s)=2\beta s-N_\lambda(s),\quad s\in[0,1],$$
where 
$$N_\lambda(s)\trieq\int_0^s(\lambda Q_\rho(1-t)+2Q_0(t))dt,\quad s\in[0,1].$$
Let $\breve  N_\lambda$ denote the convex envelope of $N_\lambda$. Then
$$\invbreve H_{\beta,\lambda}(s)=2\beta s-\breve N_\lambda(s), \quad s\in[0,1]$$
and
$$\invbreve H^\prime_{\beta,\lambda}=2\beta-\breve N^\prime_\lambda.$$
Therefore, 
\begin{equation}\label{eq:Q*N}
Q^*_{\beta,\lambda}(s)= \beta+{\left(\breve N_\lambda^\prime(s)-2\beta\right)^+-\lambda Q_\rho((1-s)-)\over 2},\quad s\in[0,1),
\end{equation}
By Theorem \ref{thm:Q*},
\begin{align*}
 v(\beta,\lambda)=&\int_0^1(Q^*_{\beta,\lambda}(s)-\beta)^2ds+\lambda\int_0^1Q^*_{\beta,\lambda}(s)Q_\rho(1-s)ds\\
 =&{1\over4}\int_0^1\left(\left(\breve N_\lambda^\prime(s)-2\beta\right)^+\right)^2ds+\lambda\beta\ep[\rho]-{\lambda^2\over4}\ep[\rho^2].
\end{align*}
The function $x\mapsto (x^+)^2$ is convex and continuously differentiable. So is $v$ with respect to $\beta$ and 
\begin{equation}\label{eq:hbeta}
{\partial v(\beta,\lambda)\over\partial\beta}=\lambda\ep[\rho]-\int_0^1\left(\breve N_\lambda^\prime(s)-2\beta\right)^+ds\trieq h(\beta,\lambda).
\end{equation}

Finally, we can find the variance-minimal quantile function according to the following two steps.

\begin{description} 
\item[(i)] For any fixed $\lambda>0$, consider the problem 
$$\mini_{\beta\in\Rbb}v(\beta,\lambda).$$
The minimizer $\beta_\lambda$ is determined by $h(\beta_\lambda,\lambda)=0$, where $h$ is given by \eqref{eq:hbeta}. 
Let $Q^*_\lambda\trieq Q^*_{\beta_\lambda,\lambda}$. Then $Q^*_\lambda$  solves problem \eqref{opt:var:lambda}/\eqref{opt:var:lambda:beta}.

\item[(ii)] 
Let 
\begin{align}\label{eq:Xclambda}
\Xc(\lambda)\trieq\int_0^1Q^*_\lambda(s) Q_\rho(1-s)ds,\quad \lambda\in(0,\infty).
\end{align}
By Proposition \ref{prop:existence}, 
$\Xc$ is decreasing on $(0,\infty)$. Moreover, a combination of Theorem \ref{thm:unique}, Proposition \ref{prop:existence} and Remark \ref{rmk: lambdao} implies that, for any $x<Q_0(1)\ep[\rho]$, $\Xc(\lambda^\circ)=x$ for some $\lambda^\circ\in(0,\infty)$.
Therefore, $\Xc$ is continuous on $(0,\infty)$, $\lim_{\lambda\downarrow0}\Xc(\lambda)=Q_0(1)\ep[\rho]$, and 
$\lim_{\lambda\uparrow\infty}\Xc(\lambda)=-\infty$.
The monotonicity and continuity of $\Xc$ makes it easy to search for the desired Lagrange multiplier $\lambda^\circ$ for any given budget level $x<Q_0(1)\ep[\rho]$ by solving the equation $\Xc(\lambda)=x$. 
Once $\lambda^\circ$ has been determined, then $Q^\circ=Q^*_{\lambda^\circ}$ is the desired variance-minimal quantile function in $\Qicx(x,Q_0)$.

\end{description}

\section{Special Case}\label{sec:special}

\subsection{The Classical Case: Constant Benchmark}

Now we consider the classical case when $X_0=z$ a.s. for a constant $z>{x\over\ep[\rho]}$.  In this case,
$$N_\lambda(s)=\lambda \int_0^s Q_\rho(1-s)ds+2zs,\quad s\in[0,1].$$
Obviously, $N_\lambda$ is concave and hence 
$$\breve N^\prime\equiv N_\lambda(1)=\lambda \ep[\rho]+2z.$$
Then
$$h(\beta,\lambda)=
\begin{cases}
2\beta-2z &\text{if }2\beta< \lambda\ep[\rho]+2z,\\
\lambda\ep[\rho] & \text{if }2\beta\ge \lambda\ep[\rho]+2z.
\end{cases}
$$
For any $\lambda>0$, we have $\beta_\lambda=z$ and hence 
$$Q^*_\lambda(s)=z+{\lambda\over2}\ep[\rho]-{\lambda\over2}Q_\rho((1-s)-),\quad s\in[0,1).$$
It is easy to see that
$$\Xc(\lambda)=z\ep[\rho]-{\lambda\over2}\var[\rho].$$
Therefore, ${\lambda^\circ\over2}={z\ep[\rho]-x\over\var[\rho]}$ and hence
$$Q^\circ(s)=z+{z\ep[\rho]-x\over \var[\rho]}\ep[\rho]-{z\ep[\rho]-x\over \var[\rho]}Q_\rho((1-s)-),\quad s\in[0,1).$$

\subsection{Two-Point Distributed Benchmark}

We now consider the special case when $X_0$ is discretely distributed:
\begin{equation}\label{eq:X0:ab}
\Pbb(X_0=a)=p=1-\Pbb(X_0=b),
\end{equation}
where $a\le b$ and $p\in(0,1)$.
In this case,
$$Q_0(s)=a\id_{s<p}+b\id_{s\ge p},\quad s\in[0,1).$$
In Assumption \ref{ass:X0x}, condition $Q_0(1)\ep[\rho]>x$ amounts to $b\ep[\rho]>x$. 

We will use the following notation:
$$A_1\trieq {1\over p}\int_0^p Q_\rho(1-s)ds, \quad A_2\trieq {1\over 1-p}\int_p^1Q_\rho(1-s)ds.$$
Obviously, $A_1>A_2$. 

The next proposition explicitly provides the variance-minimal quantile function. 

\begin{proposition}\label{prop:2point}
For a benchmark $X_0$ that satisfies \eqref{eq:X0:ab}, the variance-minimal quantile function $Q^\circ$ is given as follows.

\begin{description}
\item[(a)] If $x \le \ep[X_0]\ep[\rho]-{b-a\over A_1-A_2}\var[\rho]$, then
$$Q^\circ(s)=\ep[X_0]+{\ep[X_0]\ep[\rho]-x\over\var[\rho]}\ep[\rho]-{\ep[X_0]\ep[\rho]-x\over\var[\rho]}Q_\rho((1-s)-),\quad s\in[0,1).$$

\item[(b)] If $\ep[X_0]\ep[\rho]-{b-a\over A_1-A_2}\var[\rho]< x
\le b\ep[\rho]-{b-a\over A_1}\left((1-p)\ep[\rho^2]+p^2A_1^2-(1-p)^2A_2^2\right)$, then
$$Q^\circ(s)=
\begin{cases}
a+{ap A_1+b(1-p)A_2-x\over\ep[\rho^2]-pA_1^2-(1-p)A_2^2}A_1-{ap A_1+b(1-p)A_2-x\over\ep[\rho^2]-pA_1^2-(1-p)A_2^2}
Q_\rho((1-s)-)&\text{if }s\in[0,p),\\
b+{ap A_1+b(1-p)A_2-x\over\ep[\rho^2]-pA_1^2-(1-p)A_2^2}A_2-{ap A_1+b(1-p)A_2-x\over\ep[\rho^2]-pA_1^2-(1-p)A_2^2}
Q_\rho((1-s)-)&\text{if }s\in[p,1).\\
\end{cases}
$$

\item[(c)] If $b\ep[\rho]-{b-a\over A_1}\left((1-p)\ep[\rho^2]+p^2A_1^2-(1-p)^2A_2^2\right)<x <b\ep\rho]$, then
$$Q^\circ(s)=
\begin{cases}
b-{p(b\ep[\rho]-x)\over(1-p)\ep[\rho^2]+p^2A_1^2-(1-p)^2A_2^2}A_1
-{(1-p)(b\ep[\rho]-x)\over(1-p)\ep[\rho^2]+p^2A_1^2-(1-p)^2A_2^2}Q_\rho((1-s)-)\\
\hskip5cm \text{if }s\in[0,p),\\
b+{(1-p)(b\ep[\rho]-x)\over(1-p)\ep[\rho^2]+p^2A_1^2-(1-p)^2A_2^2}A_2
-{(1-p)(b\ep[\rho]-x)\over(1-p)\ep[\rho^2]+p^2A_1^2-(1-p)^2A_2^2}Q_\rho((1-s)-)\\
\hskip5cm \text{if }s\in[p,1).\\
\end{cases}
$$
\end{description} 
\end{proposition}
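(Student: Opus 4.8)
The plan is to make the two-step recipe of Section \ref{sec:po} fully explicit for the benchmark \eqref{eq:X0:ab}: for each $\lambda>0$ I would determine the convex envelope $\breve N_\lambda$ of $N_\lambda$, read off the optimal $\beta_\lambda$ from $h(\beta_\lambda,\lambda)=0$ in \eqref{eq:hbeta}, substitute into \eqref{eq:Q*N} to get $Q^*_\lambda$, and finally invert the budget relation $\Xc(\lambda^\circ)=x$. Since Theorem \ref{thm:Q*} and step (ii) of Section \ref{sec:po} already guarantee that $Q^\circ=Q^*_{\lambda^\circ}$ is \emph{the} variance-minimal quantile function, nothing beyond this bookkeeping is needed.

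First I would analyze $\breve N_\lambda$. With $Q_0(s)=a\id_{s<p}+b\id_{s\ge p}$ one has $N_\lambda^\prime(s)=\lambda Q_\rho(1-s)+2a$ on $[0,p)$ and $\lambda Q_\rho(1-s)+2b$ on $[p,1)$. As $Q_\rho(1-\cdot)$ is decreasing, $N_\lambda$ is concave on each of $[0,p]$ and $[p,1]$ with an upward slope jump of size $2(b-a)$ at $p$, i.e.\ concave with a single convex kink. Hence its convex envelope is piecewise linear with contact points only in $\{0,p,1\}$: it is either the single chord from $(0,0)$ to $(1,N_\lambda(1))$ (kink above the chord) or the two chords meeting at $(p,N_\lambda(p))$ (kink below). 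The chord slopes compute to $m_1=N_\lambda(p)/p=\lambda A_1+2a$ and $m_2=(N_\lambda(1)-N_\lambda(p))/(1-p)=\lambda A_2+2b$, and the two regimes are separated exactly by $m_1\lessgtr m_2$, i.e.\ by $\lambda\lessgtr\frac{2(b-a)}{A_1-A_2}$ (using $A_1>A_2$).

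In the regime $\lambda>\frac{2(b-a)}{A_1-A_2}$ the envelope is linear with $\breve N_\lambda^\prime\equiv N_\lambda(1)=\lambda\ep[\rho]+2\ep[X_0]$, which is precisely the constant-benchmark situation with $z=\ep[X_0]$; this gives $\beta_\lambda=\ep[X_0]$, $\Xc(\lambda)=\ep[X_0]\ep[\rho]-\frac{\lambda}{2}\var[\rho]$, and case (a). In the regime $\lambda\le\frac{2(b-a)}{A_1-A_2}$ the envelope has the two slopes $m_1,m_2$, so \eqref{eq:hbeta} reads $h(\beta,\lambda)=\lambda\ep[\rho]-p(m_1-2\beta)^+-(1-p)(m_2-2\beta)^+$. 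Since $\lambda\ep[\rho]>0$ forces $2\beta_\lambda\le m_2$, solving $h(\beta_\lambda,\lambda)=0$ splits according to the sign of $2\beta_\lambda-m_1$: if both terms are active ($2\beta_\lambda\le m_1$, equivalently $\lambda\ge\frac{2(1-p)(b-a)}{A_1}$) then $\beta_\lambda=\ep[X_0]$, and \eqref{eq:Q*N} yields the two-piece formula of case (b) with $\kappa=\lambda/2$; if only the second term is active ($m_1<2\beta_\lambda\le m_2$, equivalently $\lambda<\frac{2(1-p)(b-a)}{A_1}$) then $2\beta_\lambda=m_2-\frac{\lambda\ep[\rho]}{1-p}$, and \eqref{eq:Q*N} yields the two-piece formula of case (c).

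It then remains to convert the three $\lambda$-intervals into the stated $x$-intervals. I would compute $\Xc(\lambda)=\int_0^1 Q^*_\lambda(s)Q_\rho(1-s)\,ds$ in each regime, which is affine in $\lambda$, and solve $\Xc(\lambda^\circ)=x$ to express $\lambda^\circ$ (hence $\kappa$ and $\mu$) through $x$; the monotonicity and continuity of $\Xc$ from step (ii) of Section \ref{sec:po}, with the bijection $\lambda\leftrightarrow x$ valid for $x<b\ep[\rho]=Q_0(1)\ep[\rho]$, ensure that the thresholds $\lambda=\frac{2(b-a)}{A_1-A_2}$ and $\lambda=\frac{2(1-p)(b-a)}{A_1}$ map onto the boundary values of $x$ in (a)--(c). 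I expect the main effort to be twofold: justifying the exact form of $\breve N_\lambda$ (concavity of each piece and that contact occurs only at $\{0,p,1\}$), and the algebraic verification that the boundary $x$-values obtained from the two regimes agree, which repeatedly invokes $\ep[\rho]=pA_1+(1-p)A_2$; in particular the nontrivial identity is that the case (b)/(c) threshold $b\ep[\rho]-\frac{(1-p)(b-a)}{A_1}(\ep[\rho^2]-pA_1^2-(1-p)A_2^2)$ coming from $\Xc$ coincides with the displayed $b\ep[\rho]-\frac{b-a}{A_1}((1-p)\ep[\rho^2]+p^2A_1^2-(1-p)^2A_2^2)$.
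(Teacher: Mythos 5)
Your plan follows the paper's own proof essentially line by line: the same computation of $N'_\lambda$ and of the two chord slopes $m_1=\lambda A_1+2a$, $m_2=\lambda A_2+2b$, the same split at $\lambda=\frac{2(b-a)}{A_1-A_2}$ (single chord versus two-slope envelope), the same solution of $h(\beta_\lambda,\lambda)=0$ with the further split at $\lambda=\frac{2(1-p)(b-a)}{A_1}$, and the same regime-by-regime inversion of the affine map $\Xc(\lambda)$. The one flaw is in your closing consistency check: the two expressions you assert to be equal are not equal; they differ by $pA_1(b-a)$. The value of the middle-regime $\Xc$ at the boundary $\lambda=\frac{2(1-p)(b-a)}{A_1}$ is
\[
apA_1+b(1-p)A_2-\frac{(1-p)(b-a)}{A_1}\left(\ep[\rho^2]-pA_1^2-(1-p)A_2^2\right),
\]
whose leading term is $apA_1+b(1-p)A_2=b\ep[\rho]-pA_1(b-a)$, not $b\ep[\rho]$ as you wrote. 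After that substitution the expression does simplify to the displayed threshold $b\ep[\rho]-\frac{b-a}{A_1}\left((1-p)\ep[\rho^2]+p^2A_1^2-(1-p)^2A_2^2\right)$, so the boundary values of the two regimes match and your argument goes through once this algebraic slip is corrected.
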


\proof See Appendix \ref{app:prop;2point}. \qed

We can reformulate Proposition \ref{prop:2point} as follows.

\begin{proposition}\label{prop:2point:X}
For a benchmark $X_0$ that satisfies \eqref{eq:X0:ab}, the variance-minimal payoff $X^\circ$ is given as follows.

\begin{description}
\item[(a)] If $x \le \ep[X_0]\ep[\rho]-{b-a\over A_1-A_2}\var[\rho]$, then
$$X^\circ=\ep[X_0]+{\ep[X_0]\ep[\rho]-x\over\var[\rho]}\ep[\rho]-{\ep[X_0]\ep[\rho]-x\over\var[\rho]}\rho\text{ a.s.}$$

\item[(b)] If $\ep[X_0]\ep[\rho]-{b-a\over A_1-A_2}\var[\rho]< x
\le b\ep[\rho]-{b-a\over A_1}\left((1-p)\ep[\rho^2]+p^2A_1^2-(1-p)^2A_2^2\right)$, then
$$X^\circ=
\begin{cases}
a+{ap A_1+b(1-p)A_2-x\over\ep[\rho^2]-pA_1^2-(1-p)A_2^2}A_1-{ap A_1+b(1-p)A_2-x\over\ep[\rho^2]-pA_1^2-(1-p)A_2^2}\rho&\text{if }\rho>Q_\rho((1-p)-),\\
b+{ap A_1+b(1-p)A_2-x\over\ep[\rho^2]-pA_1^2-(1-p)A_2^2}A_2-{ap A_1+b(1-p)A_2-x\over\ep[\rho^2]-pA_1^2-(1-p)A_2^2}
\rho&\text{if }\rho\le Q_\rho((1-p)-).\\
\end{cases}
$$

\item[(c)] If $b\ep[\rho]-{b-a\over A_1}\left((1-p)\ep[\rho^2]+p^2A_1^2-(1-p)^2A_2^2\right)<x <b\ep\rho]$, then
$$X^\circ=
\begin{cases}
b-{p(b\ep[\rho]-x)\over(1-p)\ep[\rho^2]+p^2A_1^2-(1-p)^2A_2^2}A_1
-{(1-p)(b\ep[\rho]-x)\over(1-p)\ep[\rho^2]+p^2A_1^2-(1-p)^2A_2^2}\rho\\
\hskip5cm \text{if }\rho> Q_\rho((1-p)-),\\
b+{(1-p)(b\ep[\rho]-x)\over(1-p)\ep[\rho^2]+p^2A_1^2-(1-p)^2A_2^2}A_2
-{(1-p)(b\ep[\rho]-x)\over(1-p)\ep[\rho^2]+p^2A_1^2-(1-p)^2A_2^2}\rho\\
\hskip5cm \text{if }\rho\le Q_\rho((1-p)-).\\
\end{cases}
$$
\end{description} 
\end{proposition}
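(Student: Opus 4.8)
The plan is to deduce Proposition~\ref{prop:2point:X} directly from Proposition~\ref{prop:2point} by transporting the variance-minimal quantile function into a payoff through the correspondence of Proposition~\ref{prop:X:Q}. Fix $\xi\in\Xi$, so that $\xi$ is uniformly distributed on $(0,1)$ and $\rho=Q_\rho(\xi)$ a.s. Let $Q^\circ$ be the variance-minimal quantile function exhibited in Proposition~\ref{prop:2point}. By Proposition~\ref{prop:X:Q}, the random variable $X^\circ\trieq Q^\circ(1-\xi)$ is variance-minimal in $\Xicx(x,X_0)$. The whole task is therefore to rewrite $Q^\circ(1-\xi)$ as an explicit function of $\rho$ and to check that the result matches, case by case, the formula claimed.

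The first step is to dispose of the $Q_\rho$-term. Substituting $s=1-\xi$ turns every occurrence of $Q_\rho((1-s)-)$ in the formulas of Proposition~\ref{prop:2point} into $Q_\rho(\xi-)$. Since $\xi$ is uniform on $(0,1)$ and $Q_\rho$ is monotone, hence has at most countably many discontinuities, the event $\{Q_\rho(\xi-)\neq Q_\rho(\xi)\}$ is null; combined with $\rho=Q_\rho(\xi)$ a.s.\ this gives $Q_\rho((1-s)-)=\rho$ a.s. In case~(a) there is a single formula with no branch, so this substitution alone immediately reproduces the stated expression for $X^\circ$, with $\ep[X_0]+\frac{\ep[X_0]\ep[\rho]-x}{\var[\rho]}\ep[\rho]$ as the constant part and $-\frac{\ep[X_0]\ep[\rho]-x}{\var[\rho]}\rho$ in place of the $Q_\rho$-term.

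The substantive step, and the one I expect to be the main obstacle, is translating the branch condition in cases~(b) and~(c). In the quantile picture the two pieces are separated at $s=p$, that is at $1-\xi=p$, equivalently at $\xi>1-p$ versus $\xi\le 1-p$; in the payoff picture they must be separated by $\rho>Q_\rho((1-p)-)$ versus $\rho\le Q_\rho((1-p)-)$. The bridge is the comonotonicity $\rho=Q_\rho(\xi)$: because $Q_\rho$ is increasing and right-continuous one has $\{\rho>Q_\rho((1-p)-)\}\subseteq\{\xi>1-p\}$ and $\{\xi<1-p\}\subseteq\{\rho\le Q_\rho((1-p)-)\}$, so up to the null event $\{\xi=1-p\}$ the two partitions coincide \emph{provided} $\rho$ carries no atom at the level $Q_\rho((1-p)-)$, and then the branches match term by term after replacing $Q_\rho((1-s)-)$ by $\rho$. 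The delicate point is precisely the behaviour at such an atom, equivalently where $Q_\rho$ is flat at the value $Q_\rho((1-p)-)$: there the two partitions can disagree on a set of positive probability, so the identification of the split point requires separate and careful treatment before one may conclude that the assignment adopted in the statement is the correct one. Granting that the partition identity holds a.s., substituting $\rho$ for $Q_\rho((1-s)-)$ in each branch of Proposition~\ref{prop:2point}(b)--(c) reproduces the displayed formulas, and Proposition~\ref{prop:X:Q} then certifies that the resulting $X^\circ$ is variance-minimal in $\Xicx(x,X_0)$, completing the proof.
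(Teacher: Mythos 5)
Your proposal takes exactly the route the paper intends: Proposition \ref{prop:2point:X} is stated there without any proof, as the immediate reformulation of Proposition \ref{prop:2point} obtained by writing $X^\circ=Q^\circ(1-\xi)$ with $\xi\in\Xi$ (Proposition \ref{prop:X:Q}) and replacing $Q_\rho(\xi-)$ by $\rho=Q_\rho(\xi)$ off the null set where $\xi$ hits a discontinuity of $Q_\rho$. Your handling of the null sets and of the inclusions $\{\rho>Q_\rho((1-p)-)\}\subseteq\{\xi>1-p\}$ and $\{\xi<1-p\}\subseteq\{\rho\le Q_\rho((1-p)-)\}$ is correct, and case (a) is indeed immediate.

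The delicate point you flag and then assume away, however, is not something a ``separate and careful treatment'' could settle in favour of the displayed formulas: when it occurs, the statement itself fails, so your conditional conclusion is in fact the strongest available. Concretely, suppose $Q_\rho$ is continuous at $1-p$ and constant, equal to $v\trieq Q_\rho((1-p)-)$, on $[1-p,1-p+\delta]$ for some $\delta>0$ (e.g.\ $\rho$ discrete with $1-p$ interior to a flat piece of $Q_\rho$), and let $x$ lie in the range of case (b), so that the common coefficient $c\trieq{apA_1+b(1-p)A_2-x\over\ep[\rho^2]-pA_1^2-(1-p)A_2^2}$ satisfies $0<c<{b-a\over A_1-A_2}$. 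On the event $\{\xi\in(1-p,1-p+\delta]\}$, of probability $\delta$, the genuine variance-minimal payoff $Q^\circ(1-\xi)$ equals $a+cA_1-cv$ (first branch of Proposition \ref{prop:2point}, since $1-\xi<p$), whereas the displayed $X^\circ$ equals $b+cA_2-cv$ (second branch, since $\rho=v\le Q_\rho((1-p)-)$), which exceeds it by $(b-a)-c(A_1-A_2)>0$; elsewhere the two coincide a.s. Since the budget binds for $Q^\circ(1-\xi)$ in case (b) (here $\lambda^\circ>0$, so complementary slackness gives $\int_0^1Q^\circ(s)Q_\rho(1-s)ds=x$), it follows that
\begin{equation*}
\ep[\rho X^\circ]=x+v\,\delta\,\bigl((b-a)-c(A_1-A_2)\bigr)>x,
\end{equation*}
so the displayed $X^\circ$ is not even in $\Xicx(x,X_0)$, hence not variance-minimal. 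Indeed, when the flat piece extends to both sides of $1-p$, no $\sigma(\rho)$-measurable payoff is variance-minimal at all: optimality forces equality in the Hardy--Littlewood inequality, i.e.\ the countermonotone coupling of $Q^\circ$ with $\rho$, which requires splitting the atom $\{\rho=v\}$ between the two branches. The proposition is therefore implicitly restricted to SDFs for which $Q_\rho$ has no flat piece at level $Q_\rho((1-p)-)$ extending to the right of $1-p$ (e.g.\ atomless $\rho$, as the log-normal SDF of Example \ref{exm:Xo}); under that proviso your argument is a complete proof, and your refusal to claim more without further scrutiny is exactly right.
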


\begin{example}\label{exm:Xo}
The SDF $\rho$ is log-normal: 
$\log\rho\sim N(\mu,\sigma^2)$ with $\mu=-0.1$ and $\sigma=0.34$. The initial capital $x=1.0$.
\begin{description}
\item[(a)] The benchmark payoff $X_0$ satisfies $\Pbb(X_0=1.1-\delta)=\Pbb(X_0=1.1+\delta)=0.5$ with $\delta=0.10, 0.15,0.20, 0.40$. The variance-minimal payoffs $X^\circ$ v.s. SDF $\rho$ are plotted in Figure \ref{fig:xrho}. 
\item[(b)] The benchmark payoff $X_0$ satisfies $\Pbb(X_0=\alpha-0.30)=\Pbb(X_0=\alpha+0.30)=0.5$ with $\alpha=0.7432, 1.0, 1.15,1.20$. The variance-minimal payoffs $X^\circ$ v.s. SDF $\rho$ are plotted in Figure \ref{fig:xrhoalpha}. Here, $\alpha=0.7432$ is the solution of $Q_0(1)\ep[\rho]=(\alpha+0.30)\ep[\rho]=x$; in this case, $X^\circ$ is a constant. 
\end{description}
\end{example}

\begin{figure}[!ht]
\centering
\includegraphics[scale=0.3]{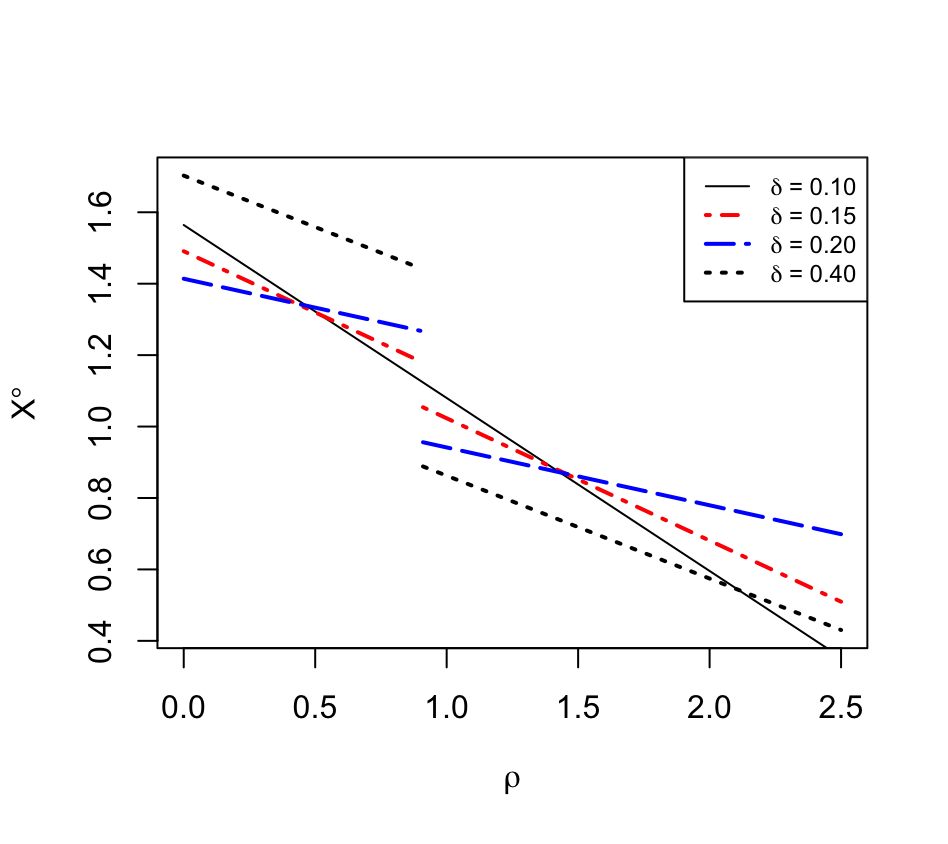}
\caption{\small
Variance-minimal payoffs v.s. SDF
}\label{fig:xrho}
\end{figure}

\begin{figure}[!ht]
\centering
\includegraphics[scale=0.3]{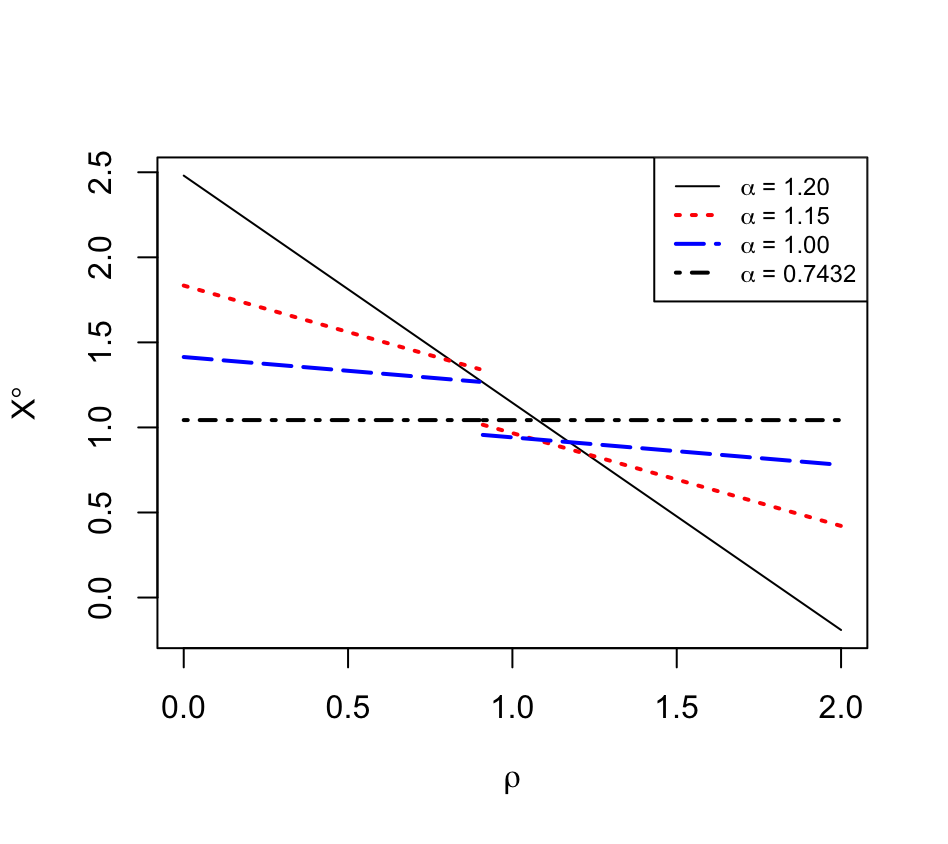}
\caption{\small
Variance-minimal payoffs v.s. SDF
}\label{fig:xrhoalpha}
\end{figure}

\section{Discussions}\label{sec:disc}

\subsection{Beating-Performance-Variance Efficient Payoffs}\label{sec:PV}

Consider a given benchmark payoff $X_0\in L^\infty$. For any payoff $X\in L^2$, its performance of beating $X_0$ is\footnote{
The benchmark-beating performance is similar to but different from the SSD-based risk measure of  \citet{Fabian2011}, which is also called benchmark-based expected shortfall as a special case of adjusted expected shortfall in \citet{BMW2022}. 
Given a benchmark payoff $X_0$, the SSD-based risk measure or the benchmark-adjusted expected shortfall of a payoff $X$ is $\Rc(X)\trieq \inf\{m\in\Rbb\mid X-m\icv X_0\}$.
}
$$\psi(X)\trieq\sup\{m\in\Rbb\mid X-m\icx X_0\}\quad\text{with }\sup\emptyset=-\infty.$$ 
Obviously, the beating performance $\psi$ satisfies the following properties.
\begin{itemize} 
\item Monotonicity:  $\psi(X)\ge\psi(Y)$ if $X\ge Y$ a.s.
\item Translation Invariance: $\psi(X+c)=\psi(X)+c$ for all $X\in L^2$ and $c\in\Rbb$.
\item Law Invariance: $\psi(X)=\psi(Y)$ if $X$ and $Y$ are identically distributed. 
\end{itemize}
Moreover,  $\psi$ has the following representation:
\begin{align*}
\psi(X)=\inf_{t\in(0,1)}\left({1\over t}\int_{1-t}^1 (Q_X(s)-Q_0(s))ds\right),\quad X\in L^2.
\end{align*}
Clearly, $\psi(X)\in[-\infty,\infty)$ for all $X\in L^2$ and
$$\psi(c)=c-Q_0(1)\quad\text{for all } c\in\Rbb.$$
It is easy to see that, for all $X\in L^2$ and $z\in\Rbb$,
$$\psi(X)\ge z\Longleftrightarrow X\icx X_0+z.$$
In particular, $\psi(X)\ge0$ if and only if $X\icx X_0$. 

For any budget level $x\in\Rbb$, let
$$\Xsc(x)\trieq\{X\in L^2\mid \ep[\rho X]\le x\}.$$

\begin{definition}
A payoff $X\in \Xsc(x)$
is called beating-performance-variance (BPV) efficient in $\Xsc(x)$ if there is no $Y\in\Xsc(x)$
such that 
\[
\psi(Y)\geq\psi(X)\quad\text{and}\quad\var[Y]\leq\var[X]
\]
 with at least one inequality holding strictly. 
\end{definition}

\begin{remark} If $X_0=0$, then $\psi(X)=\ep[X]$ for all $X\in L^2$. In this case, the BPV efficiency reduces to the mean-variance efficiency. 
\end{remark}

\begin{proposition}\label{prop:pv}
Assume $X_0\in L^\infty$. Then
$X^*$ is  BPV efficient in $\Xsc(x)$ if and only if $X^*$ is variance-minimal in $\Xicx(x,X_0+z)$  for some $z\ge {x\over\ep[\rho]}-Q_0(1)$.
\end{proposition}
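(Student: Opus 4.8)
The plan is to prove the equivalence by exploiting the translation-invariance of the beating performance $\psi$ and the characterization $\psi(X)\ge z \Leftrightarrow X\icx X_0+z$ established just above the statement. First I would unwind the definition of BPV efficiency: $X^*\in\Xsc(x)$ is BPV efficient iff no competitor $Y\in\Xsc(x)$ can weakly improve on both $\psi$ and $\var$ with at least one strict improvement. Since $\Xsc(x)$ is exactly the budget set $\{X\mid\ep[\rho X]\le x\}$, efficiency is a statement about the Pareto frontier of the bicriteria problem (maximize $\psi$, minimize $\var$) over this budget set.

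The key observation is that the performance level $\psi(X^*)$ of an efficient point plays the role of the ICX-order constraint's right-hand side. So I would set $z\trieq\psi(X^*)$ and aim to show that $X^*$ is variance-minimal in $\Xicx(x,X_0+z)$. For the forward direction (BPV efficient $\Rightarrow$ variance-minimal), note that by the representation $\psi(X^*)\ge z$ we have $X^*\icx X_0+z$, so $X^*\in\Xicx(x,X_0+z)$; and if some feasible $Y$ had strictly smaller variance while still satisfying $Y\icx X_0+z$ (i.e. $\psi(Y)\ge z=\psi(X^*)$), that $Y$ would dominate $X^*$ in the BPV sense, contradicting efficiency. The reverse direction (variance-minimal $\Rightarrow$ BPV efficient) runs similarly: if $X^*$ is variance-minimal in $\Xicx(x,X_0+z)$ but fails to be BPV efficient, there is a $Y$ weakly improving both criteria. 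If $Y$ improves variance weakly and $\psi(Y)\ge\psi(X^*)\ge z$, then $Y$ is also feasible for the variance-minimization problem, so its variance cannot be strictly smaller; the remaining case is that $Y$ has equal variance but strictly larger $\psi$, which I would handle by a small translation argument.

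The main obstacle I anticipate is pinning down the admissible range $z\ge \frac{x}{\ep[\rho]}-Q_0(1)$ and handling the boundary carefully, since $\psi$ can take the value $-\infty$ and the trivial regime (Proposition \ref{prop:trivial}) intervenes when $Q_0(1)+z\le \frac{x}{\ep[\rho]}$, i.e. exactly when $z\le\frac{x}{\ep[\rho]}-Q_0(1)$. The lower bound on $z$ is precisely the threshold separating the nontrivial variance-minimization problem (Assumption \ref{ass:X0x}, which here reads $(Q_0(1)+z)\ep[\rho]>x$) from the trivial one where the constant payoff $\frac{x}{\ep[\rho]}$ is already feasible. I would argue that any BPV efficient $X^*$ must have $\psi(X^*)\ge\frac{x}{\ep[\rho]}-Q_0(1)$, because the constant payoff $c=\frac{x}{\ep[\rho]}$ lies in $\Xsc(x)$ with zero variance and $\psi(c)=\frac{x}{\ep[\rho]}-Q_0(1)$; a point with strictly smaller $\psi$ and positive variance would be BPV-dominated by this constant.

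The delicate subcase is when variances are tied but beating performances differ strictly, which can occur only for constant payoffs (since $\var=0$ forces $X$ constant); here I would use translation invariance $\psi(X+c)=\psi(X)+c$ to convert a performance gap into a feasible strict improvement, or equivalently invoke that among constants the budget-binding one uniquely maximizes $\psi$. Modulo this boundary bookkeeping, the proof is essentially a dictionary between the scalar constraint $z$ and the performance level, so I expect no deep technical difficulty beyond the careful case analysis at $z=\frac{x}{\ep[\rho]}-Q_0(1)$.
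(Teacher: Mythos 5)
Your forward direction matches the paper's argument: set $z=\psi(X^*)$, use the constant payoff ${x\over\ep[\rho]}$ (zero variance, beating performance ${x\over\ep[\rho]}-Q_0(1)$) to force $z\ge {x\over\ep[\rho]}-Q_0(1)$, then contradict efficiency with any strictly-lower-variance element of $\Xicx(x,X_0+z)$. The genuine gap is in the reverse direction, in exactly the case you flag and then dismiss: a competitor $Y\in\Xsc(x)$ with $\var[Y]=\var[X^*]$ and $\psi(Y)>\psi(X^*)$. Your claim that this ``can occur only for constant payoffs (since $\var=0$ forces $X$ constant)'' is a non sequitur: tied variances need not be zero. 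In the nontrivial regime $z>{x\over\ep[\rho]}-Q_0(1)$, Theorem \ref{thm:unique} gives $\var[X^*]>0$, and nothing a priori excludes a nonconstant $Y$ with the same positive variance and strictly larger performance. The paper kills this case with the \emph{uniqueness} assertion of Theorem \ref{thm:unique}: if $Y\in\Xsc(x)$ and $\psi(Y)\ge\psi(X^*)\ge z$, then $Y\in\Xicx(x,X_0+z)$, so $\var[Y]\le\var[X^*]$ forces $Y=X^*$ a.s., hence $\psi(Y)=\psi(X^*)$ and no strict improvement exists. Your proposal never invokes uniqueness, and your case analysis leaves the equal-positive-variance competitors unaddressed.

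Moreover, the repair you sketch---``a small translation argument''---does not suffice as stated: replacing $Y$ by $Y-\alpha$ for small $\alpha>0$ (so that $\psi(Y-\alpha)\ge z$ still holds) leaves the variance unchanged, so it creates budget slack but no contradiction with the variance-minimality of $X^*$. To close the gap along your line, without appealing to uniqueness, you need a second step of mixing with a constant: taking $c\ge Q_0(1)+z$ and small $\varepsilon>0$, the payoff $X^\varepsilon=(1-\varepsilon)(Y-\alpha)+\varepsilon c$ still satisfies $X^\varepsilon\icx X_0+z$ (mixing with a constant acts affinely on quantile functions), satisfies $\ep[\rho X^\varepsilon]\le x$ for small $\varepsilon$ thanks to the slack $\alpha\ep[\rho]$, and has $\var[X^\varepsilon]=(1-\varepsilon)^2\var[Y]<\var[X^*]$, the desired contradiction. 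This translation-plus-scaling device is precisely what the paper uses in the proof of Proposition \ref{prop:psiz}. Either invoking Theorem \ref{thm:unique} or carrying out this two-step argument is needed; the boundary case $z={x\over\ep[\rho]}-Q_0(1)$, where Proposition \ref{prop:trivial} does force all relevant payoffs to be constants, you handle correctly and essentially as in the paper's case (a).
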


\proof Assume that $X^*$ is  BPV efficient in $\Xsc(x)$. Let $z=\psi(X^*)$. By ${x\over\ep[\rho]}\in\Xsc(x)$ and $\var\left[{x\over\ep[\rho]}\right]=0\le\var(X^*)$ and by the BPV efficiency of $X^*$, we have 
$$z=\psi(X^*)\ge\psi\left({x\over\ep[\rho]}\right)={x\over\ep[\rho]}-Q_0(1).$$ 
Now we show that $X^*$ is variance-minimal in $\Xicx(x,X_0+z)$.  
Suppose, on the contrary, that there exists some $Y\in\Xicx(x,X_0+z)$ such that $\var(Y)<\var(X^*)$. 
Then $\psi(Y)\ge z=\psi(X^*)$ and $Y\in\Xsc(x)$, which contradicts the BPV efficiency of $X^*$. 

Conversely, assume that $X^*$ is variance-minimal in $\Xicx(x,X_0+z)$  for some $z\ge {x\over\ep[\rho]}-Q_0(1)$. We are going to show 
$X^*$ is BPV efficient in $\Xsc(x)$. The discussion is divided into two cases.
\begin{description}
\item[(a)] Assume $z={x\over\ep[\rho]}-Q_0(1)$, i.e., 
$(Q_0(1)+z)\ep[\rho]=x$. In this case, Proposition \ref{prop:trivial} implies that $\var[X^*]=0$ and hence $X^*=c$ a.s. for some $c\in\Xicx(x,X_0+z)$. Then, 
$c\ep[\rho]\le x$ and $c\icx X_0+z$. Therefore, $c\ge Q_0(1)+z$ and hence $c\ep[\rho]\ge (Q_0(1)+z)\ep[\rho]=x$,
implying $c\ep[\rho]=x$. Suppose on the contrary that $X^*$ is not BPV efficient in $\Xsc(x)$, i.e.,  there exists some $Y\in\Xsc(x)$ such that 
$$\var(Y)\le\var(X^*)=0\quad\text{and}\quad \psi(Y)\ge\psi(X^*)$$
with at least one inequality holds strictly.  Then $Y=c_0$ a.s. for some $c_0\in\Rbb$ and $\psi(c_0)>\psi(X^*)=\psi(c)$.
Therefore, $c_0>c$ and hence $c_0\ep[\rho]>c\ep[\rho]=x$, which is impossible since $Y=c_0\in\Xsc(x)$. Thus, $X^*$ is BPV efficient in $\Xsc(x)$.

\item[(b)] Assume $z>{x\over\ep[\rho]}-Q_0(1)$, i.e., 
$(Q_0(1)+z)\ep[\rho]>x$. 
Let $Y\in\Xsc(x)$ and $\psi(Y)\ge\psi(X^*)$. 
Then 
$\psi(Y)\ge\psi(X^*)\ge z$ 
and hence $Y\in\Xicx(x,X_0+z)$. By the uniqueness of the variance-minimal payoff (Theorem \ref{thm:unique}), $\var(Y)>\var(X^*)$ unless $Y=X^*$ a.s.  Thus, $X^*$ is BPV efficient in $\Xsc(x)$.
\qed
\end{description}

\begin{proposition}\label{prop:psiz}
Assume $X_0\in L^\infty$.  Let $z\ge {x\over\ep[\rho]}-Q_0(1)$. If 
$X^*$ is variance-minimal in $\Xicx(x,X_0+z)$, then $\psi(X^*)=z$.
\end{proposition}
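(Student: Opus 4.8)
The plan is to prove the two inequalities $\psi(X^*)\ge z$ and $\psi(X^*)\le z$ separately, the first being immediate and the second requiring a short case analysis. For the lower bound I would simply invoke the equivalence $\psi(X)\ge z\Longleftrightarrow X\icx X_0+z$ recorded just before the statement: since $X^*$ is variance-minimal in $\Xicx(x,X_0+z)$, it is in particular feasible, so $X^*\icx X_0+z$, whence $\psi(X^*)\ge z$.

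For the upper bound I would split according to whether Assumption \ref{ass:X0x} holds for the shifted benchmark $X_0+z$, i.e. according to the sign of $(Q_0(1)+z)\ep[\rho]-x$; since $Q_{X_0+z}(1)=Q_0(1)+z$, the boundary case is exactly $z={x\over\ep[\rho]}-Q_0(1)$. In that boundary case $(Q_0(1)+z)\ep[\rho]=x$, and Proposition \ref{prop:trivial} applied to the benchmark $X_0+z$ tells us that the variance-minimal quantile functions are the constants lying in $\left[Q_0(1)+z,{x\over\ep[\rho]}\right]$, an interval that here collapses to the single point $Q_0(1)+z={x\over\ep[\rho]}$. Hence $X^*=Q_0(1)+z$ a.s., and the identity $\psi(c)=c-Q_0(1)$ yields $\psi(X^*)=z$ outright.

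In the generic case $(Q_0(1)+z)\ep[\rho]>x$, Assumption \ref{ass:X0x} holds for $X_0+z$, so Theorem \ref{thm:unique} supplies a unique variance-minimal quantile function. Here I would argue by contradiction: suppose $\psi(X^*)>z$, so there is some $z'>z$ with $X^*\icx X_0+z'$. Set $Y\trieq X^*-(z'-z)$. Then $Y\icx X_0+z$, while $\ep[\rho Y]=\ep[\rho X^*]-(z'-z)\ep[\rho]<x$ and $\var[Y]=\var[X^*]$ by translation invariance of variance; thus $Y$ is again variance-minimal in $\Xicx(x,X_0+z)$. By Proposition \ref{prop:X:Q} both $Q_{X^*}$ and $Q_Y=Q_{X^*}-(z'-z)$ are then variance-minimal quantile functions, and since $z'-z>0$ they are distinct, contradicting the uniqueness of Theorem \ref{thm:unique}. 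Therefore $\psi(X^*)\le z$, and combining with the lower bound gives $\psi(X^*)=z$.

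The step I expect to be most delicate is understanding why the generic downshift argument cannot be recycled in the boundary case, forcing the case split. When $X^*$ is the constant $Q_0(1)+z$, a constant $c$ beats $X_0+z$ only if $c\ge Q_0(1)+z$, so the ICX constraint is tight with no slack to absorb a downward shift; subtracting any positive constant destroys $Y\icx X_0+z$, and no contradiction can be manufactured this way. It is precisely this degeneracy, handled through Proposition \ref{prop:trivial}, that makes the separate treatment necessary. A secondary point to watch is that the uniqueness in Theorem \ref{thm:unique} is stated for quantile functions, which is why the contradiction in the generic case is drawn at the level of $Q_{X^*}$ and $Q_Y$ (via Proposition \ref{prop:X:Q}) rather than at the level of payoffs, where variance-minimizers need not be unique.
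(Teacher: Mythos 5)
Your proof is correct, and its skeleton---splitting at the boundary case $z={x\over\ep[\rho]}-Q_0(1)$, getting $\psi(X^*)\ge z$ from feasibility, and proving $\psi(X^*)\le z$ by contradiction---matches the paper's; the boundary case is handled substantively identically in both (Proposition \ref{prop:trivial} forces $X^*$ to be the constant ${x\over\ep[\rho]}$, the paper routing this through the proof of Proposition \ref{prop:pv}). The key contradiction step, however, is genuinely different. You turn the beating slack $z'-z>0$ into a pure translation $Y=X^*-(z'-z)$, obtaining a \emph{second} variance-minimal payoff, and then pass to quantile functions via Proposition \ref{prop:X:Q} to contradict the \emph{uniqueness} assertion of Theorem \ref{thm:unique} (your closing remark on why the contradiction must be drawn at the quantile level, not the payoff level, is exactly right). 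The paper instead uses the slack $\alpha>0$ with $X^*-\alpha\icx X_0+z$ to form the mixture $X^\varepsilon=(1-\varepsilon)(X^*-\alpha)+\varepsilon(Q_0(1)+z)$, which for small $\varepsilon$ stays feasible (the budget slack $\alpha\ep[\rho]$ absorbs the cost of the constant, and ICX dominance is preserved under this convex combination of quantile functions) and has $\var[X^\varepsilon]=(1-\varepsilon)^2\var[X^*]<\var[X^*]$ since $\var[X^*]>0$; this contradicts variance-minimality directly. Thus the paper needs only the positivity part $v^\circ(x)>0$ of Theorem \ref{thm:unique}, while you need its uniqueness part. Your route is shorter and avoids the $\varepsilon$-perturbation entirely; the paper's argument buys robustness, since it would survive in settings where the minimizer is not unique, so long as the minimal variance is strictly positive.
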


\proof Consider the case when $z={x\over\ep[\rho]}-Q_0(1)$.  By the proof of Proposition \ref{prop:pv}, we know $X^*={x\over \ep[\rho]}$ a.s. and hence $\psi(X^*)={x\over \ep[\rho]}-Q_0(1)=z$.

Now we consider the case when $z>{x\over\ep[\rho]}-Q_0(1)$, i.e., 
$(Q_0(1)+z)\ep[\rho]>x$. In this case, $\var[X^*]>0$ by Theorem \ref{thm:unique}. 
Obviously, $\psi(X^*)\ge z$ by $X^*\icx X_0+z$. It is left to show $\psi(X^*)\le z$. 
Suppose on the contrary that $\psi(X^*)> z$. Then there exists some $\alpha>0$ such that
$X^*-\alpha\icx X_0+z$. Therefore, 
$$X^\varepsilon\trieq(1-\varepsilon)(X^*-\alpha)+\varepsilon(Q_0(1)+z)\in\Xicx(x,X_0+z)$$
for all sufficiently small $\varepsilon>0$. But $\var[X^\varepsilon]=(1-\varepsilon)^2\var[X^*]<\var[X^*]$, contradicting the variance-minimality of $X^*$. Therefore, $\psi(X^*)=z$. \qed

Based on Propositions \ref{prop:pv} and \ref{prop:psiz}, the \textit{beating-performance-standard-deviation (BPSD) efficient frontier} is 
$$\left\{\Big(\sqrt{\var(X^*(z)},z\Big)\,\left|\, z\ge  {x\over\ep[\rho]}-Q_0(1)\right.\right\},$$
where, for every $z$,  $X^*(z)$ is the variance-minimal payoff in $\Xicx(x,X_0+z)$.

\begin{example}
The SDF $\rho$ and the initial capital $x$ are same to Example \ref{exm:Xo}. 
The benchmark payoff $X_0$ satisfies $\Pbb(X_0=-\delta)=\Pbb(X_0=\delta)=0.5$ with $\delta=0.0,0.2, 0.5$.
The BPSD efficient frontiers are plotted in Figure \ref{fig:frontier}. In particular,  $\delta=0.0$ refers to the classical mean-standard-deviation efficient frontier, which is a straight line. 
\end{example}

\begin{figure}[!ht]
\centering
\includegraphics[scale=0.3]{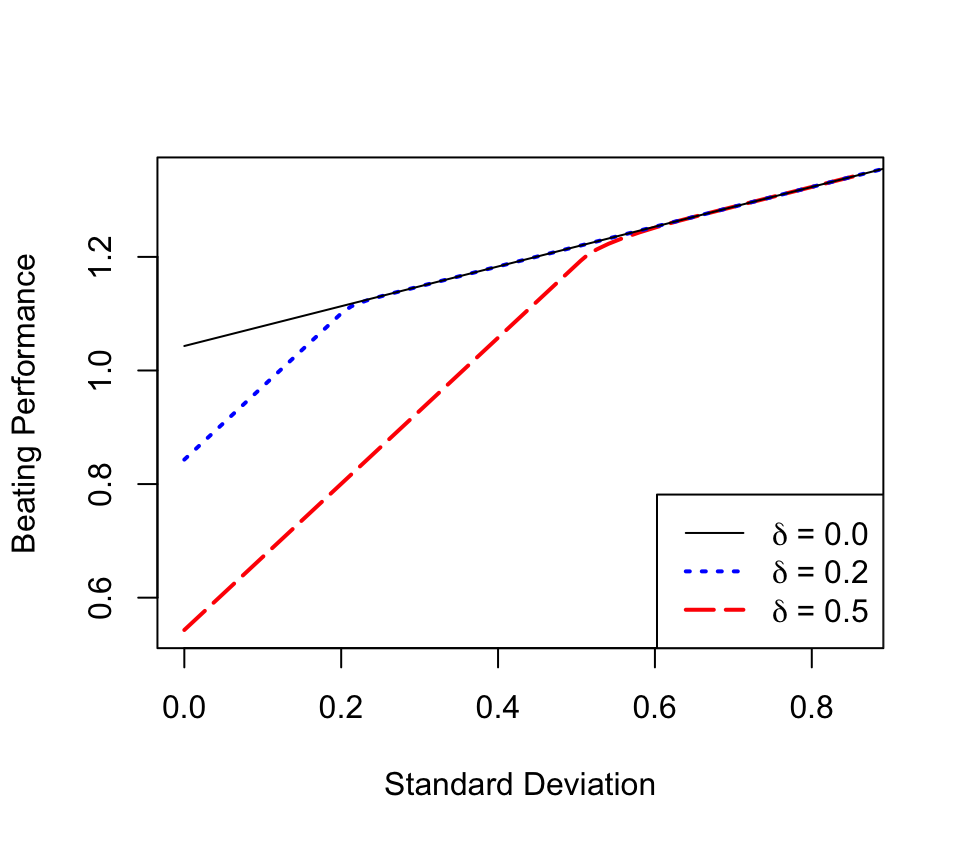}
\caption{\small
BPSD frontiers
}\label{fig:frontier}
\end{figure}

\subsection{Multi-Benchmark Beating}\label{sec:multi} 

In the previous discussion, we considered the problem with beating only one benchmark.
Now we consider the problem with beating multiple benchmarks. It turns out that the multi-benchmark case can be reduced to the single-benchmark case. Actually, in the quantile formulation, we can write beating 
multiple benchmarks as 
\begin{equation*}
\int_t^1 Q(s)ds\geq \int_t^1 Q_j(s)ds\trieq f_j(t) \quad\mbox{for all } t\in[0,1] \mbox{ and } 1\le j\le k,\\
\end{equation*}
where $Q_j\in\Qsc$ are bounded for all $1\le j\le k$.
This system of $k$ constraints is equivalent to
$$\int_t^1 Q(s)ds\geq \max\left\{\int_t^1Q_1(s)ds,\ldots,\int_t^1 Q_k(s)ds\right\}\trieq g(t) \quad\mbox{for all } t\in[0,1].$$

Obviously, $f_j\le g$, and $f_j(1)=g(1)=0$ for all $j$. There exists a sequence $\{t_n\}_{n\ge1}\subset(0,1)$ such that $\lim_{n\to\infty}t_n=1$ and
$$\lim_{n\to\infty}{g(t_n)-g(1)\over t_n-1}=\liminf_{t\to 1}{g(t)-g(1)\over t-1}.$$
There exists some $j_0$ such that $g(t_n)=f_{j_0}(t_n)$ for infinitely many $n$. In this case,
\begin{align*}
\lim_{n\to\infty}{g(t_n)-g(1)\over t_n-1}\ge\liminf_{n\to\infty}{f_{j_0}(t_n)-f_{j_0}(1)\over t_n-1}=f^\prime_{j_0}(1-)=-Q_{j_0}(1-).
\end{align*}
Then $$\liminf_{t\to 1}{g(t)-g(1)\over t-1}\ge-Q_{j_0}(1-)>-\infty.$$ 
Similarly, $\limsup_{t\to 0}{g(t)-g(0)\over t}<\infty$.

Let $\invbreve g$ be the concave envelope of $g$. By Lemma \ref{lma:cv:pfinite}, 
$\infty>\invbreve g^\prime(0)\ge \invbreve g^\prime(1-)>-\infty$. Therefore, $-\invbreve g^\prime$ is bounded and increasing. Let 
$$\bar Q(t)=-\invbreve g^\prime(t),\quad t\in[0,1).$$ 
Then $\bar Q$ is the quantile function of some $\bar X\in L^\infty$ and $\invbreve g(t)=\int_t^1\bar Q(s)ds$ for all $t\in[0,1]$.

For any $Q\in\Qsc$, let $f(t)=\int_t^1Q(s)ds$, $t\in[0,1]$. Obviously, $f$ is a concave function. By the definition of concave envelope, we know that
$$f\ge g\Longleftrightarrow f\ge \invbreve g.$$
On the other hand,
$$Q\icx Q_j,\; \forall j \Longleftrightarrow f\ge g$$
and
$$Q\icx \bar Q \Longleftrightarrow f\ge \invbreve g.$$
Therefore,
$$Q\icx Q_j,\; \forall j \Longleftrightarrow Q\icx\bar Q.$$
This shows that the multi-benchmark case can be reduced to the single-benchmark case.

\subsection{Mean-Variance Efficient Payoff}

Now we consider the following problem of mean-variance portfolio selection with the increasing convex order constraint:
\begin{equation}\label{opt:mv:Q:z}
\begin{split}
&\mini_{Q\in\Qicx(x,Q_0)}\quad \var[Q]\\
&\text{ subject to }\quad \ep[Q]\ge z,
\end{split}
\end{equation}  
where $z\in\Rbb$. 
Such a mean-variance problem can be covered by the analysis in the previous sections. 

Actually, in the case when $z\le \ep[Q^\circ]$ for some variance-minimal quantile function $Q^\circ\in\Qicx(x,Q_0)$,
 $Q^\circ$ solves problem \eqref{opt:mv:Q:z}. 
 
 Now assume that $z> \ep[Q^\circ]$ for all variance-minimal quantile function $Q^\circ\in\Qicx(x,Q_0)$. In this case, 
by the convexity of variance and by the method of Lagrangian multiplier,  $Q^*$ solves problem \eqref{opt:mv:Q:z} if and only if, for some $\gamma^*\ge0$, $Q^*$ solves the following problem
\begin{equation}\label{opt:gamma}
\mini_{Q\in\Qicx(x,Q_0)}\quad \var[Q]-2\gamma^*\ep[Q]
\end{equation}
and 
$$\gamma^*(\ep[Q^*]-z)=0.$$
It is easy to see that $\gamma^*=0$ is impossible. Then $\gamma^*>0$ and hence $\ep[Q^*]=z$. In this case, problem \eqref{opt:mv:Q:z} is equivalent to
\begin{equation}\label{opt:mv:Q:z=}
\begin{split}
&\mini_{Q\in\Qicx(x,Q_0)}\quad \var[Q]=\int_0^1Q^2(s)ds-z^2\\
&\text{ subject to }\quad \ep[Q]= z,
\end{split}
\end{equation}  
which is further equivalent to
\begin{equation}\label{opt:mv:Q:z=:gamma}
\begin{split}
\mini_{Q\in\Qicx(x,Q_0)}\quad \int_0^1Q^2(s)ds-2\gamma^*\int_0^1Q(s)ds.
\end{split}
\end{equation}  
By the method of Lagrangian multiplier once again, problem \eqref{opt:mv:Q:z=:gamma} can be transformed to, for some $\lambda\ge0$,
\begin{equation}\label{opt:mv:Q:z=:gamma:lambda}
\begin{split}
&\mini_{Q\in\Qsc}\quad \int_0^1Q^2(s)ds-2\gamma^*\int_0^1Q(s)ds+\lambda\int_0^1Q(s)Q_\rho(1-s)ds\\
&\text{subject to }\quad Q\icx Q_0.
\end{split}
\end{equation} 
It is just problem \eqref{opt:lambda} with $\beta=\gamma^*$.

As Remark \ref{rmk:z} indicates,  we can also read $\ep[Q]\ge z$ as $Q\icx z$ and hence the mean-variance problem \eqref{opt:mv:Q:z} is equivalent to a variance minimizing problem with two benchmarks: $X_0$ and $z$. 
Then it can be reduced to a variance minimizing problem with one benchmark, as discussed in Section \ref{sec:multi}.

\newpage
\begin{appendices}

\section{Some Technical Results}

The results in this section may be unoriginal but are provided here for convenience nevertheless.

\subsection{Saddle Point}

The following lemma is essentially an abstract version of the proof of \citet[Proposition 5.3]{WangXia2021}. 
It is provided here for the convenience of its applications, not only in the proof of Proposition \ref{prop:saddle:L1} here but also elsewhere.

\begin{lemma}\label{lma:fgh}
Let $X$ be a nonempty set and $Y_1$ be a nonempty subset of a linear space.  Let $Y$ be the cone generated by $Y_1$, that is, $Y=\{\mu y\mid \mu\ge0,\; y\in Y_1\}$.
Consider a function $f: X\times Y\to\Rbb$. Let functions $g:X\times Y_1\times [0,\infty)\to\Rbb$ and $h: X\times[0,\infty)\to\Rbb$ be given by
\begin{align*}
&g(x,y,\mu)=f(x,\mu y),\quad x\in X, y\in Y_1, \mu\ge0,\\
&h(x,\mu)=\sup_{y\in Y_1}g(x,y,\mu)=\sup_{y\in Y_1}f(x,\mu y),\quad x\in X,\mu\ge0.
\end{align*}
Assume that both of the following two conditions are satisfied.
\begin{description}
\item[(a)] For every $\mu\ge0$, there exists some $y_1(\mu)\in Y_1$ such that
\begin{equation*}
\inf_{x\in X}g(x,y_1(\mu),\mu)=\sup_{y\in Y_1}\inf_{x\in X}g(x,y,\mu)=\inf_{x\in X}\sup_{y\in Y_1}g(x,y,\mu).
\end{equation*}
\item[(b)] $(x^*,\mu^*)$ is a saddle point of $h$, that is,
$$h(x,\mu^*)\ge h(x^*,\mu^*)\ge h(x^*,\mu),\quad \forall x\in X,\mu\ge0.$$
\end{description}
Let $y^*=\mu^*y_1(\mu^*)$. Then $(x^*,y^*)$ is a saddle point of $f$, that is,
$$f(x,y^*)\ge f(x^*,y^*)\ge f(x^*,y),\quad \forall x\in X, y\in Y.$$
\end{lemma}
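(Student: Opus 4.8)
The plan is to verify the two defining inequalities of a saddle point of $f$ separately, using as a bridge the single identity $f(x^*,y^*)=h(x^*,\mu^*)$. Once this identity is in hand, both inequalities will follow by elementary comparisons, so the whole argument reduces to a sandwich computation that pins down the value $f(x^*,y^*)$.

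To obtain the bridging identity, I would first note that, directly from the definition $y^*=\mu^*y_1(\mu^*)$,
\[
f(x,y^*)=f(x,\mu^*y_1(\mu^*))=g(x,y_1(\mu^*),\mu^*),\quad x\in X.
\]
Taking the infimum over $x$ and invoking condition (a) at $\mu=\mu^*$ (which makes $y_1(\mu^*)$ the inner maximizer and closes the inner duality gap) gives $\inf_{x\in X}f(x,y^*)=\inf_{x\in X}h(x,\mu^*)$. Then the left half of condition (b), namely $h(x,\mu^*)\ge h(x^*,\mu^*)$ for all $x$, identifies $\inf_{x\in X}h(x,\mu^*)=h(x^*,\mu^*)$, so that $\inf_{x\in X}f(x,y^*)=h(x^*,\mu^*)$. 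On the other hand, by definition of the supremum, $f(x^*,y^*)=g(x^*,y_1(\mu^*),\mu^*)\le\sup_{y\in Y_1}g(x^*,y,\mu^*)=h(x^*,\mu^*)$, while trivially $\inf_{x\in X}f(x,y^*)\le f(x^*,y^*)$. Chaining these yields $h(x^*,\mu^*)=\inf_{x\in X}f(x,y^*)\le f(x^*,y^*)\le h(x^*,\mu^*)$, forcing $f(x^*,y^*)=h(x^*,\mu^*)=\inf_{x\in X}f(x,y^*)$.

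With this identity the two saddle inequalities are immediate. The first, $f(x,y^*)\ge f(x^*,y^*)$ for all $x\in X$, is just the reading that $f(x^*,y^*)=\inf_{x\in X}f(x,y^*)$. For the second, I would write an arbitrary $y\in Y$ as $y=\mu\tilde y$ with $\mu\ge0$ and $\tilde y\in Y_1$, which is possible since $Y$ is the cone generated by $Y_1$. Then $f(x^*,y)=g(x^*,\tilde y,\mu)\le\sup_{y\in Y_1}g(x^*,y,\mu)=h(x^*,\mu)$, and the right half of condition (b), $h(x^*,\mu^*)\ge h(x^*,\mu)$, together with $f(x^*,y^*)=h(x^*,\mu^*)$, gives $f(x^*,y)\le f(x^*,y^*)$, as required.

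I expect no serious analytic obstacle, since the computation is a routine sandwich; the point requiring care is conceptual, namely the role of condition (a). It is exactly what permits the interchange of the inner supremum and infimum at the fixed multiplier $\mu^*$, letting me replace $\inf_{x}f(x,y^*)$ (the value along the single optimal direction $y_1(\mu^*)$) by $\inf_{x}h(x,\mu^*)$ (the value of the reduced problem in $\mu$ alone). Without it, the saddle structure of $h$ in $(x,\mu)$ would not transfer to a saddle structure of $f$ over the whole cone $Y$. The only bookkeeping to watch is the correct decomposition $y=\mu\tilde y$ of cone elements and keeping straight the directions of the two halves of condition (b).
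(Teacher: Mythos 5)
Your proof is correct and follows essentially the same route as the paper's: both pivot on the value $h(x^*,\mu^*)$, using condition (a) at $\mu=\mu^*$ together with the left half of (b) to identify $\inf_{x\in X}f(x,y^*)$ with $h(x^*,\mu^*)$, and using the cone decomposition $y=\mu\tilde y$ together with the right half of (b) to get $f(x^*,y)\le f(x^*,y^*)$. The only difference is economy: the paper additionally derives the full $\sup$-$\inf$ and $\inf$-$\sup$ identities for $f$ over $X\times Y$ (which invoke condition (a) at every $\mu\ge0$), whereas your sandwich argument dispenses with these and needs (a) only at the single point $\mu^*$.
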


\proof  We will frequently use conditions (a) and (b). Firstly, we have
\begin{align*}
&h(x^*,\mu^*)=\sup_{\mu\ge0}\inf_{x\in X}h(x,\mu)=\sup_{\mu\ge0}\inf_{x\in X}\sup_{y\in Y_1}g(x,y,\mu)\\
=&\sup_{\mu\ge0}\sup_{y\in Y_1}\inf_{x\in X}g(x,y,\mu)=\sup_{\mu\ge0}\sup_{y\in Y_1}\inf_{x\in X}f(x,\mu y)=\sup_{y\in Y}\inf_{x\in X}f(x,y).
\end{align*}
 Secondly,  we have
 \begin{align*}
 h(x^*,\mu^*)=\inf_{x\in X}h(x,\mu^*)=\inf_{x\in X}\sup_{y\in Y_1}g(x,y,\mu^*)=\inf_{x\in X}g(x,y_1(\mu^*),\mu^*)=\inf_{x\in X}f(x,y^*).
 \end{align*}
 Thirdly,  we have
\begin{align*}
h(x^*,\mu^*)=\inf_{x\in X}\sup_{\mu\ge0}h(x,\mu)=\inf_{x\in X}\sup_{\mu\ge0}\sup_{y\in Y_1}f(x,\mu y)=\inf_{x\in X}\sup_{y\in Y}f(x,y).
\end{align*}
Fourthly,  we have
\begin{align*}
h(x^*,\mu^*)=\sup_{\mu\ge0}h(x^*,\mu)=\sup_{\mu\ge0}\sup_{y\in Y_1}f(x^*,\mu y)=\sup_{y\in Y}f(x^*,y).
\end{align*}
From the above discussion, we have
$$\inf_{x\in X}f(x,y^*)=\sup_{y\in Y}\inf_{x\in X}f(x,y)=\inf_{x\in X}\sup_{y\in Y}f(x,y)=\sup_{y\in Y}f(x^*,y).$$
Therefore, $(x^*,y^*)$ is a saddle point of $f$. \qed

\subsection{Concave Envelope}

For a proof of the following lemma, see, e.g., \citet[Appendix A.3]{WangXia2021}. 
\begin{lemma}\label{lam:cv:envelope}
Assume that $H: [0,1]\to\Rbb$ is upper semi-continuous. Let $\invbreve H$ be the concave envelope of $H$, i.e.,
$$\invbreve H(s)\trieq\inf\{G(s)\,|\, G \mbox{ is concave
and }G\ge H\mbox{ on }[0,1]\},\quad s\in[0,1].$$ 
We have the following assertions:
\begin{description}
\item[(a)] $\invbreve H(0)=H(0)$ and $\invbreve H(1)=H(1)$;
\item[(b)] $\invbreve H$ is continuous on $[0,1]$;
\item[(c)] $\invbreve H$ is affine on $[\invbreve H>H]$.
\end{description}
\end{lemma}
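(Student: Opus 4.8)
The plan is to realize $\invbreve H$ concretely as the value function of a convex-combination problem and then read off the three assertions. Since $H$ is upper semi-continuous on the compact interval $[0,1]$, it attains a finite maximum $M$, so the constant $M$ is a concave majorant; hence the defining family is nonempty and $H\le\invbreve H\le M$, so $\invbreve H$ is real-valued. First I would establish the representation
$$\invbreve H(s)=\sup\Big\{\textstyle\sum_i\lambda_i H(t_i):\lambda_i\ge0,\ \sum_i\lambda_i=1,\ \sum_i\lambda_i t_i=s,\ t_i\in[0,1]\Big\},$$
where by Carath\'eodory's theorem applied to the hypograph of $H$ in $\Rbb^2$ the sum may be restricted to finitely many atoms. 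Denoting the right-hand side by $\Phi$, one checks $\Phi\ge H$ (trivial one-atom combination) and $\Phi$ concave (splice near-optimal representations at two base points and invoke Carath\'eodory to keep the atom count bounded), while Jensen's inequality shows $G\ge\Phi$ for every concave majorant $G$ of $H$. Thus $\Phi=\invbreve H$, and in particular $\invbreve H$ is concave.

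Assertion (a) is then immediate: at $s=0$ the barycenter constraint $\sum_i\lambda_i t_i=0$ with $t_i\ge0$ and $\lambda_i\ge0$ forces $\lambda_i t_i=0$ for every $i$, so all atoms of positive weight sit at $0$ and the only admissible value is $H(0)$; hence $\invbreve H(0)=H(0)$, and $\invbreve H(1)=H(1)$ by symmetry.

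For assertion (b), concavity and finiteness already give continuity on the open interval $(0,1)$ for free, as well as the one-sided bound $\liminf_{t\downarrow0}\invbreve H(t)\ge\invbreve H(0)$. The remaining and most delicate point---the place where upper semi-continuity is genuinely used---is the matching upper bound at the endpoints, since a concave majorant of a non-u.s.c.\ function can jump upward off an endpoint. The plan is to fix $\varepsilon>0$, use upper semi-continuity at $0$ to find $\delta$ with $H\le H(0)+\varepsilon$ on $[0,\delta]$, and then build an explicit concave majorant $G_\varepsilon$: take the affine function through $(0,H(0)+\varepsilon)$ and $(\delta,M)$ on $[0,\delta]$, capped by the constant $M$ on $[\delta,1]$. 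This $G_\varepsilon$ is concave and dominates $H$, so $\invbreve H\le G_\varepsilon$ and $\limsup_{t\downarrow0}\invbreve H(t)\le G_\varepsilon(0)=H(0)+\varepsilon$; letting $\varepsilon\downarrow0$ and combining with (a) yields continuity at $0$, and $s=1$ is symmetric. I expect this endpoint estimate to be the main obstacle.

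Finally, for assertion (c), continuity of $\invbreve H$ together with upper semi-continuity of $H$ makes $\invbreve H-H$ lower semi-continuous, so $[\invbreve H>H]$ is relatively open, and by (a) it omits the endpoints. I would argue by the minimality of $\invbreve H$: if $\invbreve H$ failed to be affine near some $s_0\in[\invbreve H>H]$, then choosing a small interval $[u,v]\ni s_0$ on which $\invbreve H-H$ stays above some $\eta>0$ and on which the gap between $\invbreve H$ and its chord is strictly positive but below $\eta$, replacing $\invbreve H$ by that chord on $[u,v]$ produces a concave majorant of $H$ that is strictly smaller at $s_0$---contradicting minimality. Hence $\invbreve H$ is locally affine at each point of $[\invbreve H>H]$, so its slope is constant on each connected component, i.e.\ $\invbreve H$ is affine there.
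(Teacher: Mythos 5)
Your proof is correct, but it cannot be compared line-by-line with an argument in the paper, because the paper contains no proof of this lemma at all: it is stated with a pointer to \citet[Appendix A.3]{WangXia2021}. Your argument is therefore, by necessity, an independent and self-contained route. Its backbone is the mixture representation $\invbreve H(s)=\sup\{\sum_i\lambda_i H(t_i):\lambda_i\ge 0,\ \sum_i\lambda_i=1,\ \sum_i\lambda_i t_i=s\}$, verified by checking that the right-hand side is concave, dominates $H$, and is dominated (via Jensen) by every concave majorant; Carath\'eodory is actually not needed once you allow arbitrary finite mixtures. This representation makes (a) a one-line consequence of the barycenter constraint at the endpoints, it localizes exactly where upper semicontinuity enters (the endpoint upper bound in (b), via the explicit majorant $G_\varepsilon$), and it reduces (c) to the classical chord-replacement argument, with (b) and upper semicontinuity of $H$ supplying the buffer $\invbreve H-H>\eta$ on a small interval so that the chord is still a majorant of $H$. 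What the citation buys the paper is brevity; what your proof buys is independence from the reference, with each of the three assertions traced to an elementary mechanism.

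Two small patches are worth recording. First, in (b): your $G_\varepsilon$ (affine from $(0,H(0)+\varepsilon)$ to $(\delta,M)$, then constant $M$) is concave only when $H(0)+\varepsilon\le M$; if $H(0)+\varepsilon>M$ the slope increases from negative to zero and concavity fails. That case is vacuous, however, since $\invbreve H\le M<H(0)+\varepsilon$ everywhere already gives $\limsup_{t\downarrow 0}\invbreve H(t)\le H(0)+\varepsilon$; alternatively, run the same construction with $M$ replaced by $\max\{M,H(0)+\varepsilon\}$. Second, in (c): the chord replacement produces a concave majorant $G\le\invbreve H$ with $G<\invbreve H$ somewhere in $(u,v)$ whenever $\invbreve H$ is not affine on $[u,v]$, and the pointwise-infimum definition of $\invbreve H$ is already contradicted at any such point --- you do not need the strict inequality to occur at $s_0$ itself. (If you do want it at $s_0$, invoke the standard fact that a concave function equal to its chord at an interior point of an interval is affine on that whole interval.) Neither point affects the validity of the overall argument.
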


\begin{lemma}\label{lma:cv:pfinite}
Assume that $H: [0,1]\to\Rbb$ is upper semi-continuous. Let $\invbreve H$ be the concave envelope of $H$. Then we have the following two assertions.
\begin{description}
\item[(a)] If $\liminf_{t\to1}{H(t)-H(1)\over t-1}>-\infty$, then $\invbreve H^\prime(1-)>-\infty$. 
\item[(b)] If $\limsup_{t\to0}{H(t)-H(0)\over t}<\infty$,
then $\invbreve H^\prime(0)<\infty$.
\end{description}
\end{lemma}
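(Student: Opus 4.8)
The plan is to prove both parts by exhibiting a finite-slope affine majorant of $H$ that touches $H$ at the relevant endpoint, and then comparing one-sided difference quotients through the minimality of the concave envelope. I would present (a) in detail; part (b) is the mirror image (either reflect via $t\mapsto1-t$ or redo the computation at the left endpoint), so I would only indicate the changes.

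First I would record two elementary facts. Since $H$ is upper semi-continuous on the compact interval $[0,1]$, it attains its maximum, so $H\le K<\infty$ for some $K$. And since $\liminf_{t\to1}\frac{H(t)-H(1)}{t-1}>-\infty$, there exist $M\in\Rbb$ and $\delta\in(0,1)$ with $\frac{H(t)-H(1)}{t-1}\ge M$ for all $t\in(1-\delta,1)$.

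The crux is to show that
$$m^*\trieq\inf_{t\in[0,1)}\frac{H(t)-H(1)}{t-1}>-\infty.$$
On $(1-\delta,1)$ the displayed local bound already gives $\ge M$. On $[0,1-\delta]$ the denominator satisfies $t-1\in[-1,-\delta]$ while the numerator satisfies $H(t)-H(1)\le K-H(1)$; a short case split on the sign of the numerator, together with the fact that dividing by the negative quantity $t-1$ reverses inequalities, yields $\frac{H(t)-H(1)}{t-1}\ge-\frac{(K-H(1))^+}{\delta}$. Combining the two regions gives $m^*\ge\min\{M,-\frac{(K-H(1))^+}{\delta}\}>-\infty$. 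I expect this step to be the main obstacle, not because it is deep but because one must combine the purely local hypothesis at the endpoint with the global upper boundedness of $H$ coming from upper semi-continuity, and keep careful track of the sign reversal induced by the negative denominator.

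With $m^*$ finite, the affine (hence concave) function $\phi(t)\trieq H(1)+m^*(t-1)$ satisfies $\phi\ge H$ on $[0,1)$ by the very definition of $m^*$, and $\phi(1)=H(1)$, so $\phi\ge H$ on all of $[0,1]$. By minimality of the concave envelope, $\invbreve H\le\phi$ on $[0,1]$, and by Lemma \ref{lam:cv:envelope}(a) we have $\invbreve H(1)=H(1)=\phi(1)$. Hence, for each $t\in[0,1)$, dividing $\invbreve H(t)-\invbreve H(1)\le\phi(t)-\phi(1)$ by the negative number $t-1$ reverses it into $\frac{\invbreve H(t)-\invbreve H(1)}{t-1}\ge\frac{\phi(t)-\phi(1)}{t-1}=m^*$; letting $t\uparrow1$ and using that the continuous concave function $\invbreve H$ has left-hand difference quotients converging to $\invbreve H^\prime(1-)$, I obtain $\invbreve H^\prime(1-)\ge m^*>-\infty$. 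For part (b) the identical scheme produces the finite slope $m_*\trieq\sup_{t\in(0,1]}\frac{H(t)-H(0)}{t}<\infty$ (finite by the $\limsup$ hypothesis near $0$ together with $H\le K$ on $[\delta,1]$), an affine majorant $\phi(t)=H(0)+m_*t\ge H$ touching $H$ at $0$, and the comparison $\frac{\invbreve H(t)-\invbreve H(0)}{t}\le m_*$ for $t\in(0,1]$, whence $\invbreve H^\prime(0)\le m_*<\infty$.
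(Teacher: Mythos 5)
Your proof is correct, but it takes a genuinely different route from the paper's. The paper argues by a dichotomy on the contact set of the envelope near the endpoint: either $(\varepsilon,1)\subseteq[\invbreve H>H]$ for some $\varepsilon\in(0,1)$, in which case Lemma \ref{lam:cv:envelope}(c) makes $\invbreve H$ affine there, so $\invbreve H^\prime(1-)$ is trivially finite; or there is a sequence of contact points $t_n\uparrow 1$ with $\invbreve H(t_n)=H(t_n)$, and then, using $\invbreve H(1)=H(1)$ from Lemma \ref{lam:cv:envelope}(a), the difference quotients of $\invbreve H$ along $t_n$ coincide with those of $H$, giving $\invbreve H^\prime(1-)\ge\liminf_{t\to1}\frac{H(t)-H(1)}{t-1}>-\infty$ directly. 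You instead build an explicit finite-slope affine majorant: you show $m^*\trieq\inf_{t\in[0,1)}\frac{H(t)-H(1)}{t-1}$ is finite by combining the local liminf hypothesis with the global upper bound $H\le K$ supplied by upper semi-continuity on a compact interval, pin $\phi(t)=H(1)+m^*(t-1)$ to $H$ at $t=1$, and squeeze $\invbreve H$ between $H$ and $\phi$ via minimality of the envelope to bound its one-sided derivative. Your argument uses only the definition of the envelope as least concave majorant together with Lemma \ref{lam:cv:envelope}(a) (never part (c)), and it yields a quantitative bound $\invbreve H^\prime(1-)\ge\min\bigl\{M,-(K-H(1))^+/\delta\bigr\}$; the paper's argument is shorter given the envelope lemma, and it is purely local near the endpoint, never needing global boundedness of $H$. (A cosmetic remark: the paper's proof announces that it treats assertion (b) but actually writes out assertion (a), the one at the right endpoint; you likewise detail (a) and correctly describe (b) as the mirror image.)
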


\proof We only prove assertion (b), since the proof of assertion (a) is similar. 
Assume that there exists some $\varepsilon\in(0,1)$ such that 
\begin{equation}\label{eq:hhe}
(\varepsilon,1)\subseteq [\invbreve H>H].
\end{equation}
In this case,
$\invbreve H^\prime$ is constant and finite on $(\varepsilon,1)$. Therefore, $\invbreve H^\prime(1-)>-\infty$.  Assume that \eqref{eq:hhe} holds for no $\varepsilon\in(0,1)$. In this case, there exits a sequence $\{t_n\}_{n\ge1}\subset(0,1)$ such that $\lim_{n\to\infty}t_n=1$ and $\invbreve H(t_n)=H(t_n)$ for all $n\ge1$. Then by $\invbreve H(1)=H(1)$, we have that 
\begin{align*}
\invbreve H^\prime(1-)=\lim_{n\to\infty}{\invbreve H(t_n)-\invbreve H(1)\over t_n-1}=\lim_{n\to\infty}{H(t_n)-H(1)\over t_n-1}\ge\liminf_{t\to1}{H(t)-H(1)\over t-1}>-\infty.
\end{align*}
 \qed

\section{Some Proofs}

\subsection{Proof of Theorem \ref{thm:emp}}\label{app:proof:thm:emp}

 For any $\varepsilon\in\left(0,{1\over2}\right)$, let 
$$a_\varepsilon=\int_0^\varepsilon Q_\rho(s)ds=\int_{1-\varepsilon}^1Q_\rho(1-s)ds,\quad \varepsilon\in\left(0,{1\over2}\right).$$
For any $\varepsilon\in\left(0,{1\over2}\right)$ and $n\ge1$, let $Q^\varepsilon_n$ be defined by
$$Q^\varepsilon_n(s)=Q_0(s)+{n\over a_\varepsilon}\id_{s>1-\varepsilon}-{n\over a_\varepsilon}\id_{s<\varepsilon},\quad s\in(0,1),$$
Obviously, $Q^\varepsilon_n\in\Qsc$ and $Q^\varepsilon_n\icx Q_0$ for all $\varepsilon\in\left(0,{1\over2}\right)$ and $n\ge1$.
Moreover, since $Q_\rho(1-)>Q_\rho(0)$, there exists some $\varepsilon_0\in(0,{1\over2})$ and $\alpha>0$ such that
$${1\over a_{\varepsilon_0}}\int_0^{\varepsilon_0} Q_\rho(1-s)ds
={\int_0^{\varepsilon_0} Q_\rho(1-s)ds\over \int_0^{\varepsilon_0} Q_\rho(s)ds}\ge 1+\alpha.$$
Therefore, for any $n\ge1$,
\begin{align*}
&\int_0^1Q^{\varepsilon_0}_n(s)Q_\rho(1-s)ds\\
=&\int_0^1Q_0(s)Q_\rho(1-s)ds+{n\over a_{\varepsilon_0}}\int_{1-\varepsilon_0}^1Q_\rho(1-s)ds- {n\over a_{\varepsilon_0}}\int_0^{\varepsilon_0} Q_\rho(1-s)ds\\
\le&\int_0^1Q_0(s)Q_\rho(1-s)ds+n-n (1+\alpha)\\
=&\int_0^1Q_0(s)Q_\rho(1-s)ds-\alpha n.
\end{align*}
As a consequence, $\inf_{Q\in\Qsc}\int_0^1Q(s)Q_\rho(1-s)ds=-\infty$.
\qed

 \subsection{Proof of Theorem \ref{thm:unique}}\label{app:thm:unique}
 
It is well known that $\var[X]=\min_{\beta\in\Rbb}\ep[(X-\beta)^2]$ for every $X\in L^2$. Then 
$\var[Q]=\min_{\beta\in\Rbb}\int_0^1(Q(s)-\beta)^2ds$ for every $Q\in \Qsc$. Therefore, problem \eqref{opt:vm:Q} can be rewritten as
\begin{equation}\label{opt:vm:Q:beta0}
\begin{split}
&\mini_{Q\in\Qsc}\quad \min_{\beta\in\Rbb}\int_0^1(Q(s)-\beta)^2ds\\
& \text{subject to }\quad  \int_0^1Q(s)Q_\rho(1-s)ds\le x,\; Q\icx Q_0.
\end{split}
\end{equation}

We now consider,
for any fixed $\beta\in\Rbb$, the following problem
\begin{equation}\label{opt:vm:Q:beta1}
\begin{split}
&\mini_{Q\in\Qsc}\quad \int_0^1(Q(s)-\beta)^2ds\\
& \text{subject to }\quad  \int_0^1Q(s)Q_\rho(1-s)ds\le x,\; Q\icx Q_0.
\end{split}
\end{equation}
Let $v_1(\beta)$ denote the optimal value of problem \eqref{opt:vm:Q:beta1}. Obviously, $v^\circ(x)=\inf_{\beta\in\Rbb}v_1(\beta)$.

\begin{lemma}\label{lma:beta} Under Assumption \ref{ass:X0x}, we have the following assertions.
\begin{description}
\item[(a)] For every $\beta\in\Rbb$, there exists a unique optimal solution to problem \eqref{opt:vm:Q:beta1}. 
\item[(b)] $v_1$ is continuous and convex on $\Rbb$ with $\lim_{|\beta|\to\infty}v_1(\beta)=\infty$. 
\item[(c)] There exists some $\beta^*\in\Rbb$ such that $v_1(\beta^*)=\inf_{\beta\in\Rbb}v_1(\beta)$. 
\end{description}
\end{lemma}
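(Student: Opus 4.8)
The plan is to read problem \eqref{opt:vm:Q:beta1} as a metric projection in the Hilbert space $L^2([0,1))$. Writing $\|g\|\trieq\left(\int_0^1 g(s)^2\,ds\right)^{1/2}$ and regarding the constant $\beta$ as an element of $L^2([0,1))$, the objective equals $\|Q-\beta\|^2$, so the problem asks for the point of the feasible set $C\trieq\Qicx(x,Q_0)$ closest to $\beta$. Recall that $C$ is convex and closed in $L^2([0,1))$, and it is nonempty: by Theorem \ref{thm:emp} there is some $Q\icx Q_0$ with $\int_0^1 Q(s)Q_\rho(1-s)\,ds<x$, which lies in $C$. For assertion (a), since $Q\mapsto\|Q-\beta\|^2$ is strictly convex, continuous and coercive on $L^2([0,1))$ while $C$ is nonempty, closed and convex, the Hilbert-space projection theorem yields a unique minimizer (the metric projection of $\beta$ onto $C$). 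This step is entirely standard.

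For assertion (b) I would obtain convexity of $v_1$ by partial minimization: the map $(Q,\beta)\mapsto\int_0^1(Q(s)-\beta)^2\,ds$ is jointly convex and the constraint set $C$ does not depend on $\beta$, so $v_1(\beta)=\inf_{Q\in C}\|Q-\beta\|^2$ is convex in $\beta$. Because $C\neq\emptyset$ forces $0\le v_1(\beta)<\infty$ for every $\beta$, the function $v_1$ is a finite convex function on $\Rbb$ and hence automatically continuous.

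The coercivity $\lim_{|\beta|\to\infty}v_1(\beta)=\infty$ is the one genuinely substantive point, and I would prove it by pairing $Q-\beta$ against a linear functional that is controlled on $C$, treating the two signs of $\beta$ separately. For $\beta\to+\infty$ I use the budget constraint: for any $Q\in C$, $\int_0^1(Q(s)-\beta)Q_\rho(1-s)\,ds=\int_0^1 Q(s)Q_\rho(1-s)\,ds-\beta\,\ep[\rho]\le x-\beta\,\ep[\rho]$, so Cauchy--Schwarz gives $\|Q-\beta\|\ge(\beta\,\ep[\rho]-x)/\sqrt{\ep[\rho^2]}$, a bound uniform in $Q\in C$ that tends to $\infty$. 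For $\beta\to-\infty$ I use the mean constraint implied by $Q\icx Q_0$ at $t=0$, namely $\ep[Q]\ge\ep[X_0]$; then $\|Q-\beta\|\ge\int_0^1(Q(s)-\beta)\,ds=\ep[Q]-\beta\ge\ep[X_0]-\beta\to\infty$, again uniformly in $Q\in C$. Squaring and taking the infimum over $Q\in C$ delivers $v_1(\beta)\to\infty$ in both directions. Assertion (c) is then immediate: a continuous coercive function on $\Rbb$ has compact sublevel sets and attains its infimum at some $\beta^*$.

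The main obstacle is precisely this coercivity, since the objective is not coercive in $\beta$ uniformly over all of $\Qsc$ on its own; the feasible-set constraints must be invoked, and the key is to select the right linear functional for each sign of $\beta$ — the budget inequality handling the $\beta\to+\infty$ direction and the ICX mean inequality (the $t=0$ case of the order) handling the $\beta\to-\infty$ direction. Once coercivity is in hand, parts (a) and (c) reduce to standard Hilbert-space projection and Weierstrass-type arguments, and convexity/continuity in (b) are routine.
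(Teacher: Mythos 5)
Your proof is correct, and on parts (a), (c), and the convexity/continuity half of (b) it follows essentially the same route as the paper: existence and uniqueness via projection onto the nonempty, closed, convex set $\Qicx(x,Q_0)$ in the Hilbert space $L^2([0,1))$, convexity of the partial infimum (the paper verifies midpoint convexity directly using the two optimizers), and a Weierstrass argument once coercivity is known. The genuinely different step is the coercivity of $v_1$, and there your argument is in fact more careful than the paper's. The paper lower-bounds $v_1$ by the value $v_0(\beta)$ of the relaxed problem \eqref{opt:v0} obtained by deleting the ICX constraint, and invokes the classical Markowitz formula $v_0(\beta)=(\beta\ep[\rho]-x)^2/\ep[\rho^2]$; but that formula holds only when $\beta\ep[\rho]\ge x$: for $\beta\ep[\rho]<x$ the constant $Q\equiv\beta$ is feasible, so $v_0(\beta)=0$ (equivalently, the multiplier $\lambda=2(\beta\ep[\rho]-x)/\ep[\rho^2]$ turns negative and the claimed optimizer is decreasing, hence not in $\Qsc$). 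Thus the relaxation yields blow-up only as $\beta\to+\infty$ and gives nothing as $\beta\to-\infty$, where the ICX constraint is indispensable. You keep exactly the needed piece of the feasible set in each direction: the budget inequality plus Cauchy--Schwarz for $\beta\to+\infty$ (which reproduces the paper's bound on the range where that bound is valid), and the $t=0$ case of \eqref{eq:icx:eqv}, namely $\ep[Q]\ge\ep[X_0]$ for every feasible $Q$, for $\beta\to-\infty$. Your two-sided argument is elementary, uniform in $Q$, and closes the gap in the published proof; the only thing the paper's relaxation buys in exchange is the explicit Markowitz value, which is not actually needed for the lemma.
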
 

\proof  
We have known that $\Qicx(x,Q_0)$ is a nonempty, convex, and closed subset of Hilbert space $L^2([0,1))$. Problem \eqref{opt:vm:Q:beta1} is nothing but to find an element $Q\in\Qicx(x,Q_0)$ which is closest to $\beta$. Such an element always exists and is unique; see, e.g., \citet[Section 3.12]{L1969}. Therefore, assertion (a) is proved.

Assertion (c) is an easy implication of assertion (b). We now prove assertion (b). 

We firstly prove the convexity. It suffices to show $v_1\left({1\over2}\beta_1+{1\over2}\beta_2\right)\le {1\over2}v_1(\beta_1)+{1\over2}v_1(\beta_2)$ for all $\beta_1,\beta_2\in\Rbb$.  We use $\|\cdot\|$ to denote the norm on Hilbert space $L^2([0,1))$. Let $\beta_1,\beta_2\in\Rbb$. By assertion (a), there exist some $Q_i\in\Qicx(x,Q_0)$ such that $v_1(\beta_i)=\|Q_i-\beta_i\|^2$ for $i\in\{1,2\}$. Obviously,  ${1\over2}(Q_1+Q_2)\in\Qicx(x,Q_0)$ and
\begin{align*}
v_1\left({1\over2}\beta_1+{1\over2}\beta_2\right)\le&\left\|{1\over2}(Q_1+Q_2)-{1\over2}(\beta_1+\beta_2)\right\|^2
=\left\|{1\over2}(Q_1-\beta_1)+{1\over2}(Q_2-\beta_2)\right\|^2\\
\le&{1\over2}\|Q_1-\beta_1\|^2+{1\over2}\|Q_2-\beta_2\|^2={1\over2}v_1(\beta_1)+{1\over2}v_1(\beta_2).
\end{align*}
Therefore, $v_1$ is convex on $\Rbb$. 

Secondly, because $v_1$ is convex and finite on $\Rbb$, $v_1$ is continuous on $\Rbb$.

Finally, we show $\lim_{|\beta|\to\infty}v_1(\beta)=\infty$. To this end, for any $\beta\in\Rbb$, let
\begin{equation}\label{opt:v0}
\begin{split}
v_0(\beta)=&\inf_{Q\in\Qsc} \int_0^1(Q(s)-\beta)^2ds\\
& \text{subject to }  \int_0^1Q(s)Q_\rho(1-s)ds\le x.
\end{split}
\end{equation}
It is a classical problem of mean-variance portfolio selection without constraint. The optimal solution to problem \eqref{opt:v0} is obviously given by
	\[
	Q(s)=\beta-\frac{\lambda Q_{\rho}((1-s)-)}{2},\quad s\in[0,1),
	\]
where $\lambda={2(\beta\ep[\rho]-x)\over\ep[\rho^2]}$. Then
$v_0(\beta)=\frac{(\beta\ep[\rho]-x)^{2}}{\ep[\rho^2]}$ and hence
$$\lim_{|\beta|\to\infty}v_1(\beta)\ge \lim_{|\beta|\to\infty}v_0(\beta)=\infty.$$
Therefore, $\lim_{|\beta|\to\infty}v_1(\beta)=\infty$.\qed

\paragraph{Proof of Theorem \ref{thm:unique}}
 Lemma \ref{lma:beta} guarantees the existence of a variance-minimal quantile function in $\Qicx(x,Q_0)$.  
 
 We now show $v^\circ(x)>0$. Otherwise, $v^\circ(x)=0$ and hence there exists some constant $c\in\Qicx(x,Q_0)$. Then $c\ge Q_0(1)$ and hence $x\ge \int_0^1 c Q_\rho(1-s)ds\ge Q_0(1)\ep[\rho]$, contradicting the assumption that $Q_0(1)\ep[\rho]>x$. Therefore, $v^\circ(x)>0$.

It is left to show the uniqueness. Let $Q_1,Q_2\in \Qicx(x,Q_0)$ be variance-minimal in $\Qicx(x,Q_0)$. We need to show $Q_1=Q_2$.

We first show $Q_1-\ep[Q_1]=Q_2-\ep[Q_2]$. Actually, $\var[Q_1]=\var[Q_2]=v^\circ(x)$. Let $\bar Q={1\over2}(Q_1+Q_2)$. Then $\bar Q\in\Qicx(x,Q_0)$. Moreover, 
\begin{align*}
v^\circ(x)\le&\var[\bar Q]=\ep[(\bar Q-\ep[\bar Q])^2]\\
=&\ep\left[\left({1\over 2}(Q_1-\ep[Q_1])+{1\over2}(Q_2-\ep[Q_2])\right)^2\right]\\
\le&\ep\left[{1\over 2}(Q_1-\ep[Q_1])^2+{1\over2}(Q_2-\ep[Q_2])^2\right]\\
=&{1\over2}\var[Q_1]+{1\over2}\var[Q_2]=v^\circ(x)
 \end{align*}
and hence
 $$\ep\left[\left({1\over 2}(Q_1-\ep[Q_1])+{1\over2}(Q_2-\ep[Q_2])\right)^2\right]
=\ep\left[{1\over 2}(Q_1-\ep[Q_1])^2+{1\over2}(Q_2-\ep[Q_2])^2\right].$$
By the strict convexity of function $x\mapsto x^2$, we have $Q_1-\ep[Q_1]=Q_2-\ep[Q_2]$. 

To show $Q_1=Q_2$, we needs to show $\ep[Q_1]=\ep[Q_2]$. To this end, it suffices to show 
$\int_0^1Q_1(s)Q_\rho(1-s)ds=\int_0^1Q_2(s)Q_\rho(1-s)ds=x$. Without loss of generality, suppose on the contrary that $\int_0^1Q_1(s)Q_\rho(1-s)ds<x$. For $\varepsilon\in(0,1)$, let $Q^\varepsilon$ be given by
$$Q^\varepsilon=(1-\varepsilon)Q_1+\varepsilon Q_0(1).$$
Then for all sufficiently small $\varepsilon>0$, $Q^\varepsilon\in\Qicx(x, Q_0)$. By $\var[Q_1]=v^\circ(x)>0$, we have
$\var[Q^\varepsilon]=(1-\varepsilon)^2\var[Q_1]<\var[Q_1]$. It is impossible since $Q_1$ is variance-minimal. Therefore,
$\int_0^1Q_1(s)Q_\rho(1-s)ds=\int_0^1Q_2(s)Q_\rho(1-s)ds=x$. \qed

\subsection{Proof of Proposition \ref{prop:saddle:L}}\label{app:proof:saddle:L}

Let
\begin{align*}
\Msc\trieq\left\{m:[0,1]\to[0,\infty)\mid  m \mbox{ is
 increasing and right-continuous}\right\}.
\end{align*}
We can identify $\Msc$ as the set of all finite measures on the measurable space $\left([0,1], \Bc_{[0,1]}\right)$.
For each $m\in\Msc$, the measure of $\{0\}$ is $m(0)$ and the measure of $\{1\}$ is $m(1)-m(1-)$.
Let
\begin{align*}
\Msc_1\trieq\left\{m\in\Msc\mid  m(1)=1 \right\}.
\end{align*}
We can identify $\Msc_1$ as the set of all probability measures on $\left([0,1], \Bc_{[0,1]}\right)$.
Consider the weak topology (the topology of weak convergence of measures) on $\Msc_1$, which is induced by all real-valued continuous functions on $[0,1]$.
By \citet[Theorem 15.11]{AB2006}, $\Msc_1$ is weakly compact.\footnote{In \citet[Chapter 15]{AB2006}, the weak* topology refers to the weak topology here. Moreover, weak compactness (resp. closedness) here refers to the compactness (resp. closedness) under the weak topology.} 

For any $Q\in\Qsc$, let
$$A_Q(0)\trieq Q(1-),\;  A_Q(s)\trieq {1\over s}\int_{1-s}^1Q(t)dt,\quad s\in(0,1].$$
Obviously, $A_Q$ is decreasing and lower bounded on $[0,1]$. 
For any $Q\in\Qsc$, 
\begin{align*}\label{ineq:sQm}
Q\icx Q_0\Longleftrightarrow&\int_{[0,1]}A_Q(s) dm(s)\ge \int_{[0,1]}A_{Q_0}(s) dm(s),\quad\forall m\in\Msc\\
\Longleftrightarrow&\int_{[0,1]}A_Q(s) dm(s)\ge \int_{[0,1]}A_{Q_0}(s) dm(s),\quad\forall m\in\Msc_1.
\end{align*}
Similarly to \citet[Lemma 4.69]{FS2016},  the identity
\begin{equation}\label{eq:psi}
\begin{cases}
    w^\prime(t)=\displaystyle\int_{(1-t,1]}s^{-1} dm(s),\quad t\in[0,1)  \\
     w(1)-w(1-)=m(0)
\end{cases}
\end{equation}
defines a bijection
$$\Jc: \Msc\to\Wicx,$$ where $w^\prime$ denotes the right derivative. Under identity \eqref{eq:psi}, an application of Fubini theorem implies that
\begin{align*}
\int_{[0,1]}Q(t)dw(t)=\int_{[0,1]}A_Q(t)dm (t),\quad\forall\,Q\in \Qsc.
\end{align*}
Let
\begin{align*}
L_1(Q,m;\beta,\lambda)\trieq &\int_0^1(Q(s)-\beta)^2ds+\lambda\int_0^1Q(s)Q_\rho(1-s)ds\\
&-\int_{[0,1]}A_Q(s) dm(s)+ \int_{[0,1]}A_{Q_0}(s) dm(s),\quad
Q\in\Qsc,\; m\in\Msc.
\end{align*}
Then, under identity \eqref{eq:psi},
$$L(Q,w;\beta,\lambda)=L_1(Q,m;\beta,\lambda).$$
As a consequence,  Proposition \ref{prop:saddle:L} can be reformulated as the following one.

\begin{proposition}\label{prop:saddle:L1}
Assume that $X_0\in L^\infty$. Let $Q^*\in\Qsc$.
Then
 $Q^*$ solves problem \eqref{opt:lambda} if and only if there exists some $m^*\in\Msc$ such that $(Q^*,m^*)$ is a saddle point of $L_1(\cdot\;, \cdot\;; \beta,\lambda)$ (with respect to minimizing in $Q$ and maximizing in $m$).
\end{proposition}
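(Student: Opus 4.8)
The plan is to read off this statement as an instance of the abstract saddle-point Lemma~\ref{lma:fgh}, taking $X=\Qsc$, the base $Y_1=\Msc_1$, its generated cone $Y=\Msc=\{\mu m\mid \mu\ge0,\ m\in\Msc_1\}$, and $f=L_1(\,\cdot\,,\cdot\,;\beta,\lambda)$. Writing $\Phi(Q)\trieq\int_0^1(Q(s)-\beta)^2ds+\lambda\int_0^1Q(s)Q_\rho(1-s)ds$ for the objective of \eqref{opt:lambda}, we have $L_1(Q,m;\beta,\lambda)=\Phi(Q)+\int_{[0,1]}(A_{Q_0}-A_Q)\,dm$. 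The ``if'' part is the routine implication that a Lagrangian saddle point yields a constrained optimizer, so the genuine content is the ``only if'' part, for which the lemma is tailored.

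For the ``if'' direction I would argue directly: if $(Q^*,m^*)$ is a saddle point of $L_1$, then the inequality $L_1(Q^*,m^*)\ge L_1(Q^*,m)$ for every $m\in\Msc$, together with the cone structure of $\Msc$ (scaling $m$ by arbitrary $\mu\ge0$), forces both $A_{Q^*}\ge A_{Q_0}$ pointwise, i.e.\ $Q^*\icx Q_0$, and the complementary slackness $\int(A_{Q_0}-A_{Q^*})\,dm^*=0$. For any feasible $Q$ one then has $L_1(Q,m^*)=\Phi(Q)+\int(A_{Q_0}-A_Q)\,dm^*\le\Phi(Q)$, since $A_{Q_0}-A_Q\le0$ and $m^*\ge0$, so the other saddle inequality gives $\Phi(Q)\ge L_1(Q,m^*)\ge L_1(Q^*,m^*)=\Phi(Q^*)$; hence $Q^*$ solves \eqref{opt:lambda}.

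For the ``only if'' direction I would first record that $Q\icx Q_0\Leftrightarrow A_Q\ge A_{Q_0}$ on $[0,1]$, so the constraint is $\Psi(Q)\trieq\sup_{s\in[0,1]}(A_{Q_0}(s)-A_Q(s))\le0$, and that $h(Q,\mu)=\sup_{m\in\Msc_1}L_1(Q,\mu m;\beta,\lambda)=\Phi(Q)+\mu\Psi(Q)$ is exactly the Lagrangian of \eqref{opt:lambda} for this single scalar constraint. Problem \eqref{opt:lambda} is convex ($\Phi$ is convex and $\Psi$ is convex, being a supremum of the affine-in-$Q$ maps $Q\mapsto A_{Q_0}(s)-A_Q(s)$), and Slater's condition holds, e.g.\ $Q_0+1\in\Qsc$ gives $A_{Q_0+1}=A_{Q_0}+1$, so $\Psi(Q_0+1)=-1<0$. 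Hence if $Q^*$ solves \eqref{opt:lambda}, standard convex Lagrangian duality supplies some $\mu^*\ge0$ with $(Q^*,\mu^*)$ a saddle point of $h$, which is precisely hypothesis~(b) of Lemma~\ref{lma:fgh}.

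It remains to verify hypothesis~(a): for each fixed $\mu\ge0$ the function $g(Q,m,\mu)=\Phi(Q)+\mu\int_{[0,1]}(A_{Q_0}-A_Q)\,dm$ must admit a maximizer $m_1(\mu)\in\Msc_1$ realizing $\inf_Q\sup_{m\in\Msc_1}g=\sup_{m\in\Msc_1}\inf_Q g=\inf_Q g(\,\cdot\,,m_1(\mu),\mu)$. Here $g(\cdot,m,\mu)$ is convex in $Q$ (with inner infimum over $\Qsc$ explicitly solvable by Lemma~\ref{lma:inner} when the associated weight lies in $\Wicx_{c,2}$, and equal to $-\infty$ otherwise by Lemmas~\ref{lma:w(1)} and \ref{lma:inner:infinite}), while $g(Q,\cdot,\mu)$ is affine in $m$; since $\Msc_1$ is convex and weakly compact (\citet[Theorem 15.11]{AB2006}), a minimax theorem of Sion type delivers the minimax equality, and the weak upper semicontinuity of the concave map $m\mapsto\inf_Q g(Q,m,\mu)$ on the compact $\Msc_1$ yields the maximizer $m_1(\mu)$. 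Lemma~\ref{lma:fgh} then produces $m^*=\mu^*m_1(\mu^*)$ making $(Q^*,m^*)$ a saddle point of $L_1$. I expect the main obstacle to be exactly this minimax step, and within it the weak upper semicontinuity of $m\mapsto\int_{[0,1]}(A_{Q_0}-A_Q)\,dm$ near the endpoint $s=0$, where $A_Q(0)=Q(1-)$ may equal $+\infty$; one must check that the integrand is bounded above (so the integral is well defined in $[-\infty,+\infty)$) and that maximizing measures do not concentrate mass there in the weak limit.
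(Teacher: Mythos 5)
Your proposal follows essentially the same route as the paper's own proof: the ``only if'' part is Lemma \ref{lma:fgh} applied with $X=\Qsc$, $Y_1=\Msc_1$, $Y=\Msc$, where hypothesis (b) is supplied by scalar Lagrangian duality under the Slater point $Q_0+\alpha$ (the paper's Lemma \ref{lma:J}) and hypothesis (a) by a minimax theorem on the weakly compact $\Msc_1$ (the paper's Lemma \ref{lma:K}), with $m^*=\mu^*m_1(\mu^*)$, while the ``if'' part is the routine direct verification. The one step you leave open---the weak upper semicontinuity of $m\mapsto\int_{[0,1]}\left(A_{Q_0}(s)-A_Q(s)\right)dm(s)$---is exactly what Lemma \ref{lma:K} settles by truncation: with $Q^n=\min\{Q,n\}$ each $A_{Q^n}$ is bounded and continuous, and monotone convergence gives $\int_{[0,1]}A_Q\,dm=\sup_{n\ge1}\int_{[0,1]}A_{Q^n}\,dm$, so $m\mapsto\int_{[0,1]}A_Q\,dm$ is a supremum of weakly continuous functionals and hence weakly lower semicontinuous, which disposes of your concern about mass concentrating near $s=0$.
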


Before the proof of Proposition \ref{prop:saddle:L1}, we introduce two lemmas. 

For every $\beta\in\Rbb$, $\lambda\ge0$, and $\mu\ge0$, let
\begin{align*}
K(Q,m,\mu;\beta,\lambda)\trieq &\int_0^1(Q(s)-\beta)^2ds+\lambda\int_0^1Q(s)Q_\rho(1-s)ds\\
&-\mu\int_{[0,1]}A_Q(s) dm(s)+ \mu\int_{[0,1]}A_{Q_0}(s) dm(s),\quad
Q\in\Qsc,\; m\in\Msc_1.
\end{align*}

\begin{lemma}\label{lma:K} 
Assume that $X_0\in L^\infty$.
For every $\beta\in\Rbb$, $\lambda\ge0$, and $\mu\ge0$, there exists some $m_1(\mu)\in\Msc_1$ such that
\begin{equation}\label{eq:infsup:K}
\begin{split}
\inf_{Q\in\Qsc}K(Q,m_1(\mu),\mu;\beta,\lambda)=&\sup_{m\in\Msc_1}\inf_{Q\in\Qsc}K(Q,m,\mu;\beta,\lambda)\\
=&\inf_{Q\in\Qsc}\sup_{m\in\Msc_1}K(Q,m,\mu;\beta,\lambda).
\end{split}
\end{equation}
\end{lemma}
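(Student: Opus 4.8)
This statement is a normalized minimax identity for the Lagrangian $K(\cdot,\cdot,\mu;\beta,\lambda)$ over the product $\Qsc\times\Msc_1$, so the plan is to read it as a convex--concave saddle problem and attack it with a minimax theorem together with a compactness argument. The first thing I would record is the variational structure: for fixed $m$, the map $Q\mapsto K(Q,m,\mu;\beta,\lambda)$ is convex, since $\int_0^1(Q(s)-\beta)^2ds$ is convex in $Q$ while both $\lambda\int_0^1 Q(s)Q_\rho(1-s)ds$ and $-\mu\int_{[0,1]}A_Q(s)dm(s)$ are linear in $Q$ (the latter because $A_Q$ is linear in $Q$); and for fixed $Q$ the map $m\mapsto K(Q,m,\mu;\beta,\lambda)$ is affine, hence concave. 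The feasible sets cooperate: $\Qsc$ is convex, and by \citet[Theorem 15.11]{AB2006} the set $\Msc_1$ of probability measures is convex and weakly compact, which is the compact side that drives the whole scheme. The case $\mu=0$ I would dispose of at once, since then $K$ does not depend on $m$ and the inner infimum together with both outer optimizations agree for every $m\in\Msc_1$; so from here on I assume $\mu>0$.

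The existence of the maximizer $m_1(\mu)$ is the cleaner half and I would extract it first. The regularity input is that $X_0\in L^\infty$ forces $Q_0$, hence $A_{Q_0}$, to be bounded and continuous on $[0,1]$, so $m\mapsto\int_{[0,1]}A_{Q_0}(s)dm(s)$ is weakly continuous; moreover $A_Q$ is decreasing, continuous in the extended sense, and bounded below by $A_Q(1)=\ep[Q]$, whence $m\mapsto\int_{[0,1]}A_Q(s)dm(s)$ is weakly lower semicontinuous. Consequently each $m\mapsto K(Q,m,\mu;\beta,\lambda)$ is weakly upper semicontinuous, and $m\mapsto\inf_{Q\in\Qsc}K(Q,m,\mu;\beta,\lambda)$, being an infimum of weakly upper semicontinuous functions, is itself weakly upper semicontinuous on $\Msc_1$. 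An upper semicontinuous function on the weakly compact set $\Msc_1$ attains its supremum, which produces an $m_1(\mu)\in\Msc_1$ with $\inf_Q K(Q,m_1(\mu),\mu;\beta,\lambda)=\sup_{m\in\Msc_1}\inf_Q K(Q,m,\mu;\beta,\lambda)$; this is exactly the first equality in the claim.

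The remaining equality $\sup_m\inf_Q K=\inf_Q\sup_m K$ is where I expect the genuine difficulty. The max--min inequality supplies one direction for free, so the content is the reverse, which I would obtain from Sion's minimax theorem once $K$ is lower semicontinuous and quasiconvex in $Q$, upper semicontinuous and quasiconcave in $m$, with $\Msc_1$ compact convex and $\Qsc$ convex. Everything here is already in hand except the lower semicontinuity in $Q$ of the term $-\mu\int_{[0,1]}A_Q(s)dm(s)$, and this is the main obstacle. Through the bijection $\Jc$ of \eqref{eq:psi} this term equals $-\mu\int_0^1 Q(s)w'(s)ds-\mu\,Q(1-)\,(w(1)-w(1-))$ with $w=\Jc(m)$, so a density $w'$ outside $L^2([0,1))$ or a jump $w(1)-w(1-)>0$ destroys the continuity of this linear functional on $L^2([0,1))$. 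The way around it is to discard precisely these degenerate measures: applying Lemmas \ref{lma:w(1)} and \ref{lma:inner:infinite} to $w=\Jc(\mu m)$, whenever $\mu m$ carries an atom at $0$ or a density not in $L^2([0,1))$ one has $\inf_{Q\in\Qsc}K(Q,m,\mu;\beta,\lambda)=-\infty$. Such $m$ cannot affect the supremum $\sup_m\inf_Q K$, which is finite because testing any $m$ with $\Jc(\mu m)\in\Wicx_{c,2}$ yields, via the explicit inner minimizer of Lemma \ref{lma:inner}, a finite value. Restricting the dual variable to the non-degenerate measures, on which $Q\mapsto\int_{[0,1]}A_Q(s)dm(s)=\int_0^1 Q(s)w'(s)ds$ is a bounded linear functional and hence $K(\cdot,m,\mu;\beta,\lambda)$ is continuous and convex, I would run Sion's theorem to get the minimax equality and then transfer it back to $\Msc_1$ using that the discarded measures contribute $-\infty$ to the maximin while $\sup_{m\in\Msc_1}K(Q,m,\mu;\beta,\lambda)\ge K(Q,m_1(\mu),\mu;\beta,\lambda)$ controls the minimax from below; combining these closes the proof.
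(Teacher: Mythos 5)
Your first half is sound: the weak upper semicontinuity of $m\mapsto K(Q,m,\mu;\beta,\lambda)$ (via lower semicontinuity and boundedness below of $A_Q$, continuity of $A_{Q_0}$), together with the weak compactness of $\Msc_1$, gives attainment of $\sup_{m\in\Msc_1}\inf_{Q\in\Qsc}K$; this matches the paper's treatment of the $m$-side, which reaches the same conclusion by monotone truncation of $Q$. The gap is in the second half, the minimax identity itself. Sion's theorem needs a compact side \emph{and} lower semicontinuity of $K$ in $Q$ for each fixed dual variable. On the full set $\Msc_1$ the latter fails, as you note; but your repair---restricting to the nondegenerate measures, i.e.\ the set $\Msc_1'$ of those $m$ with $\Jc(\mu m)\in\Wicx_{c,2}$---destroys the former: $\Msc_1'$ is not weakly closed (for instance, the uniform measures on $[0,1/n]$ lie in $\Msc_1'$ but converge weakly to the point mass at $0$, which is degenerate), hence not weakly compact, and $\Qsc$ is not compact either, so Sion's theorem does not apply to the restricted pair. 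Moreover, the transfer back to $\Msc_1$ is not closed by the inequality you invoke: the chain $\inf_Q\sup_{m\in\Msc_1}K\ge\inf_Q K(\cdot,m_1(\mu);\mu;\beta,\lambda)=\sup_{m\in\Msc_1}\inf_Q K$ is just weak duality, which holds trivially and in both of your reductions points the same way. What your scheme actually needs is the reverse comparison $\inf_Q\sup_{m\in\Msc_1}K\le\inf_Q\sup_{m\in\Msc_1'}K$, i.e.\ that deleting the degenerate measures does not lower the inner supremum; this amounts to approximating any $m\in\Msc_1$ by elements of $\Msc_1'$ without decreasing $\int_{[0,1]}(A_{Q_0}(s)-A_Q(s))\,dm(s)$, an argument you never supply.

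The reason the paper does not meet this obstacle is that it does not use Sion's theorem. It invokes \citet[Proposition I.1.3]{MSZ2015}, a minimax theorem of Kneser--Fan type: because $K$ is genuinely convex (not merely quasiconvex) in $Q$ and affine in $m$, that theorem requires compactness and upper semicontinuity only on the $m$-side and imposes no topological hypothesis whatsoever on the $Q$-side. With the two facts you already established---weak upper semicontinuity of $K(Q,\cdot\,,\mu;\beta,\lambda)$ on $\Msc_1$ and weak compactness of $\Msc_1$---that theorem yields both equalities in \eqref{eq:infsup:K} at once, degenerate measures included. So the correct fix is not to excise the degenerate measures but to trade Sion's theorem for a minimax theorem adapted to convex/affine payoffs with one-sided compactness.
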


\proof It is obvious that $K$ is convex in $Q$ and affine in
$m$. 
By the boundedness of $X_0$, we know that $A_{Q_0}$ is bounded and continuous on $[0,1]$.  Therefore, the functional
$$\Msc_1\ni m\mapsto \int_{[0,1]}A_{Q_0}(s) dm(s)$$
is weakly continuous.\footnote{Here the weak continuity refers to the continuity under the weak topology. Similarly, the weak upper (resp. lower) semi-continuity below refers to the
upper (resp. lower) semi-continuity under the weak topology.}
For each $n\ge 1$ and $Q\in\Qsc$, 
let $Q^n$ be given by
$$Q^n(s)=\min\{Q(s),n\},\quad s\in[0,1].$$
We know that, for each $n\ge1$, $A_{Q^n}$ is bounded and continuous on $[0,1]$ and hence 
the functional
$$\Msc_1\ni m\mapsto \int_{[0,1]}A_{Q^n}(s) dm(s)$$ 
is weakly continuous.
The monotone convergence theorem implies that
$$\int_{[0,1]}A_Q(t)dm(t)=\sup_{n\ge 1}\int_{[0,1]}A_{Q^n}(t)dm(t).$$
Therefore, for each $Q\in\Qsc$, the functional
$$\Msc_1\ni m\mapsto \int_{[0,1]}A_Q(t)dm(t)$$ 
is weakly lower semi-continuous.  As a consequence of the above discussion, $K$ is upper semi-continuous in $m$. 
By the weak compactness of $\Msc_1$ and by the minimax theorem (e.g.,
\citet[Proposition I.1.3]{MSZ2015}), there exists some
$m_1(\mu)\in\Msc_1$ that satisfies \eqref{eq:infsup:K}.
\qed

Obviously, problem \eqref{opt:lambda} is equivalent to the following one:
\begin{equation}\label{opt:lambda2}
\begin{split}
&\mini_{Q\in\Qsc}\quad \int_0^1(Q(s)-\beta)^2ds+\lambda\int_0^1Q(s)Q_\rho(1-s)ds\\
&\text{ subject to }\quad \inf_{m\in\Msc_1}\left(\int_{[0,1]}A_Q(s) dm(s)- \int_{[0,1]}A_{Q_0}(s) dm(s)\right)\ge0.
\end{split}
\end{equation}
Let
\begin{align*}
J(Q,\mu;\beta,\lambda)&\trieq \int_0^1(Q(s)-\beta)^2ds+\lambda\int_0^1Q(s)Q_\rho(1-s)ds\\
&-\mu\inf_{m\in\Msc_1}\left(\int_{[0,1]}A_Q(s) dm(s)- \int_{[0,1]}A_{Q_0}(s) dm(s)\right),\quad Q\in\Qsc,\mu\ge0.
\end{align*}

\begin{lemma}\label{lma:J}
Assume that $X_0\in L^\infty$.
Let $Q^*\in\Qsc$. Then $Q^*$
solves problem \eqref{opt:lambda2} if and only if there exists some
$\mu^*\ge0$ such that 
\begin{eqnarray}\label{ineq:saddle:J}
J(Q^*,\mu;\beta,\lambda)\le J(Q^*,\mu^*;\beta,\lambda)\le J(Q,\mu^*;\beta,\lambda)
\quad \forall\, Q\in\Qsc, \mu\ge0.
\end{eqnarray}
\end{lemma}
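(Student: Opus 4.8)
The plan is to treat problem \eqref{opt:lambda2} as a convex program with a single scalar inequality constraint, and to read the saddle point of $J$ off the classical Lagrange multiplier theorem. Write $\phi(Q)\trieq\int_0^1(Q(s)-\beta)^2ds+\lambda\int_0^1Q(s)Q_\rho(1-s)ds$ for the objective and
\[
g(Q)\trieq\inf_{m\in\Msc_1}\left(\int_{[0,1]}A_Q(s)\,dm(s)-\int_{[0,1]}A_{Q_0}(s)\,dm(s)\right)
\]
for the constraint functional, so that $J(Q,\mu;\beta,\lambda)=\phi(Q)-\mu\,g(Q)$ and \eqref{opt:lambda2} reads $\mini_{Q\in\Qsc}\phi(Q)$ subject to $g(Q)\ge0$. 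The first thing I would record is the convex structure. The functional $\phi$ is convex on $\Qsc$ (a convex quadratic plus a linear term, both finite on $L^2$ by Cauchy--Schwarz), while $g$, being an infimum of functionals that are \emph{affine} in $Q$, is concave; moreover, since $A_Q$ is decreasing with $A_Q\ge A_Q(1)=\ep[Q]>-\infty$ and $A_{Q_0}$ is bounded (as $X_0\in L^\infty$), $g$ is finite-valued on $\Qsc$. Thus \eqref{opt:lambda2} is a genuine convex minimization over the convex feasible set $\{g\ge0\}=\{Q\icx Q_0\}$.

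For the ``if'' direction (saddle point $\Rightarrow$ optimality) I would argue purely from \eqref{ineq:saddle:J}. The left inequality $J(Q^*,\mu;\beta,\lambda)\le J(Q^*,\mu^*;\beta,\lambda)$ for all $\mu\ge0$ rewrites as $(\mu^*-\mu)\,g(Q^*)\le0$; letting $\mu\to\infty$ forces feasibility $g(Q^*)\ge0$, and taking $\mu=0$ together with $\mu^*\ge0$ forces complementary slackness $\mu^*g(Q^*)=0$. The right inequality then gives, for every feasible $Q$ (i.e. $g(Q)\ge0$), $\phi(Q^*)=\phi(Q^*)-\mu^*g(Q^*)\le\phi(Q)-\mu^*g(Q)\le\phi(Q)$, so $Q^*$ minimizes $\phi$ over the feasible set and hence solves \eqref{opt:lambda2}.

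For the ``only if'' direction I would verify Slater's condition and then invoke convex duality. The key point is that a strictly feasible point is immediate: taking $\bar Q\trieq Q_0+1\in\Qsc$ gives $A_{\bar Q}=A_{Q_0}+1$, whence $g(\bar Q)=\inf_{m\in\Msc_1}\int_{[0,1]}1\,dm(s)=1>0$. With $\phi$ convex, $-g$ convex and real-valued, and the Slater point $\bar Q$ in hand, the standard Lagrange multiplier theorem for convex programs (\citet[Sections 8.3--8.5]{L1969}, already used for Proposition \ref{prop:existence}) applies to the optimal $Q^*$ and produces a multiplier $\mu^*\ge0$ such that $Q^*$ minimizes $Q\mapsto\phi(Q)-\mu^*g(Q)=J(Q,\mu^*;\beta,\lambda)$ over $\Qsc$ and $\mu^*g(Q^*)=0$. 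Reversing the bookkeeping of the previous paragraph, these two facts are exactly the right and left halves of \eqref{ineq:saddle:J}, so $(Q^*,\mu^*)$ is the desired saddle point. The only genuinely delicate point is this ``only if'' direction, and within it the one thing to pin down is the constraint qualification that legitimizes the zero-duality-gap conclusion; fortunately the explicit Slater point $Q_0+1$ reduces this to a routine check, so I expect no real obstacle beyond careful invocation of the cited convex-duality result.
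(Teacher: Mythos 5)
Your proposal is correct and follows essentially the same route as the paper: the paper's proof also verifies Slater's condition via the strictly feasible point $Q_0+\alpha$ (you take $\alpha=1$) and then cites Luenberger's convex-programming results (\citet[Corollary 1 on p.~219 and Theorem 2 on p.~221]{L1969}) to obtain the saddle-point characterization. Your additional bookkeeping---spelling out the concavity of the constraint functional $g$ and doing the elementary ``if'' direction by hand---is fine but is subsumed by the cited theorems in the paper's one-line argument.
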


\proof Let $Q(s)=Q_0(s)+\alpha$ for all $s\in[0,1]$, where
$\alpha>0$. Obviously,  $Q\in\Qsc$ and
$$\inf_{m\in\Msc_1}\left(\int_{[0,1]}A_Q(s) dm(s)- \int_{[0,1]}A_{Q_0}(s) dm(s)\right)=\alpha>0.$$
Thus, the Slater condition for the constraint in \eqref{opt:lambda2} is satisfied. 
Then the conclusion of the lemma is an implication of \citet[Corollary 1 on p. 219 and Theorem 2 on p. 221]{L1969}. 
\qed

\paragraph{Proof of Proposition \ref{prop:saddle:L1}.} 
Obviously, $\Msc$ is the cone generated by $\Msc_1$. For all $Q\in\Qsc$, $m\in\Msc_1$, and $\mu\ge0$, we have
\begin{align*}
&K(Q,m,\mu;\beta,\lambda)=L_1(Q,\mu m;\beta,\lambda),\\
&J(Q,\mu;\beta,\lambda)=\sup_{m\in\Msc_1}K(Q,m,\mu;\beta,\lambda)=\sup_{m\in\Msc_1}L_1(Q,\mu m;\beta,\lambda).
\end{align*}
If $Q^*$ solves problem \eqref{opt:lambda}, or, equivalently,  problem \eqref{opt:lambda2}.
Let $\mu^*$ be given by Lemma \ref{lma:J} and $m_1(\mu^*)$ be given by Lemma \ref{lma:K}. Set $m^*=\mu^* m_1(\mu^*)$. Then by Lemma \ref{lma:fgh}, we know that $(Q^*,m^*)$ is a saddle point of $L_1(\cdot\;, \cdot\;; \beta,\lambda)$. Therefore, the ``only if'' part is proved. The ``if'' part is obvious.\qed

\subsection{Proof of Lemma \ref{lma:w(1)}}\label{app:proof:w(1)}
 Let $w\in\Wicx$. 
 The boundedness of $X_0$ implies that $\int_{[0,1]}Q_0(s)dw(s)\in(-\infty,\infty)$
 Assume that $w(1)-w(1-)>0$. For any $\varepsilon\in(0,1)$, let $Q^\varepsilon\in\Qsc$ be given by
$$Q^\varepsilon(s)={1\over\sqrt{\varepsilon}}\id_{s\ge 1-\varepsilon}, \quad s\in[0,1].$$ 
Then we have
\begin{align*}
&L(Q^\varepsilon,w;\beta,\lambda)-\int_{[0,1]}Q_0(s)dw(s)\\
=&1-2\beta\sqrt{\varepsilon}+\beta^2+{\lambda\over\sqrt{\varepsilon}}\int_{1-\varepsilon}^1Q_\rho(1-s)ds-{w(1)-w(1-\varepsilon)\over\sqrt{\varepsilon}}\\
\le&1-2\beta\sqrt{\varepsilon}+\beta^2+{\lambda\sqrt{\varepsilon}}Q_\rho(\varepsilon)-{w(1)-w(1-)\over\sqrt{\varepsilon}}
\overset{\varepsilon\downarrow0}\longrightarrow -\infty,
\end{align*} 
which leads to $\inf_{Q\in \Qsc} L(Q,w;\beta,\lambda)=-\infty$. \qed

\subsection{Proof of Lemma \ref{lma:inner:infinite}}\label{app:prrof:inner:infinite}
Assume that $w(1)=w(1-)$ and $w^\prime\notin L^2([0,1))$. Then $\int_0^1(w^\prime(s))^2ds=\infty$. For each $n\ge1$, let $Q_n$ be given by
$$Q_n(s)=\beta+{f_n(s)-\lambda Q_\rho((1-s)-)\over2},\quad s\in[0,1),$$
where
$$f_n(s)=\min\{w^\prime(s),n\},\quad s\in[0,1).$$ 
Obviously, $Q_n\in\Qsc$ for all $n\ge1$ and 
$$\int_0^1f_n^2(s)ds\to\int_0^1(w^\prime(s))^2ds=\infty\text{ as }n\to\infty.$$
The boundedness of $X_0$ implies that $a_1\trieq \int_0^1Q_0(s)w^\prime(s)ds\in(-\infty,\infty)$. Moreover,
\begin{align*}
a_2\trieq & \int_0^1Q_\rho(1-s)w^\prime(s)ds\\
\le & w^\prime\left({1\over2}\right)\int_0^{1\over2}Q_\rho(1-s)ds+Q_\rho\left({1\over2}\right)\left(w(1)-w\left({1\over2}\right)\right)<\infty.
\end{align*}
By $f_n\le w^\prime$, we have
$\int_0^1Q_\rho(1-s)f_n(s)ds\le a_2$. 
Then 
\begin{align*}
&L(Q_n,w;\beta,\lambda)\\
=&\int_0^1{(f_n(s)-\lambda Q_\rho(1-s))^2\over 4}ds+\lambda\int_0^1\left(\beta+{f_n(s)-\lambda Q_\rho(1-s)\over 2}\right)Q_\rho(1-s)ds\\
&-\int_0^1\left(\beta+{f_n(s)-\lambda Q_\rho(1-s)\over 2}\right)w^\prime(s)ds+a_1\\
\le&\int_0^1{(f_n(s)-\lambda Q_\rho(1-s))^2\over 4}ds+\lambda\beta\ep[\rho]+{\lambda a_2\over2}\\
&\quad -\beta w(1)-{1\over2}\int_0^1f_n(s)w^\prime(s)ds+{\lambda a_2\over2}+a_1\\
\le &{1\over4}\int_0^1f_n^2(s)ds+{\lambda^2\over4}\ep[\rho^2]+\lambda\beta\ep[\rho]
+\lambda a_2-\beta w(1)+ a_1-{1\over2}\int_0^1f_n(s)w^\prime(s)ds\\
\le&{1\over4}\int_0^1f_n^2(s)ds+{\lambda^2\over4}\ep[\rho^2]+\lambda\beta\ep[\rho]
+\lambda a_2-\beta w(1)+ a_1-{1\over2}\int_0^1f^2_n(s)ds\\
=&-{1\over4}\int_0^1f_n^2(s)ds+{\lambda^2\over4}\ep[\rho^2]+\lambda\beta\ep[\rho]
+\lambda a_2-\beta w(1)+a_1\\
\to& -\infty \quad\text{ as }n\to\infty.
\end{align*}
Therefore, $\inf_{Q\in \Qsc} L(Q,w;\beta,\lambda)=-\infty$. 
\qed

\subsection{Proof of Proposition \ref{prop:2point}}\label{app:prop;2point}

For every $\lambda>0$,
$$N^\prime_\lambda(s)=
\begin{cases}
\lambda Q_\rho((1-s)-)+2a &\text{if }s\in[0,p),\\
\lambda Q_\rho((1-s)-)+2b &\text{if }s\in[p,1).\\
\end{cases}
$$
Obviously, $N_\lambda$ is concave on each of the intervals $[0,p)$ and $[p,1]$.
Therefore, $\breve N^\prime_\lambda$ is constant on each of the two intervals.

We divide the discussion into the following two cases.

\begin{description}

\item[(i)] Assume that
$\lambda\ge {2b-2a\over A_1-A_2}$.

In this case, 
${N_\lambda(p)\over p}\ge {N_\lambda(1)-N_\lambda(p)\over 1-p}$.
Then
\begin{align*}
\breve N^\prime_\lambda(s)=N_\lambda(1)=\lambda\ep[\rho]+2\ep[X_0],\quad s\in[0,1).
\end{align*}
Therefore,
\begin{align*}
h(\beta,\lambda)=
\begin{cases}
2\beta-2\ep[X_0]\quad&\text{if }2\beta\le \lambda\ep[\rho]+2\ep[X_0],\\
\lambda\ep[\rho]\quad&\text{if }2\beta> \lambda\ep[\rho]+2\ep[X_0].
\end{cases}
\end{align*}
As a consequence, $\beta_\lambda=\ep[X_0]$,
$$Q^*_\lambda(s)=\ep[X_0]+{\lambda\over2}\ep[\rho]-{\lambda\over2}Q_\rho((1-s)-),\quad s\in[0,1)$$
and
\begin{align*}
\Xc(\lambda)=\ep[X_0]\ep[\rho]-{\lambda\over2}\var[\rho].
\end{align*}

\item[(ii)] Assume that $\lambda< {2b-2a\over A_1-A_2}$.

In this case, 
${N_\lambda(p)\over p}< {N_\lambda(1)-N_\lambda(p)\over 1-p}$.
Then
\begin{align*}
\breve N^\prime_\lambda(s)=&
\begin{cases}
{N_\lambda(p)\over p}&\text{if }  s\in[0,p)\\
{N_\lambda(1)-N_\lambda(p)\over 1-p}&\text{if }  s\in[p,1)
\end{cases}\\
=&
\begin{cases}
2a+\lambda A_1&\text{if } s\in[0,p),\\
2b+\lambda A_2&\text{if } s\in[p,1).
\end{cases}
\end{align*}
Therefore,
\begin{align*}
h(\beta,\lambda)=
\begin{cases}
2\beta-2\ep[X_0]& \text{if }2\beta\le 2a+\lambda A_1,\\
2\beta (1-p)+\lambda p A_1-2b(1-p)& \text{if }2a+\lambda A_1< 2\beta\le 2b+\lambda A_2,\\
\lambda\ep[\rho] & \text{if }2\beta>2b+\lambda A_2.
\end{cases}
\end{align*}
As a consequence, 
\begin{align*}
\beta_\lambda=
\begin{cases}
\ep[X_0]
& \text{if }\lambda\ge {(2b-2a)(1-p)\over A_1},\\
b-{\lambda p\over 2(1-p)}A_1& \text{if } \lambda< {(2b-2a)(1-p)\over A_1}.
\end{cases}
\end{align*}
Moreover,  if $$\lambda\ge {(2b-2a)(1-p)\over A_1}
,$$
then
$$Q^*_\lambda(s)=
\begin{cases}
a+{\lambda\over 2}A_1-{\lambda\over2}Q_\rho((1-s)-)&\text{if }s\in[0,p)\\
b+{\lambda\over 2}A_2-{\lambda\over2}Q_\rho((1-s)-)&\text{if }s\in[p,1)
\end{cases}
$$ 
and hence
\begin{align*}
\Xc(\lambda)=ap A_1+b(1-p)A_2+{\lambda p\over2}A_1^2+{\lambda(1-p)\over 2}A_2^2-{\lambda\over2}\ep[\rho^2];
\end{align*}
if $$\lambda< {(2b-2a)(1-p)\over A_1},$$
then
$$Q^*_\lambda(s)=
\begin{cases}
b-{\lambda p\over 2(1-p)}A_1-{\lambda\over2}Q_\rho((1-s)-)&\text{if }s\in[0,p)\\
b+{\lambda\over 2}A_2-{\lambda\over2}Q_\rho((1-s)-)&\text{if }s\in[p,1)
\end{cases}
$$
and hence
\begin{align*}
\Xc(\lambda)
=b\ep[\rho]+{\lambda\over2(1-p)}\left((1-p)^2A_2^2-p^2A_1^2\right)-{\lambda\over2}\ep[\rho^2].
\end{align*}

\end{description}

Summarizing the discussions in the above two cases,
we have
\begin{align*}
\Xc(\lambda)=
\begin{cases}
\ep[X_0]\ep[\rho]-{\lambda\over2}\var[\rho] 
\quad\text{if }\lambda\ge {2b-2a\over A_1-A_2},\\
ap A_1+b(1-p)A_2+{\lambda p\over2}A_1^2+{\lambda(1-p)\over 2}A_2^2-{\lambda\over2}\ep[\rho^2]\\
\hskip5cm \text{if } {(2b-2a)(1-p)\over A_1}\le \lambda<{2b-2a\over A_1-A_2},\\
b\ep[\rho]+{\lambda\over2(1-p)}\left((1-p)^2A_2^2-p^2A_1^2\right)-{\lambda\over2}\ep[\rho^2]
\quad\text{if }0<\lambda< {(2b-2a)(1-p)\over A_1}.
\end{cases}
\end{align*}
Then solving equation $\Xc(\lambda)=x$ yields  
\begin{align*}
\lambda^\circ=
\begin{cases}
{2(\ep[X_0]\ep[\rho]-x)\over\var[\rho]}\quad\text{if }x\le \ep[X_0]\ep[\rho]-{b-a\over A_1-A_2}\var[\rho],\\
{2(ap A_1+b(1-p)A_2-x)\over\ep[\rho^2]-pA_1^2-(1-p)A_2^2}\\
\quad\text{if }
\ep[X_0]\ep[\rho]-{b-a\over A_1-A_2}\var[\rho]< x\\
\hskip3cm \le b\ep[\rho]-{b-a\over A_1}\left((1-p)\ep[\rho^2]+p^2A_1^2-(1-p)^2A_2^2\right),
\\
{2(1-p)(b\ep[\rho]-x)\over(1-p)\ep[\rho^2]+p^2A_1^2-(1-p)^2A_2^2}\\
\quad\text{if }b\ep[\rho]-{b-a\over A_1}\left((1-p)\ep[\rho^2]+p^2A_1^2-(1-p)^2A_2^2\right)<x <b\ep\rho].
\end{cases}
\end{align*}
Finally, a substitution leads to Proposition \ref{prop:2point}. \qed

\end{appendices}


\bibsep=0pt
\bibliographystyle{abbrvnat}
\bibliography{bicxbib}

\end{document}